\newtheorem{prelem}{{\bf Theorem}}
\newtheorem{theorem}{Theorem}[section]
\newtheorem{definition}[theorem]{Definition}
\newtheorem{claim}[theorem]{Claim}
\newtheorem{lemma}[theorem]{Lemma}
\newtheorem{proposition}[theorem]{Proposition}
\newtheorem{observation}[theorem]{Observation}
\newtheorem{prob}[theorem]{Problem}
\newtheorem{remarka}[theorem]{Remark}
\newenvironment{remark}{\begin{remarka}\rm}{\hfill\rule{2mm}{2mm}\end{remarka}}
\newtheorem{examplea}[theorem]{Example}
\newtheorem*{thm}{Informal Statement}
\def\Aff {{\rm Aff}}
\def\Lin {{\rm Lin}}
\def\span {{\rm span}}
\def\Ex {{\mathbb E}}
\def\Pr {{\rm Pr}}
\def\bias{\textrm{bias}}
\def\rank{{\rm rank}}
\def\bias{{\rm bias}}
\def\exp{{\mathrm{e}_p}}
\def\e{{e}} 
\def\x{\mathbf{x}}
\def\X{\mathbf{X}}
\def\Y{\mathbf{Y}}
\def\P{\mathcal{P}}
\def\Q{\mathcal{Q}}
\def\F{\mathbb{F}}
\def\C{\mathbb{C}}
\def\eps{\varepsilon}
\def\N{\mathbb{N}}
\def\D{\mathbb{D}}
\def\E{\mathbb{E}}
\def\ff{\mathrm{f}}
\renewcommand{\gg}{\mathrm{g}}
\def\hh{\mathrm{h}}
\def\FF{\mathrm{F}}
\def\Dp{P(\F_p)}
\def\conj{\mathcal{C}}
\newcommand{\dotcup}{\ensuremath{\mathaccent\cdot\cup}}
\newcommand{\ip}[1]{\langle#1\rangle}
\newcommand{\ignore}[1]{}
\title{Correlation Testing for Affine Invariant Properties on $\mathbb{F}_p^n$ in the High Error Regime
\footnote{A preliminary version of this work appeared in STOC' 2011.}
}
\author{Hamed Hatami\thanks{Supported by an NSERC and an FQRNT grant.}\\
School of Computer Science, McGill University, Montr\'eal,  Canada\\
hatami@cs.mcgill.ca
\and
Shachar Lovett\thanks{Supported by NSF grant DMS-0835373.}\\
School of Mathematics, Institute of Advanced Study, Princeton, USA\\
slovett@math.ias.edu}
\date{}
\begin{document}
\maketitle

\begin{abstract}

Recently there has been much interest in Gowers uniformity norms from the perspective of theoretical computer science. This is mainly due to the fact that these norms provide a method for testing whether the maximum correlation of a function $f:\F_p^n \rightarrow \F_p$ with polynomials of degree at most $d \le p$ is non-negligible, while making only a constant number of queries to the function. This is an instance of {\em correlation testing}. In this framework, a fixed test is applied to a function, and the acceptance probability of the test is dependent on the correlation of the function from the property. This is an analog of {\em proximity oblivious testing}, a notion coined by Goldreich and Ron, in the high error regime.

In this work, we study  general properties which are affine invariant and which are correlation testable using a constant number of queries. We show that any such property (as long as the field size is not too small) can in fact be tested by  Gowers uniformity tests, and hence having correlation with the property is equivalent to having correlation with degree $d$ polynomials for some fixed $d$. We stress that our result holds also for non-linear properties which are affine invariant. This completely classifies affine invariant properties which are correlation testable.

The proof is based on higher-order Fourier analysis. Another ingredient is a nontrivial extension of a graph theoretical theorem of Erd\"os, Lov\'asz and Spencer to the context of additive number theory.
\end{abstract}

\noindent {{\sc AMS Subject Classification:}  68Q87, 11B30}
\newline
{{\sc Keywords:} property testing, higher-order Fourier analysis;


\newpage
\tableofcontents
\newpage


\section{Introduction \label{sec:intro}}

Blum, Luby, and Rubinfeld~\cite{BLR} made a beautiful observation that given a function $f:\F_p^n \rightarrow \F_p$, it is possible to inquire the value of $f$ on a few random points, and accordingly probabilistically distinguish between the case that $f$ is a linear function and the case that $f$ has to be modified on at least $\eps>0$ fraction of points to become a linear function. Inspired by this observation, Rubinfeld and Sudan~\cite{866666} defined the concept of property testing which is now a major area of research in theoretical computer science. Roughly speaking to test a function for a property means to examine the value of the function on a few random points, and accordingly (probabilistically) distinguish between
the case that the function has the property and the case that it is not too close to any function with that property.  Interestingly and to some extent surprisingly these tests exist for various basic properties. The first substantial investigation of property testing occurred in Goldreich, Goldwasser, and Ron~\cite{285060} who showed that several  natural combinatorial properties are testable. Since then there has been a significant amount of research on classifying the testable properties in combinatorial and algebraic settings.

The theory of uniformity plays an important role in the area of property testing.
In~\cite{MR1844079} Gowers in a Fourier-analytic proof for Szemer\'edi's theorem introduced a new notion of uniformity, defined through Gowers uniformity norms. This notion has an interesting implication in the context of property testing. The Gowers norm of a function can be expressed as an average of values of this function on a few random sample points. There are  theorems which show that a bounded function has non-negligible Gowers uniformity norm if and only if it has a non-negligible correlation with a low degree polynomial. These two facts show that having correlation with low degree polynomials is testable. Our main result is that, roughly speaking, the only correlation testable families of functions are the ones that can be tested by Gowers uniformity norms.

The tests that we study in this article are slightly different in nature from the typical statements in the area of property testing. Typically in property testing the goal is to distinguish the functions that are in a set $\mathcal{D}$ (structured) from the functions that are in a non-negligible distance  from every element in $\mathcal{D}$. In this article, we are interested in a different kind of tests. Here we seek a weaker structure in the function, and having correlation with an element in $\mathcal{D}$ replaces the usual condition of actually being in $\mathcal{D}$. This is closely related to the   concept of tolerant property testing~\cite{MR2252941}. Also our tests use a fixed number of queries and in this sense are in the spirit of proximity oblivious testing~\cite{GoldreichRon-2009}.

\subsection{Notations}
For a natural number $k$, denote $[k]:=\{1,\ldots,k\}$. The complex unit disk is denoted by $\mathbb{D}=\{z \in \C: |z| \le 1\}$. We will usually use the lower English letters $x,y,z$ to denote the elements of $\F_p^n$. For $x \in \F_p^n$, and $i \in [n]$, $x(i)$ denotes the $i$-th coordinate of $x$, i.e. $x=(x(1),\ldots,x(n))$. We frequently need to work with the elements of $(\F_p^n)^k$ which we regard as vectors with $k$ coordinates. These elements are denoted with bold font e.g. $\x=(x_1,\ldots,x_k) \in (\F_p^n)^k$. For $1 \le i \le n$, $\e_i$ denotes the $i$-th standard vector in $\F_p^n$. Capital letters $X$, $Y$, \emph{etc} are used to denote random variables.

For an element $m \in \F_p$, we use the notation $\exp(m):=e^{\frac{2 \pi i}{p}m}$. We denote by $f, g,\ldots$ functions from $\F_p^n$ to $\C$, and by $\ff, \gg,\ldots$ functions from $\F_p^n$ to $\F_p$. We denote $f:=\exp(\ff)$ if $f(x)=\exp(\ff(x))$. We denote $n$-variate polynomials over $\F_p^n$ by $P,Q,\ldots$. 

The \emph{bias} of a function $f:\F_p^n \rightarrow \C$ is defined to be the quantity
\begin{equation}
\label{eq:biasDef}
\bias(f) := \left| \Ex_{X \in \F_p^n}[f(X)] \right|.
\end{equation}
The \emph{bias} of a function $\ff:\F_p^n \rightarrow \F_p$ is defined to be $\bias(\ff):=\bias(\exp(\ff))$.
The inner product of two functions $f,g:\F_p^n \rightarrow \C$ is defined as
\begin{equation}
\label{eq:corrDef}
\ip{f,g} := \Ex_{X \in \F_p^n}[f(X)\overline{g(X)}].
\end{equation}
The \emph{correlation} of a function $f:\F_p^n \rightarrow \C$ with a set $D$ of functions from $\F_p^n$ to $\C$ is defined as
\begin{equation}
\label{eq:corrDefSet}
\|f\|_{u(D)} :=\sup_{g \in D} \left| \ip{ f,g } \right|.
\end{equation}
By an abuse of notation, if $D$ is a set of functions from $\F_p^n$ to $\F_p$, we define
$$
\|f\|_{u(D)}:=\sup_{\gg \in D} \left| \ip{ f,\exp(\gg) } \right|.
$$
Note that $\|\cdot\|_{u(D)}$ is always a semi-norm.

For two vector spaces $V$ and $W$ over $\F_p$, let $\Lin(V,W)$ denote the set of all \emph{linear transformations} from $V$ to $W$. Let $\Aff(n, \F_p)$ denote the group of all \emph{invertible affine transformations} from $\F_p^n$ to itself. For a function $f:\F_p^n \to \C$, and an $A \in \Aff(n,\mathbb{F}_p)$, we denote by $Af$ the function that maps $x$ to $f(Ax)$.\footnote{Here we think of $A$ as the operator that maps the function $f$ to the function $Af$. This explains our choice of notation in using $Af$ rather than $fA$.}

\subsection{Gowers Uniformity Norms}
Gowers uniformity norms are defined in the more general setting of arbitrary finite Abelian groups, but in this article we are only interested in the case where the group is $\F_p^n$.
\begin{definition}[Gowers uniformity norms]
Let $G$ be a finite Abelian group and $f:G \rightarrow \C$.  For an integer $k \ge 1$, the $k$-th Gowers norm of $f$, denoted $\|f\|_{U^k}$ is defined by
\begin{equation}
\label{eq:GowersNorm}
\|f\|_{U^k}^{2^k} := \Ex \left[ \prod_{S \subseteq [k]} \mathcal{C}^{k-|S|} f\left(X + \sum_{i \in S} Y_i\right) \right],
\end{equation}
where $\mathcal{C}$ denotes the complex conjugation operator, and $X,Y_1,\ldots,Y_k$ are independent random variables taking values in $G$ uniformly at random.
\end{definition}
These norms were first defined in~\cite{MR1844079} in the case where $G$ is the group $\mathbb{Z}_N$. Note that $\|f\|_{U^1}=\left|\Ex[f(X)]\right|$, and thus $\|\cdot\|_{U^1}$
is a semi-norm rather than a norm.
The facts that the right-hand side of (\ref{eq:GowersNorm}) is always non-negative, and that for $k>1$, $\| \cdot \|_{U^k}$ is actually a norm are easy to prove, but not trivial (see~\cite{MR1844079} for a proof).

Let us explain the relevance of the uniformity norms to the area of property testing. The goal in property testing is to obtain certain information about a function by ``reading'' its values only on a small number of points. The simplest case is that of testing correlation with linear functions. For $a \in \F_p^n$, let $\ell_a:\F_p^n \to \F_p$ be the corresponding linear function, defined as $\ell_a(x)=\sum_{i=1}^n a(i) x(i)$. Let $\mathrm{Linear}=\{\ell_a: a \in \F_p^n\}$ be the set of linear functions. Let $\ff:\F_p^n \to \F_p$ be a function, and let $f=\exp(\ff)$. The correlation of $\ff$ with linear functions is given by
$$
\|f\|_{u(\rm{Linear})} = \max_{\ell_a \in \mathrm{Linear}} |\ip{f,\exp(\ell_a)}| =
\max_{a \in \F_p^n} \bias(\ff-\ell_a),
$$
which is the same as the maximum of the absolute values of the Fourier coefficient of $f$. It is known~\cite{BLR,MR1844079} that the correlation of $\ff$ with linear functions is related to the $U^2$ norm of $f$. Specifically, for every $\eps>0$,
\begin{itemize}
\item \emph{Direct Theorem:} If $\|f\|_{u(\rm{Linear})} \ge \eps$, then $\|f\|_{U^2} \ge \eps.$
\item \emph{Inverse Theorem:} If $\|f\|_{U^2} \ge \eps$, then $\|f\|_{u(\rm{Linear})} \ge \eps^2.$
\end{itemize}
These two facts together show that $\|f\|_{U^2}$ gives a rough estimate for the maximum correlation of $\ff$ with linear functions. Recall that
$$
\|f\|_{U^2}^4 =
\|\exp(\ff)\|_{U^2}^4=\Ex[\exp(\ff(X+Y+Z)-\ff(X+Y)-\ff(X+Z)+\ff(X))],
$$
where $X,Y,Z \in \F_p^n$ are uniformly chosen. In other words, the joint distribution of $\ff(X+Y+Z), \ff(X+Y), \ff(X+Z), \ff(X)$ allows to distinguish between the case that $\ff$ has correlation at least
$\eps$ with linear functions, and the case where $\ff$ has correlation at most $\delta=\eps^2/2$ (say) with linear functions. Hence, correlation with linear functions is ``testable with just $4$ queries to $\ff$'' (for every $\eps>0$).

Similar to the case of linear functions, correlation with degree $d$ polynomials can be tested by the $U^{d+1}$ uniformity norm. The main ingredient is the inverse theorem for $\F_p^n$, recently proved in~\cite{bergelson-2009, tao-2008}.
Let $\textrm{Poly}_d(\F_p^n)$ be the set of polynomials of degree at most $d$ over $\F_p^n$. These results show that, as long as $p>d$ (i.e. the field is not too small~\footnote{Recently, the inverse theorem for the case $p \le d$ was proved in~\cite{TaoZiegler2010}. In this range the role of polynomials is replaced by {\em non-classical polynomials}. As our techniques are limited to the case of $p>d$, we do not expand on this.
}),  for every $\eps>0$,
\begin{itemize}
\item \emph{Direct Theorem:} If $\|f\|_{u(\textrm{Poly}_d(\F_p^n))} \ge \eps$, then $\|f\|_{U^{d+1}} \ge \eps.$
\item \emph{Inverse Theorem:} If $\|f\|_{U^{d+1}} \ge \eps$, then $\|f\|_{u(\textrm{Poly}_d(\F_p^n))} \ge \delta(\eps)>0.$
\end{itemize}

The exact dependency of $\delta(\eps)$ on $\eps$ is currently unknown for $d \ge 3$, but crucially $\delta$ does not depend on $n$. Analogously to the case of linear functions and $U^2$, we get that the joint distribution of $(\ff(X+\sum_{i \in I} Y_i): I \subseteq [d+1])$ where $X,Y_1,\ldots,Y_{d+1} \in \F_p^n$ are uniformly chosen, distinguishes the case where $\ff$ has noticeable correlation ($\ge \eps$) with degree $d$ polynomials, from the case where $\ff$ has negligible correlation ($\le \delta(\eps)/2$) with degree $d$ polynomials, for every $\eps>0$. Hence, correlation with polynomials of total degree $d$ is testable with just $2^{d+1}$ queries to $\ff$ (for every $\eps>0$).

An equivalent qualitative formulation of the direct and inverse theorems stated above is as follows. Let $(\ff_n:\F_p^n \to \F_p)_{n \in \N}$ be a sequence of functions. Then as long as $d<p$ we have that
$$
\lim_{n \to \infty} \|\exp(\ff_n)\|_{u(\textrm{Poly}_d(\F_p^n))}=0 \Longleftrightarrow \lim_{n \to \infty} \|\exp(\ff_n)\|_{U^{d+1}}=0.
$$
It will be convenient to express our results in such terms.

\subsection{Correlation testable properties \label{sec:introCorTesting}}
\label{subsec:intro_correlation_testable_properties}
The above discussion motivates us to introduce a general definition of correlation testable properties. Let $\mathcal{D}=\{D_n\}_{n \in \N}$ be a family of sets, where each $D_n$ is a set of functions from $\F_p^n$ to $\F_p$. Informally, $\mathcal{D}$ is correlation testable using $q$ queries if there exists a distribution over $x_1,\ldots,x_q \in \F_p^n$, and for every $\eps>0$ there exists $\delta(\eps) \in (0,\eps)$, such that the following holds. The joint distribution of $(\ff(x_1),\ldots,\ff(x_q))$ allows to distinguish between the case that $\ff$ has noticeable correlation ($\ge \eps$) with $\mathcal{D}$ and the case that $\ff$ has negligible correlation ($\le \delta(\eps)$) with $\mathcal{D}$.

\begin{definition}[Correlation testable properties]
\label{def:corTestable}
A family $\mathcal{D}=(D_n)$ is correlation testable with $q$ queries, if there exists a distribution $\mu$ taking values in $(\F_p^n)^q$ and a mapping $\Gamma:\F_p^q \to \{0,1\}$, such that the following holds. For every $\eps>0$, there exist $\delta \in (0,\eps)$, $0 \le \theta^- < \theta^+ \le 1$ and $n_0 \in \N$, such that for every $n>n_0$ and $\ff:\F_p^n \to \F_p$ we have:
\begin{itemize}
\item If $\|\exp(\ff)\|_{u(D_n)} \ge \eps$ then $\Pr_{(X_1,\ldots,X_q) \sim \mu}[\Gamma(\ff(X_1),\ldots,\ff(X_q))=1] \ge \theta^+$.
\item If $\|\exp(\ff)\|_{u(D_n)} \le \delta$ then $\Pr_{(X_1,\ldots,X_q) \sim \mu}[\Gamma(\ff(X_1),\ldots,\ff(X_q))=1] \le \theta^-$.
\end{itemize}
\end{definition}

Following the discussion on Gowers uniformity norms, $\mathrm{Poly}_d = \{\mathrm{Poly}_d(\F_p^n)\}_{n \in \N}$ is correlation testable using $q=2^{d+1}$ queries, as long as $d<p$. More precisely, let $X,Y_1,\ldots,Y_d$ be random variables taking values in $\F_p^n$ uniformly at random. Then defining $\mu$ to be the distribution of $(X_1,\ldots,X_{2^{d+1}}):=(X+\sum_{i \in S} Y_i)_{S \subseteq [d+1]}$, and defining $\Gamma$  as $\Gamma((f(X+\sum_{i \in S} Y_i))_{S \subseteq [d+1]})=1$ if and only if $\sum_{S \subseteq [d+1]} (-1)^{|S|}f(X+\sum_{i \in S} Y_i)= 0$ provides a correlation test for $\mathrm{Poly}_d$. 

Our goal is to study families $\mathcal{D}=\{D_n\}_{n \in \N}$ which are correlation testable using a constant number of queries. We shall require the sets $D_n$ to be \emph{consistent} with each other:

\begin{itemize}
\item {\bf A1: (Consistency)} For positive integers $m>n$ and $\gg \in D_n$, the function $\hh: \F_p^m \rightarrow \F_p$ defined as $\hh(x_1,\ldots,x_m)=\gg(x_1,\ldots,x_n)$ belongs to $D_m$.
\end{itemize}
We need to make also a more crucial  assumption. In this general setting, the algebraic structure of $\F_p^n$ is ignored, and we are treating  $\F_p^n$ as a generic set of size $p^n$. In order to take  the algebraic structure of $\F_p^n$ into account, we shall require $D_n$ to be \emph{affine invariant}:
\begin{itemize}
\item {\bf A2: (Affine invariance)} For every positive integer $n$, if $\gg \in D_n$, then for every $A \in \Aff(n,\F_p)$, we have $A\gg \in D_n$.
\end{itemize}
In our context, the set $\mathrm{Poly}_d$ is probably the most important example of an affine invariant property.
Invariance plays a crucial role in the area of property testing. We refer to the work of Kaufman and Sudan~\cite{KaufmanSudan-invariance} for the role of invariance in algebraic property testing. We stress that we {\bf do not} require $\mathcal{D}$ to be linear; i.e. we do not require that if $\ff,\gg \in D_n$ then also $\ff+\gg \in D_n$.

The last condition relates to the size of $D_n$. We study families where one can distinguish functions with noticeable correlation from functions with negligible correlation. In order not to make this meaningless, we would like to have functions with negligible correlation. For example, we would like a random function not to have correlation with $\mathcal{D}$ with high probability. Fix a function $\gg \in D_n$. The number of functions $\ff:\F_p^n \to \F_p$ which have correlation at least $\delta$ with $\gg$ is $p^{c(\delta) \cdot p^n}$ where $\lim_{\delta \to 0} c(\delta)=1$. Thus, a sufficient condition for random functions not to be correlated with $D_n$ is that the size of $D_n$ is $p^{o(p^n)}$. We thus add the following requirement.
\begin{itemize}
\item {\bf A3: (Sparsity)} For every $\eps>0$ and large enough $n$, we have $|D_n| \le p^{\eps \cdot p^n}$.
\end{itemize}
We call every $\mathcal{D} = \{D_n\}_{n \in \mathbb{N}}$ satisfying assumptions $\bf A1, \bf A2, \bf A3$ a \emph{proper dual}.

Our main result is the following theorem which roughly speaking says that the only correlation testable families of  functions are the ones that can be tested by Gowers uniformity norms provided that the field size $p$ is not too small. Say a sequence of functions $(\ff_n)_{n \in \N}$ is {\em unbiased} if $\lim_{n \to \infty} \bias(\ff_n)=0$.

\begin{theorem}[Main theorem]
\label{thm:F:main}
Consider a proper dual $\mathcal{D}=\{D_n\}_{n \in \N}$. If $\mathcal{D}$ is correlation testable with $q$ queries,
and if $p \ge q-1$, then there exists $1 \le k \le q-1$ such that the following holds. For every unbiased sequence of functions $(\ff_n:\F_p^n \rightarrow \F_p)_{n \in \N}$, we have
$$
\lim_{n \to \infty} \|\exp(\ff_n)\|_{u(D_n)}=0 \Longleftrightarrow\lim_{n \to \infty} \|\exp(\ff_n)\|_{U^{k+1}}=0.
$$
\end{theorem}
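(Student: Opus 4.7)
My plan is to analyse the acceptance probability of the test via its Fourier expansion and to show that its dependence on $\ff$ is controlled by a single Gowers uniformity norm. First I would argue, using affine invariance A2 together with consistency A1, that without loss of generality the testing distribution $\mu$ is of affine-linear type: $X_i = L_i(Y_1,\ldots,Y_\ell)$ for fixed linear forms $L_i$ over $\F_p$ and independent uniform $Y_j \in \F_p^n$. Symmetrising $\mu$ under $\Aff(n,\F_p)$ preserves the distinguishing property (since $D_n$ is affine invariant), any $\Aff$-invariant distribution on $(\F_p^n)^q$ must be supported on cosets of linear subspaces, and A1 makes this description uniform in $n$. Next, decomposing the indicator $\Gamma:\F_p^q\to\{0,1\}$ into additive characters gives
$$
\Pr[\Gamma(\ff(X_1),\ldots,\ff(X_q))=1]=\widehat{\Gamma}(0)+\sum_{c\in\F_p^q\setminus\{0\}}\widehat{\Gamma}(c)\,T_c(\ff),\qquad T_c(\ff):=\Ex_{Y}\!\left[\exp\!\left(\sum_i c_i\,\ff(L_i(Y))\right)\right],
$$
so the acceptance probability is an explicit linear combination of linear-form averages of $\exp(\ff)$. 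To each non-zero $c$ with $\widehat{\Gamma}(c)\neq 0$ I attach its true (Cauchy--Schwarz) complexity $s(c)$, i.e.\ the smallest $s$ such that $|T_c(\ff_n)|\to 0$ whenever $\|\exp(\ff_n)\|_{U^{s+1}}\to 0$. Setting $k:=\max_c s(c)$, the fact that $q$ linear forms have Cauchy--Schwarz complexity at most $q-2$, together with the hypothesis $p\ge q-1$, keeps us inside the range in which the Bergelson--Tao--Ziegler inverse theorem is applicable; this $k$ is the one to appear in the statement.

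For the ``$\Leftarrow$'' direction, suppose $\|\exp(\ff_n)\|_{U^{k+1}}\to 0$. By construction of $k$, every $T_c(\ff_n)\to 0$, so the acceptance probability tends to the constant $\widehat{\Gamma}(0)$. Sparsity A3 together with a counting argument shows that for large $n$ a uniformly random unbiased function has vanishing correlation with $D_n$ with overwhelming probability; applying the testability dichotomy to such ``generic'' functions forces $\widehat{\Gamma}(0)\le\theta^-$. Consequently $\|\exp(\ff_n)\|_{u(D_n)}<\eps$ eventually, for every fixed $\eps>0$.

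For the harder ``$\Rightarrow$'' direction, assume $\|\exp(\ff_n)\|_{u(D_n)}\to 0$ but, for contradiction, that $\|\exp(\ff_n)\|_{U^{k+1}}$ stays bounded below along some subsequence. The inverse theorem then provides (on a subsequence) a correlation of $\ff_n$ with a polynomial $P_n$ of degree exactly $k$. One shows that the characters $c^*$ achieving $s(c^*)=k$ contribute a term $\widehat{\Gamma}(c^*)T_{c^*}(\ff_n)$ which is bounded away from zero and cannot be cancelled by contributions of strictly smaller complexity. The test's acceptance probability therefore stays bounded away from $\widehat{\Gamma}(0)$; but then the testability dichotomy implies that $\ff_n$ has non-negligible correlation with $D_n$, contradicting our assumption.

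The principal obstacle lies in the non-cancellation claim inside the last step: one must guarantee that the $T_c$'s of the maximum complexity $k$ cannot be algebraically cancelled by $T_c$'s of smaller complexity, even after averaging. This is exactly where the promised additive-combinatorial analogue of the Erd\H{o}s--Lov\'asz--Spencer theorem is needed. In the graph setting, ELS asserts the strong independence of subgraph-copy density functions; the additive counterpart should establish a similar algebraic independence for linear-form averages indexed by distinct characters. Proving this rigidity, and splicing it cleanly with the inverse theorem and with the affine invariance of $D_n$, is where the bulk of the technical effort will go.
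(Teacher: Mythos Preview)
Your setup---Fourier-expanding $\Gamma$ and reducing to averages $T_c(\ff)$ over homogeneous linear systems---matches the paper's Lemma~6.2, and your $\Leftarrow$ direction is essentially correct. The real gap is in your choice of $k$ and in the $\Rightarrow$ direction.

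\textbf{The wrong $k$.} You set $k=\max_c s(c)$, the maximal true complexity appearing in the expansion. But the theorem requires an \emph{equivalence}, and your $\Rightarrow$ argument must show: if $\|\exp(\ff_n)\|_{U^{k+1}}\not\to 0$ then $\|\exp(\ff_n)\|_{u(D_n)}\not\to 0$. Take $\ff_n=P_n$ a high-rank polynomial of degree exactly $k$. Then $\|\exp(P_n)\|_{U^{k+1}}=1$, but there is no reason whatsoever that $\exp(P_n)$ correlates with $D_n$; the test may contain a term of true complexity $k$ whose contribution simply vanishes on the ``good'' side without forcing $D_n$ to contain anything near degree-$k$ polynomials. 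The paper's $k$ (called $t$) is defined very differently: it is the largest degree for which there \emph{exist} high-rank polynomials with non-negligible correlation with $D_n$ (the set $\mathcal{S}$ in Section~5). Showing $t\le s$ uses the test, but $t$ is a property of $\mathcal{D}$, not of the Fourier support of $\Gamma$.

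\textbf{The $\Rightarrow$ argument does not go through.} First, the inverse theorem yields a polynomial of degree $\le k$, not exactly $k$; the paper spends real work (Claim~5.6) regularizing to a high-rank polynomial of some fixed degree $t_0\le t$. Second, and more seriously, the implication ``$\ff_n$ correlates with a degree-$k$ polynomial $\Rightarrow$ $T_{c^*}(\ff_n)$ is bounded away from $0$'' is simply not available: $T_{c^*}$ is a specific multilinear average, and correlation with some $P_n$ tells you nothing about its value. The paper does \emph{not} try to show any $T_c(\ff_n)$ is large. Instead it uses an invariance principle (Proposition~4.9): if $f_n$ correlates with a high-rank $Q_n$ of degree $t_0$, and some high-rank $P_n$ of the \emph{same} degree correlates with $D_n$ (this is where $t_0\in\mathcal{S}$ is used), then one can build $f'_n$ by ``swapping $Q_n$ for $P_n$'' so that all averages $t^*_{\mathcal{L}_i,\beta_i}$ agree with those of $f_n$, yet $f'_n$ provably correlates with $D_n$. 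Since the test cannot distinguish $f_n$ from $f'_n$, $f_n$ must correlate too.

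\textbf{The role of the Erd\H{o}s--Lov\'asz--Spencer analogue.} It is not used to prevent cancellation among the $T_c$'s. It is used (Lemma~5.8, via Theorem~7.1) to show $\mathcal{S}=\{1,\ldots,t\}$: one constructs two sequences $f_n,f'_n$ with \emph{identical} average profiles $(t_{\mathcal{L}_i})$, one correlated with a high-rank degree-$t$ polynomial and the other built only from degree-$t'$ polynomials for any prescribed $t'<t$. The interior statement is what lets you match the profiles exactly. This closes the case $t_0<t$ left over from the swapping argument.

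Finally, one technical point you are missing: to run the decomposition and invariance machinery one needs functions into a convex set. The paper passes from $\ff:\F_p^n\to\F_p$ to distributional functions $\Gamma:\F_p^n\to P(\F_p)$ (Section~6.2), using sparsity~{\bf A3} and concentration to show the test behaves the same on randomized functions. Without this step you cannot form $\E(\ff\mid\mathcal{B})$ meaningfully.
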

\noindent
Equivalently, Theorem~\ref{thm:F:main} can be phrased as follows: the sequence $(\ff_n)_{n \in \N}$ has correlation
with $\mathcal{D}$ if and only if it has correlation with polynomials of degree at most $k$.

We remark on the seemingly odd requirement that $\ff_n$ is unbiased. Let $\mu$ be some distribution over $\F_p$, and let $\ff_n:\F_p^n \to \F_p$ be a random function, where each $\ff_n(x)$ is sampled independently according to $\mu$. By condition $\bf A3$, since $|D_n|=p^{o(p^n)}$, we have that with probability $1-o_n(1)$,
$$
\lim_{n \to \infty} \|\exp(\ff_n)\|_{u(D_n)}=0.
$$
However, if $\mu$ has a nonzero bias, then almost surely $\ff_n$ will have correlation with constant functions. It turns out
that ruling out sequences of functions which have correlation with constant functions is sufficient for establishing Theorem~\ref{thm:F:main}.

\paragraph{Paper organization}
We give a short overview of the proof of our results in Section~\ref{sec:proof_overview}.
We discuss systems of linear forms in Section~\ref{sec:systemLinearForms}. We survey higher-order Fourier analysis in Section~\ref{sec:highFourier}. We prove the main result for correlation testable families in Section~\ref{sec:F:testing_correlation}. We prove the extension of Erd\"os-Lov\'asz-Spencer theorem in Section~\ref{sec:interior}. We give some concluding remarks and pose some open problems in Section~\ref{sec:conclusion}.

\section{Proof overview}
\label{sec:proof_overview}
In this section, we give an overview of the proofs of our main results. We skip most of the technicalities, and try to emphasis the basic ideas behind the proofs. If the reader wishes, he/she can skip this section to go directly to Section~\ref{sec:systemLinearForms} to read the formal definitions and proofs.

The proof of Theorem~\ref{thm:F:main} is based on studying averages of functions evaluated on linear forms.

\subsection{Linear forms \label{sec:linforms}}
A \emph{linear form} in $k$ variables is a vector $L=(\lambda_1,\ldots,\lambda_k) \in \F_p^k$ regarded as a linear function from $V^k$ to $V$, for every vector space $V$ over $\F_p$: If $\x=(x_1,\ldots,x_k) \in V^k$, then $L(\x) := \lambda_1 x_1+\ldots+\lambda_k x_k$. A \emph{system of linear forms}  in $k$ variables is a finite set $\mathcal{L}=\{L_1,\ldots,L_m\}$ of distinct linear forms $L_i$  in $k$ variables.  For a function $f:\F_p^n \rightarrow \C$, and a system of linear forms $\mathcal{L} = \{L_1,\ldots,L_m\}$ in $k$ variables, define the average
\begin{equation}
\label{eq:intro:avgLinearForms}
t_\mathcal{L}(f) := \Ex\left[  \prod_{i=1}^m f(L_i(\X)) \right],
\end{equation}
where $\X$ is a random variable taking values uniformly in $(\F_p^n)^k$. We define two generalizations of such averages. First, in the case of complex-valued functions we allow taking conjugations. For $\alpha \in \{0,1\}^m$, define
\begin{equation}
\label{eq:intro:avgLinearForms1}
t_\mathcal{L,\alpha}(f) := \Ex\left[  \prod_{i=1}^m \mathcal{C}^{\alpha(i)} f(L_i(\X)) \right],
\end{equation}
where $\mathcal{C}$ is the conjugation operator. Second, if $\ff:\F_p^n \to \F_p$, one may take coefficients in $\F_p$. For $\beta \in \F_p^n$, define
\begin{equation}
\label{eq:intro:avgLinearForms2}
t^*_\mathcal{L,\beta}(\ff) := \Ex\left[  \exp\left(\sum_{i=1}^m  \beta(i) \ff(L_i(\X)) \right)\right].
\end{equation}
We note that for functions $\ff:\F_p^n \to \F_p$,~\eqref{eq:intro:avgLinearForms2} is more general than~\eqref{eq:intro:avgLinearForms1}. Indeed, if $\alpha \in \{0,1\}^m$, then setting $\beta(i)=(-1)^{\alpha(i)}$ gives
$t_{\mathcal{L},\alpha}(\exp(\ff))=t^*_{\mathcal{L},\beta}(\ff)$.

Let $\mathcal{D}$ be a proper dual family which is correlation testable. Then since $\mathcal{D}$ is affine invariant, it is easy to show that correlation with $\mathcal{D}$ can be expressed in terms of averages $t^*_{\mathcal{L},\beta}$. We show in Lemma~\ref{lemma:F:correlation_testable_by_linear_forms_with_few_queries} that
for every $\eps>0$ there exist $\delta \in (0,\eps)$, systems of linear forms $\mathcal{L}_1,\ldots,\mathcal{L}_{\ell}$ and corresponding coefficients $\beta_1,\ldots,\beta_{\ell}$, and $n_0 \in \N$, such that the closures of the following two sets are disjoint:
$$T_\eps^* := \{(t^*_{\mathcal{L}_1,\beta_1}(\ff),\ldots, t^*_{\mathcal{L}_{\ell},\beta_{\ell}}(\ff)) | n >n_0,
\ff:\F_p^n \rightarrow \F_p, \|\exp(\ff)\|_{u(D_n)} \ge \eps\},$$
and
$$S_\eps^* := \{(t^*_{\mathcal{L}_1,\beta_1}(\ff),\ldots, t^*_{\mathcal{L}_{\ell},\beta_{\ell}}(\ff)) | n>n_0, \ff:\F_p^n \rightarrow \F_p, \|\exp(\ff)\|_{u(D_n)} \le \delta\}.$$
Moreover, each of the systems $\mathcal{L}_1,\ldots,\mathcal{L}_{\ell}$ has at most $q$ linear forms. Note that we use the notation $S^*_\eps$ instead of $S^*_\delta$. This is a valid notation as $\delta$ is a function of $\eps$.

Recall that $\D$ is the unit disk in $\C$. It turns out to be easier to first analyze sets with a stronger
requirement which holds for all functions $f:\F_p^n \to \D$, and not just for functions of the form $f=\exp(\ff)$ for $\ff:\F_p^n \to \F_p$. Say a family $\mathcal{D}$ is {\em strongly correlation testable} if the closures of the following two sets are disjoint:
$$T_\eps := \{(t_{\mathcal{L}_1,\alpha_1}(f),\ldots, t_{\mathcal{L}_{\ell},\alpha_{\ell}}(f)) | n >n_0,
f:\F_p^n \rightarrow \D, \|f\|_{u(D_n)} \ge \eps\},$$
and
$$S_\eps := \{(t_{\mathcal{L}_1,\alpha_1}(f),\ldots, t_{\mathcal{L}_{\ell},\alpha_{\ell}}(f)) | n >n_0,
f:\F_p^n \rightarrow \D, \|f\|_{u(D_n)} \le \delta\}.$$

\subsection{Approximation by averages over polynomial factors}
We turn to analyze averages of the form $t_{\mathcal{L},\alpha}(f)$, and more generally averages of the form $\Ex \left[\prod_{i=1}^m \mathcal{C}^{\alpha(i)} f_i(L_i(\X)) \right]$ where $\X \in (\F_p^n)^k$ is uniform. A crucial ingredient is that we can ``approximate'' $f_i$ by nice functions, and that these approximations are essentially undetected by averages of the above form~\cite{MR1844079,MR2415379,GreenTaoLinear}. These nice functions will be averages of $f_1,\ldots,f_m$ on polynomial factors.

A {\em polynomial factor} $\mathcal{B}$ is a partition (sigma-algebra) of $\F_p^n$ defined by a collection of polynomials $P_1,\ldots,P_C$. The atoms of the partitions are
$$
\{x \in \F_p^n: P_1(x)=a(1),\ldots,P_C(x)=a(C)\},
$$
where $a(1),\ldots,a(C) \in \F_p$. The {\em degree} of $\mathcal{B}$ is the maximal degree of $P_1,\ldots,P_C$, and the {\em complexity} of $\mathcal{B}$ is $C$.

In Definition~\ref{def:rank}, to every set of polynomials $\P$ we will assign a natural number $\rank(\P)$. The rank of a single polynomial $P$ is
defined to be $\rank(\{P\})$, and rank of a polynomial factor $\mathcal{B}$ is the rank of the set of polynomials defining it. Roughly speaking, a polynomial of high rank behaves similar to a generic polynomial in the sense that it cannot be simplified to a function of a few polynomials with strictly  lower degrees.

The conditional average of a function $f:\F_p^n \to \D$ over a polynomial factor $\mathcal{B}$, denoted $\E(f|\mathcal{B}):\F_p^n \to \D$, is defined as
$$
\E(f|\mathcal{B})(x) = \Ex_{\{y \in \F_p^n: P_1(y)=P_1(x),\ldots,P_C(y)=P_C(x)\}}[f(y)].
$$
That is, $\E(f|\mathcal{B})(x)$ is the average of $f$ in the atom to which $x$ belongs. The usefulness of these averages is that they may be used to approximate the function $f$. It is known (see Theorem~\ref{thm:decompose_multiple_func})
that given any $d<p$ and $\delta>0$, for every function $f:\F_p^n \to \D$, there exists a polynomial factor $\mathcal{B}$ of degree $d$ and bounded complexity such that
$$
\|f - \E(f|\mathcal{B})\|_{U^{d+1}} \le \delta.
$$
As we shall see, this allows us to replace any arbitrary function $f:\mathbb{F}_p^n \rightarrow \mathbb{D}$ with a structured function $\E(f|\mathcal{B})$ without changing averages $t_{\mathcal{L},\alpha}$ significantly. Note that $\E(f|\mathcal{B})$ is $\mathcal{B}$-measurable, that is, it is constant on the atoms of $\mathcal{B}$. Thus, it will suffice to study averages $t_{\mathcal{L},\alpha}(g)$ where $g$ is a $\mathcal{B}$-measurable function.

As we shall see, we may assume that all the systems of linear forms arising in our proofs are {\em homogeneous}. Informally, a system of linear forms is homogeneous if (maybe after some change of basis) there is a variable that appears with coefficient exactly one in every linear form.
One of the key ingredients in the proof of our main result is the following invariance theorem which is established in~\cite{ComplexityPaper}.

\begin{thm}[Proposition~\ref{prop:invariance}]
Let $\P=\{P_1,\ldots,P_C\},\Q=\{Q_1,\ldots,Q_C\}$ be two collections of polynomials over $\F_p^n$ of degree at most $d<p$ such that $\deg(P_i)=\deg(Q_i)$ for every $1 \le i \le C$. Let $\mathcal{L}=\{L_1,\ldots,L_m\}$ be a homogeneous system of linear forms, and $\Gamma:\F_p^{C} \to \mathbb{D}$ be an arbitrary function. Define $f,g:\F_p^n \to \mathbb{D}$ by
$$
f(x) = \Gamma(P_1(x),\ldots,P_C(x))
$$
and
$$
g(x) = \Gamma(Q_1(x),\ldots,Q_C(x)).
$$
Then for every $\eps>0$, if $\rank(\P),\rank(\Q)$ are large enough, then for all $\alpha \in \{0,1\}^m$,
$$
\left| t_\mathcal{L,\alpha}(f) - t_\mathcal{L,\alpha}(g) \right| \le \eps.
$$
\end{thm}

\subsection{Interior of a set of averages of linear forms}
Another component in the proof of Theorem~\ref{thm:F:main} is an extension of a result of Erd\"os, Lov\'asz, and Spencer~\cite{MR538044} to the setting of additive combinatorics. We show (Theorem~\ref{thm:InteriorLinearForms}) that if $\mathcal{L}_1,\ldots,\mathcal{L}_{\ell}$ are systems of linear forms, then unless there are some trivial obstructions, there exists $N \in \N$ such that the set
$$
\left\{\left(t_{\mathcal{L}_1}(f),\ldots,t_{\mathcal{L}_{\ell}}(f)\right): \ \  f:\F_p^N \to [0,1]\right\} \subset \mathbb{R}^{\ell}
$$
has a nonempty interior. Formally, for the theorem to hold, we require the systems of linear forms to be {\em non-isomorphic} and {\em connected}.

Two systems of linear forms $\mathcal{L'}=\{L'_1,\ldots,L'_m\}$ and $\mathcal{L''}=\{L''_1,\ldots,L''_m\}$ are {\em isomorphic}, if the distributions $(L'_1(\X),\ldots,L'_m(\X))$ and $(L''_1(\X),\ldots,L''_m(\X))$ are identical (after rearranging them if necessary). Note that if $\mathcal{L'},\mathcal{L''}$ are isomorphic then
$$
t_{\mathcal{L'}}(f)=t_{\mathcal{L''}}(f),
$$
for all functions $f:\F_p^n \to \D$.

A system of linear forms $\mathcal{L}=\{L_1,\ldots,L_m\}$ is {\em connected} if it cannot be partitioned as $\mathcal{L}=\mathcal{L}_1 \dotcup \mathcal{L}_2$ where $\span(\mathcal{L}_1) \cap \span(\mathcal{L}_2) = \{\vec{0}\}$. Note that if $\mathcal{L}$ is not connected, then
$$
t_{\mathcal{L}}(f)=t_{\mathcal{L}_1}(f) t_{\mathcal{L}_2}(f),
$$
for all functions $f:\F_p^n \to \D$.

\subsection{Proof for strongly correlation testable families}
Let $\mathcal{D}=\{D_n\}_{n \in \N}$ be a family which is strongly correlation testable. Assume for simplicity of exposition in the proof overview that instead of the original definition of strongly correlation testable families given in Section~\ref{sec:linforms}, the following slightly stronger statement holds\footnote{In the actual proof we use the generalized averages $t_{\mathcal{L}_i,\alpha_i}(f)$.}. There exists a system of linear forms $\mathcal{L}_1,\ldots,\mathcal{L}_{\ell}$ such that for every $\eps>0$, there exists $\delta \in (0,\eps)$ and $n_0 \in \N$, such that the  closures of the following two sets are disjoint:
$$T_\eps := \{(t_{\mathcal{L}_1}(f),\ldots, t_{\mathcal{L}_{\ell}}(f)) | n >n_0,
f:\F_p^n \rightarrow \D, \|f\|_{u(D_n)} \ge \eps\},$$
and
$$S_\eps := \{(t_{\mathcal{L}_1}(f),\ldots, t_{\mathcal{L}_{\ell}}(f)) | n >n_0,
f:\F_p^n \rightarrow \D, \|f\|_{u(D_n)} \le \delta\}.$$
We can furthermore assume that these systems are homogeneous, and there is a number $s<p$ such that  for all $i \in [\ell]$,
$$\left|t_{\mathcal{L}_i}(f) - t_{\mathcal{L}_i}(g)\right| \approx 0,$$
if $f,g:\mathbb{F}_p^n \rightarrow \mathbb{D}$ satisfy $\|f-g\|_{U^{s+1}} \approx 0$. This follows from the assumption that the number of queries is $q \le p$.

Let $t \in \N$ be maximal such the following holds. There exist polynomials $Q_n$ of degree exactly $t$ such that
\begin{itemize}
\item $Q_n$ has noticeable correlation with $D_n$: that is $\liminf_{n \to \infty} \|\exp(Q_n)\|_{u(D_n)}>0$.
\item $Q_n$ has ``large enough rank'' (exact definition is deferred to the actual proof).
\end{itemize}
We can show that $t \le s$: if $Q_n$ are polynomials of degree greater than $s$ and large enough rank, then $\|\exp(Q_n)\|_{U^{s+1}} \approx 0$ and hence $t_{\mathcal{L}_i}(\exp(Q_n)) \approx 0$, for all $i \in [\ell]$. Thus we must have $\|\exp(Q_n)\|_{u(D_n)} \approx 0$.

Let $f_n:\F_p^n \to \D$ be a sequence of functions. We establish the main result (for strongly correlation testable families) by showing that:
$$
\lim_{n \to \infty} \|f_n\|_{u(D_n)}=0 \Longleftrightarrow \lim_{n \to \infty} \|f_n\|_{U^{t+1}}=0.
$$

\paragraph{Sketch of the proof of $\Leftarrow$:}
Assume to the contrary that $\lim_{n \to \infty}\|f_n\|_{U^{t+1}}=0$ but that $\eps:=\liminf_{n \to \infty}\|f_n\|_{u(D_n)}>0$. Let $\eps'>0$ be the $L_{\infty}$ distance between $T_{\eps}$ and $S_{\eps}$ (here we use the fact that their closures are disjoint, hence there is positive distance between the sets). Let $\mathcal{B}_n$ be a polynomial factor of degree $s$ such that $\|f_n - \E(f_n|\mathcal{B}_n)\|_{U^{s+1}} \le \eta$. Choosing $\eta>0$ small enough we can guarantee that $|t_{\mathcal{L}_i}(f_n)-t_{\mathcal{L}_i}(\E(f_n|\mathcal{B}_n))| \le \eps'/2$, hence we must have $\|\E(f_n|\mathcal{B}_n)\|_{u(D_n)} \ge \delta$.

Assume $\mathcal{B}_n$ is defined by polynomials $P_{n,1},\ldots,P_{n,C}$, which we can assume to be of high enough rank. For $\gamma \in \F_p^C$, define $P_{n,\gamma}(x) := \sum \gamma(i) P_{n,i}(x)$. We have $\E(f_n|\mathcal{B}_n)(x)=\sum_{\gamma \in \F_p^C} c_{\gamma} \exp(P_{n,\gamma}(x))$ where $|c_{\gamma}| \le 1$. Using the assumption that $\lim_{n \to \infty}\|f_n\|_{U^{t+1}}=0$, we show that if $\deg(P_{\gamma}) \le t$, then its contribution to the sum is negligible, i.e.  $c_{\gamma}$ can be assumed to be arbitrarily small. Thus, there must exist a polynomial $P_{n,\gamma}$ of degree at least $t+1$ such that $\|\exp(P_{n,\gamma})\|_{u(D_n)} \ge \delta p^{-C}$. As this polynomial has ``large enough rank'', this contradicts the maximality of $t$.

\paragraph{Sketch of the proof of $\Rightarrow$:}
Assume that $\lim_{n \to \infty}\|f_n\|_{U^{t+1}}>0$ (actually we have $\liminf$, which we can replace by an actual limit by Condition $\bf A1$). Thus, there exist polynomials $P_n$ of degree at most $t$ such that
$|\ip{f_n,\exp(P_n)}| \ge \eps>0$. We can assume these polynomials to have ``large enough rank''. Assume first that these
polynomials are of degree exactly $t$. By the definition of $t$, there exist polynomials $Q_n$ of degree $t$ and ``large enough rank'' such that $\|\exp(Q_n)\|_{u(D_n)} \ge \eps'$. We use an  ``invariance'' result (Proposition~\ref{prop:invariance}) to construct a new sequence of functions $f'_n:\F_p^n \to \D$ (essentially replacing $P_n$ with $Q_n$) such that
\begin{itemize}
\item $f'_n$ has correlation with $Q_n$. In fact, this correlation is strong enough to guarantee that $\|f'_n\|_{u(D_n)}\ge \eps''$ for some $\eps''>0$ independent of $n$.
\item Averages $t_{\mathcal{L}_i}$ cannot distinguish $f_n$ from $f'_n$. So, we must also have $\|f_n\|_{u(D_n)} \ge \delta'' >0$.
\end{itemize}
The case where $P_n$ have degrees less than $t$ is reduced to the case of degree $t$. We use the theorem on the interior of a set of averages of linear forms to argue that we can (essentially) tweak $f_n$ slightly, without changing averages $t_{\mathcal{L}_i}$ by much, but such that $f_n$ will have correlation with polynomials of degree exactly $t$ and large enough rank.

\subsection{Proof for correlation testable families}
The proof for the correlation testable families (i.e. with guarantees only for functions $\ff:\F_p^n \to \F_p$) follows  similar steps, albeit slightly more involved. Let $\Dp \subset \mathbb{R}^p$ denote the convex set of probability distributions on $\F_p$. Then correlation testable families can in fact test randomized functions as well, i.e. functions $\ff:\F_p^n \to \Dp$. Once this is established the remainder of the proof follows identical lines to the case of strongly correlation testable families, where we employ the fact that $\Dp$ is a convex set to appropriately define averages such as $\E(\ff|\mathcal{B})$.

\section{Systems of linear forms}
\label{sec:systemLinearForms}
The precise definitions of a linear form, a system of linear forms, and the notations $t_\mathcal{L}(\cdot)$, $t_{\mathcal{L},\alpha}(\cdot)$, and $t^*_{\mathcal{L},\beta}(\cdot)$ are already given in Section~\ref{sec:linforms}.
\ignore{
A \emph{linear form} in $k$ variables is a vector $L=(\lambda_1,\ldots,\lambda_k) \in \F_p^k$ regarded as a linear function from $V^k$ to $V$, for every vector space $V$ over $\F_p$: If $\x=(x_1,\ldots,x_k) \in V^k$, then $L(\x) := \lambda_1 x_1+\ldots+\lambda_k x_k$. A \emph{system of $m$ linear forms} in $k$ variables is a finite set $\mathcal{L}=\{L_1,\ldots,L_m\}$ of \emph{distinct} linear forms, each in $k$ variables. For a function $f:\F_p^n \rightarrow \C$, and a system of linear forms $\mathcal{L} = \{L_1,\ldots,L_m\}$ in $k$ variables, define
\begin{equation}
\label{eq:avgLinearForms}
t_\mathcal{L}(f) := \Ex\left[  \prod_{i=1}^m f(L_i(\X)) \right],
\end{equation}
where $\X$ is a random variable taking values uniformly in $(\F_p^n)^k$. More generally, we consider two types of generalized averages. First, when $f:\F_p^n \to \C$ takes complex values, we allow also conjugation. Let $\mathcal{C}$ denote the conjugation operator $\mathcal{C}(z)=\overline{z}$. Then for $\alpha \in \{0,1\}^m$ define
\begin{equation}
t_{\mathcal{L},\alpha}(f) := \Ex \left[  \prod_{i=1}^m \mathcal{C}^{\alpha(i)} f(L_i(\X)) \right].
\end{equation}
Note that $t_{\mathcal{L},0^m}(\cdot) = t_{\mathcal{L}}(\cdot)$.
Second, when $f:=\exp(\ff)$ where $\ff:\F_p^n \to \F_p$, we also allow coefficients. For $\beta \in \F_p^m$, define
\begin{equation}
t^*_{\mathcal{L},\beta}(\ff) := \Ex \left[  \exp\left(\sum_{i=1}^m \beta(i) \ff(L_i(\X)) \right)\right].
\end{equation}
We note that for functions $\ff:\F_p^n \to \F_p$, the latter average is more general than the first one: for every $\alpha \in \{0,1\}^m$, let $\beta \in \{-1,1\}^m$ be set as $\beta(i)=(-1)^{\alpha(i)}$; then $t^*_{\mathcal{L},\beta}(\ff)=t_{\mathcal{L},\alpha}(\exp(\ff))$. }

Let  $\mathcal{L} = \{L_1,\ldots,L_m\}$ be a system of linear forms in $k$ variables, and $\X$ be a random variable taking values uniformly in $(\F_p^n)^k$. 
Note that if $A \subseteq \F_p^n$ and $1_A:\F_p^n \to \{0,1\}$ is the indicator function for $A$, then $t_{\mathcal{L}}(1_A)$ is the probability that $L_1(\X),\ldots,L_m(\X)$ all fall in $A$. Roughly speaking, we say $A \subseteq \F_p^n$ is \emph{pseudorandom} with regards to $\mathcal{L}$, if
$$
t_{\mathcal{L}}(1_A) \approx \left( \frac{|A|}{p^n} \right)^m,
$$
that is if the probability that all $L_1(\X),\ldots,L_m(\X)$ fall in $A$ is close to what we would expect if $A$ was a random subset of $\F_p^n$ of size $|A|$. Let $\alpha = |A|/p^n$ be the relative measure of $A$, and define $f(x):=1_A(x) - \alpha$. We have
$$
t_{\mathcal{L}}(1_A)=t_{\mathcal{L}}(\alpha+f)=\alpha^m + \sum_{S \subseteq [m], S \ne \emptyset} \alpha^{m-|S|} \cdot t_{\{L_i: i \in S\}}(f).
$$
So, a sufficient condition for $A$ to be pseudorandom with regards to $\mathcal{L}$ is that $t_{\{L_i: i \in S\}}(f) \approx 0$ for all nonempty subsets $S \subseteq [m]$. Green and Tao~\cite{GreenTaoLinear} showed that a sufficient condition for this to occur is that $\|f\|_{U^{s+1}}$ is small enough, where $s$ is the {\em Cauchy-Schwarz complexity} of the system of linear forms.

\begin{definition}[Cauchy-Schwarz complexity~\cite{GreenTaoLinear}]
\label{dfn:cauchy-schwarz-complexity}
Let $\mathcal{L}=\{L_1,\ldots,L_m\}$ be a system of linear forms. The {\em Cauchy-Schwarz complexity} of $\mathcal{L}$ is the minimal $s$ such that the following holds. For every $1 \le i \le m$, we can partition $\{L_j\}_{j \in [m] \setminus \{i\}}$ into $s+1$ subsets, such that $L_i$ does not belong to  the linear span of any of the subsets.
\end{definition}
The reason for the term {\em Cauchy-Schwarz complexity} is the following lemma due to Green and Tao~\cite{GreenTaoLinear} whose proof is based on a clever iterative application of the Cauchy-Schwarz inequality.

\begin{lemma}[{\cite{GreenTaoLinear}, See also~\cite[Theorem 2.3]{MR2578471}}]
\label{lemma:cauchy-schwarz-lemma}
Let $f_1,\ldots,f_m:\F_p \to \D$. Let $\mathcal{L}=\{L_1,\ldots,L_m\}$ be a system of $m$ linear forms in $k$ variables of Cauchy-Schwarz complexity $s$. Then
$$
\left| \Ex_{\X \in (\F_p^n)^k} \left[ \prod_{i=1}^m f_i (L_i(\X)) \right]\right| \le \min_{1 \le i \le m} \|f_i\|_{U^{s+1}}.
$$
\end{lemma}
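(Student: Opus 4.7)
The plan is to iteratively apply the Cauchy-Schwarz inequality $s{+}1$ times, using the partition structure supplied by the definition of Cauchy-Schwarz complexity. Each application will annihilate the forms lying in one part $P_j$ while ``doubling'' every surviving form in the direction of a new auxiliary variable. After all $s{+}1$ steps, only $2^{s+1}$ doubled copies of the chosen $L_{i_0}$ will remain, and the resulting expression can be identified with $\|f_{i_0}\|_{U^{s+1}}^{2^{s+1}}$.

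Fix $i_0 \in [m]$ attaining the minimum on the right-hand side, and assume $i_0 = m$ for convenience. By the definition of Cauchy-Schwarz complexity, partition $\{L_1,\ldots,L_{m-1}\} = P_0 \cup P_1 \cup \cdots \cup P_s$ with $L_m \notin \span(P_j)$ for every $j$. For each such $j$, choose a covector $\phi_j \in (\F_p^k)^*$ that vanishes on $\span(P_j)$ and normalize so that $\phi_j(L_m) = 1$; this is possible precisely because $L_m \notin \span(P_j)$.

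The iteration now proceeds as follows. At step $j = 0, 1, \ldots, s$, decompose the current parameter space as $\ker(\phi_j) \oplus \F_p \cdot e_j$, splitting the tensor factor accordingly into $\X'$ and a scalar coordinate $t$. Every surviving form on which $\phi_j$ vanishes then does not depend on $t$, while the others have a $t$-dependence of the form $L + \phi_j(L) t$. Separate the inner product, apply Cauchy-Schwarz in $\X'$, bound the $\phi_j$-independent factor by $1$, expand the square by introducing a fresh variable $h_j = t - t'$, and absorb $t'$ back into $\X'$. The net effect is: forms annihilated by $\phi_j$ are eliminated, and every remaining form $L$ is replaced by the pair $(L,\, L + \phi_j(L) h_j)$ with a conjugation on the second factor. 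Before moving to step $j+1$, extend $\phi_{j+1}$ to the enlarged parameter space by setting $\phi_{j+1}(h_j) = 0$. This extension preserves two invariants I need: (a) every doubled copy of a form in $P_{j+1}$ remains in $\ker(\phi_{j+1})$, so it is discarded at the next step; (b) every doubled copy of $L_m$ still has $\phi_{j+1}$-value $1$, so it survives. Forms outside $P_{j+1} \cup \{L_m\}$ may be discarded or doubled at step $j+1$ depending on whether $\phi_{j+1}$ annihilates them, and either outcome is fine.

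After $s{+}1$ steps, each $L_r$ with $r \ne m$ has been eliminated (since $r \in P_j$ for some $j$, and all its doubled copies are annihilated by $\phi_j$), and exactly $2^{s+1}$ doubled copies of $L_m$ remain, indexed by $\omega \in \{0,1\}^{s+1}$ with shapes
$$L_m(\X) + \sum_{r=0}^{s} \omega_r h_r.$$
Putting together the $s{+}1$ Cauchy-Schwarz steps gives
$$\left|\Ex_{\X}\prod_{i=1}^m f_i(L_i(\X))\right|^{2^{s+1}} \le \Ex_{\X, h_0, \ldots, h_s}\, \prod_{\omega \in \{0,1\}^{s+1}} \mathcal{C}^{|\omega|} f_m\!\left(L_m(\X) + \sum_{r=0}^{s} \omega_r h_r\right).$$
Since $L_m$ is a nonzero linear form in $\X$, the random variable $L_m(\X)$ is uniformly distributed on $\F_p^n$ and independent of $(h_0, \ldots, h_s)$, so the right-hand side equals $\|f_m\|_{U^{s+1}}^{2^{s+1}}$. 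Taking $2^{s+1}$-th roots and minimizing over the choice of $i_0$ gives the lemma.

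The main obstacle is the bookkeeping: maintaining through the iteration that each freshly chosen $\phi_{j+1}$ (extended trivially to the new variables) continues to annihilate every doubled copy of the $P_{j+1}$-forms while never annihilating any doubled copy of $L_m$. Once those two invariants are verified, the rest of the argument is a mechanical telescoping of Cauchy-Schwarz inequalities together with a change of variables to recognize the Gowers norm on the final line.
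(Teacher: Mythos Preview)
The paper does not actually prove this lemma; it is quoted from Green and Tao~\cite{GreenTaoLinear}, with only the remark that the proof ``is based on a clever iterative application of the Cauchy-Schwarz inequality.'' Your proposal is precisely that standard Green--Tao argument: use the partition $P_0,\ldots,P_s$ from the definition of Cauchy--Schwarz complexity, and at each step change variables so that the forms in $P_j$ do not depend on one coordinate while $L_m$ does, then apply Cauchy--Schwarz in that coordinate. Your bookkeeping via the functionals $\phi_j$ (extended by $\phi_{j+1}(h_r)=0$ for $r\le j$) correctly maintains the two invariants you name, and the identification of the final expression with $\|f_m\|_{U^{s+1}}^{2^{s+1}}$ is the standard one. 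One small imprecision: when you write ``a scalar coordinate $t$'' and ``$\ker(\phi_j)\oplus\F_p\cdot e_j$,'' note that $t$ ranges over $\F_p^n$, not $\F_p$, and that $\phi_j$ is a functional on the space of linear forms, so the decomposition is really of $\F_p^k$ (dually inducing a splitting of the variable space $(\F_p^n)^k$); otherwise the argument is correct and is exactly the proof the paper is citing.
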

Note that the Cauchy-Schwarz complexity of any system of $m$ linear forms in which any two linear forms are linearly independent (i.e. one is not a multiple of the other) is at most $m-2$, since we can always partition $\{L_j\}_{j \in [m] \setminus \{i\}}$ into the $m-1$ singleton subsets.

%
%

\subsection{The true complexity of linear forms}
The Cauchy-Schwarz complexity of $\mathcal{L}$ gives an upper bound on $s$, such that if $\|f\|_{U^{s+1}}$ is small enough for some function $f:\mathbb{F}_p^n \rightarrow \mathbb{D}$, then $f$ is pseudorandom with regards to $\mathcal{L}$. Gowers and Wolf~\cite{MR2578471} defined the {\em true complexity} of a system of linear forms as the minimal $s$ such that the above condition holds for all $f:\mathbb{F}_p^n \rightarrow \mathbb{D}$.
\begin{definition}[True complexity~\cite{MR2578471}]
\label{def:trueComplexity}
Let $\mathcal{L}=\{L_1,\ldots,L_m\}$ be a system of linear forms over $\F_p$.
The true complexity of $\mathcal{L}$ is the smallest $d \in \N$ with the following property. For every $\eps>0$,
there exists  $\delta>0$ such that if $f : \F_p^n \rightarrow \D$ is any function
with $\|f\|_{U^{d+1}} \le \delta$, then
$$\left| t_{\mathcal{L}}(f) \right|\le  \eps.$$
\end{definition}
An obvious bound on the true complexity is the Cauchy-Schwarz complexity of the system. However, there are cases where this is not tight. Gowers and Wolf~\cite{gowers-wolf-2010} characterized the true complexity of systems of linear forms, assuming the field is not too small. For a linear form $L \in \F_p^m$, let $L^{k} \in \F_p^{m^k}$ be the $k$-tensor power of $L$. That is, if $L=(\lambda_1,\ldots,\lambda_m)$, then
$$
L^{k} = \left(\prod_{j=1}^k \lambda_{i_j}: i_1,\ldots,i_k \in [m]\right) \in \mathbb{F}_p^{m^k}.
$$

\begin{theorem}[Characterization of the true complexity of linear systems, Theorem 6.1 in~\cite{gowers-wolf-2010}]
\label{thm:true_compltexity_characterization}
Let $\mathcal{L}=\{L_1,\ldots,L_m\}$ be a system of linear forms over $\F_p^n$ of Cauchy-Schwarz complexity $s$,
and assume that $s \le p$. The true complexity of $\mathcal{L}$ is the minimal $d$ such that $L_1^{d+1},\ldots,L_m^{d+1}$ are linearly independent over $\F_p$.
\end{theorem}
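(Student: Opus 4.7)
The plan is to prove both directions of the equality, writing $d^\ast$ for the minimum $d$ with $L_1^{d+1},\dots,L_m^{d+1}$ linearly independent over $\F_p$.

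\textbf{Upper bound (sufficiency: LI at level $d^\ast+1$ forces true complexity $\le d^\ast$).} The plan is to combine higher-order Fourier decomposition with a rank/equidistribution argument. Given $\eps>0$ and $f:\F_p^n\to\D$ with $\|f\|_{U^{d^\ast+1}}\le\delta$, I would first apply the decomposition theorem for Gowers norms (cited as Theorem~\ref{thm:decompose_multiple_func} in the overview) to write $f=\E(f\mid\mathcal{B})+h$, where $\mathcal{B}$ is a polynomial factor of degree $s$ (the Cauchy--Schwarz complexity of $\mathcal{L}$) and arbitrarily high rank, and $\|h\|_{U^{s+1}}$ is arbitrarily small. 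Expanding $t_\mathcal{L}(f)$ multilinearly as a sum of $2^m$ generalized averages and applying Lemma~\ref{lemma:cauchy-schwarz-lemma} to each term containing $h$ reduces matters to estimating $t_\mathcal{L}(g)$ for $g:=\E(f\mid\mathcal{B})$. Since $\|h\|_{U^{d^\ast+1}}\le\|h\|_{U^{s+1}}$ by monotonicity of Gowers norms, $\|g\|_{U^{d^\ast+1}}$ is also small. I then Fourier-expand $g(x)=\sum_{a\in\F_p^C}c_a\exp(a\cdot P(x))$ over the defining polynomials $P=(P_1,\dots,P_C)$ of $\mathcal{B}$; equidistribution of high-rank factors forces $|c_a|$ to be negligible whenever $\deg(a\cdot P)\le d^\ast$. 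The remaining contribution to $t_\mathcal{L}(g)$ is a sum over tuples $(a_1,\dots,a_m)$ with each $a_i\cdot P$ of degree at least $d^\ast+1$, contributing $\Ex_\mathbf{x}\exp\bigl(\sum_i(a_i\cdot P)(L_i\mathbf{x})\bigr)$. The degree-$(d^\ast+1)$ homogeneous part of this polynomial is obtained by contracting $L_i^{\otimes(d^\ast+1)}$ against the leading symmetric $(d^\ast+1)$-form of $a_i\cdot P$. Linear independence of $\{L_i^{d^\ast+1}\}$ then guarantees that this contraction remains high rank provided the top forms themselves have high rank (which follows from the rank of $\mathcal{B}$), so the polynomial has high rank and hence negligible bias, closing the estimate.

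\textbf{Lower bound (necessity: LD at level $d^\ast$ forces true complexity $>d^\ast-1$).} Here I would construct an explicit counterexample. Fix a nontrivial relation $\sum_i b_i L_i^{\otimes d^\ast}=0$ and let $V\subseteq\F_p^m$ denote the subspace of all such dependencies. Take a symmetric $d^\ast$-multilinear form $M$ on $\F_p^n$ of very large rank and set $P(x)=M(x,\dots,x)$; then $\|\exp(P)\|_{U^{d^\ast}}^{2^{d^\ast}}=\bias(\pm d^\ast!\,M)$ is arbitrarily small (using $p>d^\ast$ for polarization). A single phase cannot encode the weights $b_i$, so I define
\[
f(x)=\sum_{c\in\F_p}\alpha_c\exp\bigl(cP(x)\bigr)
\]
with $\alpha_0$ tiny and suitable $\alpha_c\in\D$ for $c\neq 0$. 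Expanding $\prod_i f(L_i\mathbf{x})$ yields $\sum_{b\in\F_p^m}\prod_i\alpha_{b_i}\cdot\exp\bigl(\sum_i b_i P(L_i\mathbf{x})\bigr)$, and the bias of the inner polynomial is negligible except when $b\in V$, in which case the degree-$d^\ast$ part vanishes and the bias equals $1$. Choosing $\alpha$ so that the resonant contribution $\sum_{b\in V}\prod_i\alpha_{b_i}$ is bounded away from zero yields $f:\F_p^n\to\D$ with $\|f\|_{U^{d^\ast}}$ negligible but $|t_\mathcal{L}(f)|$ bounded below.

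\textbf{Main obstacle.} The technical heart is the rank-propagation step in the upper bound: converting the algebraic hypothesis ``$L_1^{d^\ast+1},\dots,L_m^{d^\ast+1}$ linearly independent'' into the analytic conclusion ``$\sum_i(a_i\cdot P)(L_i\mathbf{x})$ has high rank, hence small bias.'' This amounts to showing that contracting $L_i^{\otimes(d^\ast+1)}$ against the leading $(d^\ast+1)$-forms of $a_i\cdot P$ preserves rank up to a bounded factor, which relies on polarization (valid under the assumption $s\le p$) and a careful linear-algebraic analysis of the contraction map, in the spirit of Proposition~\ref{prop:invariance}. A secondary subtlety arises on the lower-bound side, where $t_\mathcal{L}$ has unit weights on each linear form while the dependence carries general $\F_p$-coefficients; the construction must encode these weights by averaging over an auxiliary variable, and when $V$ fails to contain vectors with all nonzero coordinates one must retain a small $\alpha_0\neq 0$ and carefully balance its contribution to $\|f\|_{U^{d^\ast}}$ against the resonant sum.
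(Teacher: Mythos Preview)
The paper does not prove this theorem; it is quoted as Theorem~6.1 of Gowers--Wolf~\cite{gowers-wolf-2010} and used as a black box. There is therefore no ``paper's own proof'' to compare against. That said, your sketch is broadly the right shape and is close to the argument in the cited source (and to the authors' own strengthening in~\cite{ComplexityPaper}, stated here as Theorem~\ref{thm:strong_independence}): decompose against a high-rank degree-$s$ factor, kill the pseudorandom part by Cauchy--Schwarz complexity, Fourier-expand the structured part, and use linear independence of the $L_i^{d^\ast+1}$ to force every surviving exponential sum to have high rank and hence small bias.

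One remark on your lower bound. The construction $f=\sum_c\alpha_c\exp(cP)$ with $\alpha_0$ tiny does work, but the ``secondary subtlety'' you flag is real and not fully resolved in your outline: if the dependence space $V$ contains no vector with all coordinates nonzero, the resonant sum $\sum_{b\in V}\prod_i\alpha_{b_i}$ picks up at least one factor of $\alpha_0$ in every term, and your competing requirements ($\alpha_0$ small enough to control $\|f\|_{U^{d^\ast}}$, yet large enough that the resonant sum survives) may conflict. The cleaner route, and the one actually taken in the literature, is to observe that the dependence $L_1^{d^\ast}\in\mathrm{span}(L_2^{d^\ast},\dots,L_m^{d^\ast})$ already gives, for any homogeneous degree-$d^\ast$ polynomial $P$, an identity $P(L_1\mathbf{x})=\sum_{i\ge2}c_iP(L_i\mathbf{x})$; one then works with the generalized average with different functions at different positions (as in Theorem~\ref{thm:strong_independence}) rather than with $t_{\mathcal{L}}(f)$ for a single $f$, placing $\exp(P)$ at position~1 and $\exp(-c_iP)$ at position~$i$. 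Since Definition~\ref{def:trueComplexity} as stated uses a single $f$, a short additional averaging argument is needed to pass between the two formulations; this is routine but should be mentioned.
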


In~\cite{ComplexityPaper} the authors proved the following strengthening of Theorem~\ref{thm:true_compltexity_characterization}. 
\begin{theorem}[\cite{ComplexityPaper}]
\label{thm:strong_independence}
Let $\mathcal{L}=\{L_1,\ldots,L_m\}$ be a system of linear forms  of Cauchy-Schwarz complexity at most $p$. Let $d \ge 0$, and assume that $L_1^{d+1}$ is not in the linear span of $L_2^{d+1},\ldots,L_m^{d+1}$. Then for every $\eps>0$, there exists $\delta>0$ such that for any functions $f_1,\ldots,f_m:\F_p \to \D$ with $\|f_1\|_{U^{d+1}} \le \delta$, we have
$$
\left| \Ex_{\X \in (\F_p^n)^k} \left[\prod_{i=1}^m f_i(L_i(\X)) \right] \right| \le \eps,
$$
where $\X \in (\F_p^n)^k$ is uniform.
\end{theorem}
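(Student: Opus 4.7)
The plan is to combine iterative Cauchy--Schwarz with the tensor-power characterization of true complexity from Theorem~\ref{thm:true_compltexity_characterization}. Since each $f_i:\F_p^n\to\D$ is bounded by $1$, Cauchy--Schwarz can be applied freely. The core idea is an iterated Cauchy--Schwarz argument that progressively strips away $f_2,\ldots,f_m$, leaving only $f_1$. Exploiting that the Cauchy--Schwarz complexity $s$ is at most $p$, we can partition $\{L_2,\ldots,L_m\}$ into $s+1$ classes none of which linearly spans $L_1$; applying Cauchy--Schwarz once per class, with doubling direction transverse to $L_1$, we obtain
\[
|t_\mathcal{L}(f_1,\ldots,f_m)|^{2^{s+1}} \;\le\; \E_{\X,\mathbf{H}}\,\prod_{\omega \in \{0,1\}^{s+1}} \mathcal{C}^{|\omega|} f_1\bigl(L_1(\X) + \langle \omega, \Phi(\mathbf{H})\rangle\bigr),
\]
where $\mathbf{H}=(H_1,\ldots,H_{s+1})\in(\F_p^n)^{s+1}$ and $\Phi$ is a linear map determined by the Cauchy--Schwarz schedule. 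Generically this equals $\|f_1\|_{U^{s+1}}^{2^{s+1}}$, which only recovers the crude bound from Lemma~\ref{lemma:cauchy-schwarz-lemma}.

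The improvement from $U^{s+1}$ to $U^{d+1}$ will come from the hypothesis $L_1^{d+1} \notin \mathrm{span}(L_2^{d+1},\ldots,L_m^{d+1})$. By polarization, which is valid because $d+1\le s+1\le p+1$, this condition produces a symmetric $(d+1)$-linear form $T$ on $\F_p^k$ with $T(L_1,\ldots,L_1)\ne 0$ but $T(L_i,\ldots,L_i)=0$ for $i\ge 2$. This witness lets us design a Cauchy--Schwarz schedule of only $d+1$ rounds (rather than $s+1$) in which the doubling directions are chosen from the support of $T$. The identities $T(L_i,\ldots,L_i)=0$ for $i\ge 2$ force each $f_i$-factor with $i\ge 2$ to collapse into residual factors uniformly bounded by $1$ (which are then discarded), while the $f_1$-factors assemble into a genuine $(d+1)$-dimensional Gowers cube around $L_1(\X)$. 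The upshot is
\[
|t_\mathcal{L}(f_1,\ldots,f_m)|^{2^{d+1}} \;\le\; \|f_1\|_{U^{d+1}}^{2^{d+1}},
\]
so choosing $\delta\le\eps$ yields the theorem.

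The main obstacle will be the second step: turning the abstract linear-independence of tensor powers into a concrete Cauchy--Schwarz schedule that genuinely eliminates $f_2,\ldots,f_m$ while assembling the standard $(d+1)$-cube in $f_1$. This requires careful bookkeeping of which linear forms and conjugations survive each doubling, and an explicit change of variables in $(\F_p^n)^{k+d+1}$ matching the right-hand side of~\eqref{eq:GowersNorm}. The hypothesis $d+1\le p$ is crucial both for polarization and for invertibility of the binomial coefficients arising in the symmetric decomposition of $T$; a wrong schedule would leave ``cross terms'' involving $f_2,\ldots,f_m$ that cannot be absorbed by the $|f_i|\le 1$ bound. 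If the schedule is set up correctly, the rest of the argument is a direct comparison with the definition of $\|f_1\|_{U^{d+1}}$.
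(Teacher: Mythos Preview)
The paper does not prove this theorem; it is imported from the companion paper~\cite{ComplexityPaper} (and ultimately from Gowers--Wolf~\cite{gowers-wolf-2010}). So there is no ``paper's own proof'' to compare against, but your proposal should still be checked for soundness, and it has a genuine gap.

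Your plan hinges on the claim that the tensor witness $T$ can be converted into a ``Cauchy--Schwarz schedule of only $d+1$ rounds'' that eliminates $f_2,\ldots,f_m$. This is exactly what cannot be done. Each application of Cauchy--Schwarz removes a batch of the $f_i$ by fixing a subset of variables on which those $L_i$ do not depend (equivalently, $L_1$ is not in the linear span of that batch). Achieving this in $d+1$ rounds would require a partition of $\{L_2,\ldots,L_m\}$ into $d+1$ classes none of whose linear spans contain $L_1$; but that is precisely the statement that the Cauchy--Schwarz complexity is at most $d$. When $d<s$ no such partition exists, by the very definition of $s$ (Definition~\ref{dfn:cauchy-schwarz-complexity}). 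The condition $L_1^{d+1}\notin\mathrm{span}(L_2^{d+1},\ldots,L_m^{d+1})$ is a degree-$(d+1)$ algebraic relation, not a linear-span statement, and it does not manufacture the missing partition; in particular, $T(L_i,\ldots,L_i)=0$ for $i\ge 2$ says nothing about whether $L_1$ lies in the span of any subset of the $L_i$. Your asserted inequality $|t_\mathcal{L}(f_1,\ldots,f_m)|\le\|f_1\|_{U^{d+1}}$ would give $\delta=\eps$, a polynomial bound that is not known and would in fact bypass the inverse theorem entirely.

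The actual argument (in~\cite{ComplexityPaper} and~\cite{gowers-wolf-2010}) is not a Cauchy--Schwarz refinement. One first uses Cauchy--Schwarz only at level $s+1$ to justify replacing each $f_i$ by its conditional expectation on a high-rank polynomial factor of degree $s$ (Theorem~\ref{thm:decompose_multiple_func}); the resulting structured functions are finite combinations of phases $\exp(P_\gamma)$. The average then becomes a sum of biases of polynomials of the form $\sum_i P_{\gamma_i}(L_i(\x))$, and the tensor-power hypothesis is used \emph{here}, together with high rank, to show that every nonzero such combination involving $f_1$ has degree $\le d$ in the relevant sense. Finally the smallness of $\|f_1\|_{U^{d+1}}$ kills the surviving terms via Observation~\ref{obs:project_factor} and~\eqref{eq:GowersCorrelationII}. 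The dependence of $\delta$ on $\eps$ is thus inherited from the decomposition theorem (and ultimately the inverse theorem), not from a direct Cauchy--Schwarz count.
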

The following theorem immediately follows from Theorem~\ref{thm:strong_independence}. 
\begin{theorem}[\cite{ComplexityPaper}]
\label{thm:avg-approx-funcs}
Let $\mathcal{L}=\{L_1,\ldots,L_m\}$ be a system of linear forms of true complexity $d$ and Cauchy-Schwarz complexity at most $p$. Then for every $\eps>0$, there exists $\delta>0$ such that the following holds. Let $f_i,g_i:\F_p^n \to \D$
for $1 \le i \le m$ be functions such that $\|f_i - g_i\|_{U^{d+1}} \le \delta$. Then
$$
\left|\Ex_{\X} \left[\prod_{i=1}^m f_i(L_i(\X))\right] - \Ex_{\X} \left[\prod_{i=1}^m g_i(L_i(\X))\right]\right| \le \eps,
$$
where $\X \in (\F_p^n)^k$ is uniform.
\end{theorem}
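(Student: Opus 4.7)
My plan is a telescoping hybrid argument that reduces the statement to a multi-function generalization of the true-complexity bound of Gowers--Wolf. For $0 \le i \le m$, introduce
$$T_i := \Ex\Bigl[\prod_{j \le i} g_j(L_j(\X)) \prod_{j > i} f_j(L_j(\X))\Bigr],$$
so the quantity to be bounded equals $T_0 - T_m = \sum_{i=1}^m (T_{i-1} - T_i)$. Each summand has the form $\Ex\bigl[\prod_{j \ne i} u_j(L_j(\X)) \cdot (f_i - g_i)(L_i(\X))\bigr]$ with each $u_j \in \{f_j, g_j\}$ being $\D$-valued, and with the distinguished factor at position $i$ satisfying $\|f_i - g_i\|_{U^{d+1}} \le \delta$. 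Hence it suffices to prove the multi-function generalized von Neumann inequality: for every $\eta > 0$ there exists $\delta > 0$ such that whenever $h_1, \ldots, h_m : \F_p^n \to \D$ satisfy $\min_i \|h_i\|_{U^{d+1}} \le \delta$, one has $\bigl|\Ex \prod_i h_i(L_i(\X))\bigr| \le \eta$. Taking $\eta = \eps/(2m)$ and absorbing the factor of $2$ from $\|f_i - g_i\|_\infty \le 2$ into the choice of $\delta$ then yields the theorem.

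To establish this inequality, assume without loss of generality that $\|h_1\|_{U^{d+1}} \le \delta$. I would invoke a regularity lemma for polynomial factors (Theorem~\ref{thm:decompose_multiple_func}) to produce a single polynomial factor $\mathcal{B}$, defined by polynomials $P_1, \ldots, P_C$ of degree at most $s$ (the Cauchy-Schwarz complexity, $\le p$), of bounded complexity $C$ and of arbitrarily high rank, such that $\|h_i - \Ex(h_i \mid \mathcal{B})\|_{U^{s+1}}$ is tiny for every $i$. Writing $h_i = \Ex(h_i \mid \mathcal{B}) + r_i$ and expanding the product $\prod_i h_i(L_i(\X))$ into $2^m$ sub-averages, any term containing at least one remainder factor $r_j$ is controlled via the Cauchy-Schwarz complexity inequality (Lemma~\ref{lemma:cauchy-schwarz-lemma}) applied to that factor.

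It remains to bound the fully structured term $\Ex \prod_i \Ex(h_i \mid \mathcal{B})(L_i(\X))$. Writing each $\Ex(h_i \mid \mathcal{B})$ in the Fourier basis of $\mathcal{B}$ as $\sum_{\gamma \in \F_p^C} c_{i,\gamma} \exp(P_\gamma)$ with $P_\gamma := \sum_k \gamma(k) P_k$, this expands as
$$\sum_{\gamma_1, \ldots, \gamma_m \in \F_p^C} \Bigl(\prod_i c_{i,\gamma_i}\Bigr)\, \Ex \exp\Bigl(\sum_i P_{\gamma_i}(L_i(\X))\Bigr).$$
The high-rank hypothesis forces the inner expectation to be negligible unless $\sum_i P_{\gamma_i}(L_i(\X)) \equiv 0$ identically in $\X$, and in particular its top-degree homogeneous piece must vanish. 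Combined with the characterization of true complexity (Theorem~\ref{thm:true_compltexity_characterization})---linear independence of $L_1^{d+1}, \ldots, L_m^{d+1}$ over $\F_p$---this vanishing condition constrains the surviving tuples $(\gamma_1, \ldots, \gamma_m)$ so that the contribution from $h_1$ is restricted to coefficients $c_{1,\gamma_1}$ whose associated $P_{\gamma_1}$ has degree at most $d$; the hypothesis $\|h_1\|_{U^{d+1}} \le \delta$ then forces these coefficients to be small. Summing over the surviving tuples and using $|c_{i,\gamma}| \le 1$ together with the bounded complexity of $\mathcal{B}$ yields the required bound $\eta$.

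The main obstacle will be this last step: cleanly translating the hypothesis $\|h_1\|_{U^{d+1}} \le \delta$ into quantitative smallness of precisely the ``Fourier slice'' of $h_1$ selected by the Gowers--Wolf linear-independence condition, and making the algebraic vanishing analysis robust with respect to the choice of polynomial factor. This matching between a Gowers-norm obstruction and the algebraic independence of the tensor powers $L_i^{d+1}$ is the conceptual heart of the argument, and is precisely what permits the improvement from the Cauchy-Schwarz complexity bound of Lemma~\ref{lemma:cauchy-schwarz-lemma} to the sharper true-complexity bound stated in the theorem.
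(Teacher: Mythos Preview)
Your approach is correct and matches the paper's. The paper's entire proof is the one-line remark ``Theorem~\ref{thm:avg-approx-funcs} is an immediate corollary of the following theorem,'' referring to Theorem~\ref{thm:strong_independence}; the implicit ``immediate corollary'' is precisely your telescoping hybrid argument in the first paragraph, reducing to the case where one factor has small $U^{d+1}$ norm and the rest are bounded.

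What you do beyond the paper is sketch a proof of Theorem~\ref{thm:strong_independence} itself, which the paper imports wholesale from~\cite{ComplexityPaper}. Your sketch (decompose via a high-rank polynomial factor, kill remainder terms by Lemma~\ref{lemma:cauchy-schwarz-lemma}, Fourier-expand the structured parts, and use high rank plus the tensor-power independence of the $L_i^{d+1}$ to see that only low-degree Fourier coefficients of $h_1$ survive) is indeed the strategy carried out in~\cite{ComplexityPaper}. The ``main obstacle'' you flag---matching the Gowers-norm smallness of $h_1$ to the algebraic constraint selected by the linear independence of $L_1^{d+1},\ldots,L_m^{d+1}$---is real and is handled there by showing that for a high-rank factor the bias of $\sum_i P_{\gamma_i}(L_i(\X))$ is negligible unless, degree by degree, $\sum_i \gamma_i(k) L_i^{\deg P_k}=0$; since the $L_i^{d+1}$ (and hence the $L_i^e$ for $e\ge d+1$, via the tensor-factorization $L_i^e = L_i^{d+1}\otimes L_i^{e-d-1}$) are linearly independent, any surviving $\gamma_1$ is supported on polynomials of degree at most $d$, and those Fourier coefficients are controlled by $\|h_1\|_{U^{d+1}}$ exactly as you say.
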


\subsection{Equivalence of systems of linear forms}

Let $S$ be a subset of  $\F_p^k$. Here we think of $S$ as a linear structure, and we are interested in the number of affine copies of $S$ in a set $A \subseteq \F_p^n$. Pick a linear transformation $T \in \Lin(\F_p^k, \F_p^n)$ uniformly at random, and also independently and uniformly a random element $X \in \F_p^n$. Then the density of the structure $S$ in $A$ is the probability that $X + T(x)  \in A$ for every $x \in S$.  Note that a uniform random $T \in \Lin(\F_p^k, \F_p^n)$ can be defined by mapping each standard vector $\e_i \in \F_p^k$ uniformly and independently to a point in $\F_p^n$. Hence if the elements of a linear structure $S$ are $(\lambda_{i,1},\ldots,\lambda_{i,k}) \in \F_p^k$ where $1 \le i \le m$, then the density of the structure $S$ in $A$ is the probability that for every $1 \le i \le m$, $X + \sum_{j=1}^k \lambda_{i,j} Y_j \in A$, where $X,Y_1,\ldots,Y_m$ are i.i.d. random variables taking values uniformly in $\F_p^n$. So if we define the system of linear forms $\mathcal{L}=\{L_1,\ldots,L_m\}$ by letting
\begin{equation}
L_i = (1,\lambda_{i,1},\ldots,\lambda_{i,k}),
\end{equation}
for $1 \le i \le m$, then $t_\mathcal{L}(1_A)$ gives the density of the structure $S$ in a set $A \subseteq \F_p^n$.
Note that since for a fixed $c \in \F_p^n$, a uniform random variable $X$ has the same distribution as $X+c$, for this system of linear forms $\mathcal{L}$, the distribution of $(L_1(\X),\ldots,L_m(\X))$ is the same as the distribution of  $(L_1(\X)+c,\ldots,L_m(\X)+c)$.
\begin{definition}[Homogeneous linear forms]
\label{def:homogen}
A system of linear forms $\mathcal{L} = \{L_1,\ldots,L_m\}$  in $k$ variables is called \emph{homogeneous}, if for a uniform random variable $\X \in (\F_p^n)^k$, and every fixed $c \in \F_p^n$,  $(L_1(\X),\ldots,L_m(\X))$ has the same distribution as $(L_1(\X)+c,\ldots,L_m(\X)+c)$.
\end{definition}
For example, the systems of linear forms used in the definition of Gowers uniformity norms are homogeneous.

We wish to identify two systems of linear forms $\mathcal{L}_0 = \{L_1,\ldots,L_m\}$ in $k_0$ variables, and $\mathcal{L}_1=\{L_1',\ldots,L_m'\}$ in $k_1$ variables,  if $(L_1(\X),\ldots,L_m(\X))$ has the same distribution as $(L'_1(\Y),\ldots,L'_m(\Y))$ where $\X$ and $\Y$ are uniform random variables taking values in $(\F_p^n)^{k_0}$ and $(\F_p^n)^{k_1}$, respectively.  The distribution of $(L_1(\X),\ldots,L_m(\X))$ depends exactly on the linear dependencies between $L_1,\ldots,L_m$, and two systems of linear forms lead to the same distributions  if and only if they have the same linear dependencies.

\begin{definition}[Isomorphic linear forms]
\label{def:linearIsomorphic}
Two systems of linear forms $\mathcal{L}_0 $ and $\mathcal{L}_1$ are \emph{isomorphic} if and only if there exists
a \emph{bijection} from $\mathcal{L}_0$ to $\mathcal{L}_1$  that can be extended to an \emph{invertible} linear transformation $T: {\rm span}(\mathcal{L}_0) \rightarrow {\rm span}(\mathcal{L}_1)$.
\end{definition}

Note that if $\mathcal{L} = \{L_1,\ldots,L_m\}$ is a homogeneous system of linear forms, then $(L_1(\X),\ldots,L_m(\X))$ has the same distribution as $(L_1(\X)+Y,\ldots,L_m(\X)+Y)$, where $Y$ is a uniform random variable taking values in $\F_p^n$ and is independent of $\X$. We conclude with the following trivial observation.

\begin{observation}
\label{obs:canonical}
Every homogeneous system of linear forms is isomorphic to a system of linear forms in which there is a variable that appears with coefficient exactly one in every linear form.
\end{observation}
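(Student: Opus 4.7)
The plan is to reduce the homogeneity condition to a concrete linear-algebraic statement and then build the isomorphism by a suitable change of basis.

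First I would translate the homogeneity hypothesis into an algebraic condition on the coefficient vectors. Let $M \in \F_p^{m\times k}$ be the matrix whose $i$-th row is $L_i$, and let $V \subseteq \F_p^m$ be the column span of $M$. Since $\X$ is uniform, the distribution of $(L_1(\X),\ldots,L_m(\X))$ is uniform on the image subspace of the linear map $\x \mapsto (L_1(\x),\ldots,L_m(\x))$, which (working coordinate-by-coordinate in $\F_p^n$) is exactly $V^n \subseteq (\F_p^n)^m$. Shifting by $(c,\ldots,c)$ for every $c \in \F_p^n$ preserves this support if and only if, for every scalar $\gamma \in \F_p$, the vector $\gamma\cdot(1,1,\ldots,1) \in \F_p^m$ lies in $V$, i.e. $\mathbf{1}_m \in V$. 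So homogeneity is equivalent to the existence of a vector $a \in \F_p^k$ such that $L_i \cdot a = 1$ for every $i \in [m]$.

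Next I would use this $a$ to write down the isomorphism explicitly. Since $L_i \cdot a = 1 \neq 0$, the vector $a$ is nonzero, hence extends to a basis of $\F_p^k$. Let $T \in \F_p^{k\times k}$ be an invertible matrix whose first column is $a$, and define $L_i' := L_i T \in \F_p^k$. Then the first coordinate of $L_i'$ is
\[
(L_i')_1 \;=\; L_i \cdot (\text{first column of } T) \;=\; L_i \cdot a \;=\; 1,
\]
so in the system $\mathcal{L}' = \{L_1',\ldots,L_m'\}$ the first variable appears with coefficient $1$ in every form.

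Finally I would verify that $\mathcal{L}$ and $\mathcal{L}'$ are isomorphic in the sense of Definition~\ref{def:linearIsomorphic}. The right-multiplication map $\Phi(v) := vT$ is an invertible linear transformation $\F_p^k \to \F_p^k$; its restriction to $\span(\mathcal{L})$ is an invertible linear map onto $\span(\mathcal{L}')$, and it sends each $L_i$ to $L_i'$. Invertibility of $T$ also ensures the $L_i'$ are distinct, so $\Phi$ restricts to a bijection $\mathcal{L} \to \mathcal{L}'$ extending to an invertible linear transformation on the spans, witnessing the isomorphism. There is no real obstacle here; the entire content of the observation is that homogeneity is exactly the statement that the all-ones vector lies in the column span of the coefficient matrix, after which the change of basis is immediate.
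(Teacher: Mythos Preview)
Your proof is correct. It differs from the paper's argument, which is the one-line remark immediately preceding the observation: since $\mathcal{L}$ is homogeneous, $(L_1(\X),\ldots,L_m(\X))$ has the same distribution as $(L_1(\X)+Y,\ldots,L_m(\X)+Y)$ for an independent uniform $Y$, and the latter is the system $\{L_i'\}$ in $k+1$ variables with $L_i'(x_1,\ldots,x_k,y)=L_i(x_1,\ldots,x_k)+y$, in which the new variable $y$ appears with coefficient $1$ in every form; equal distributions means isomorphic systems, and the observation follows. In other words, the paper simply \emph{adds} a fresh variable and invokes the distributional characterization of isomorphism, whereas you \emph{locate} such a variable inside the existing coordinate system via the change of basis $T$. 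Your route extracts a bit more---it identifies homogeneity with the condition that the all-ones vector lies in the column span of $M$, and it produces the isomorphism explicitly without increasing the number of variables---at the cost of a short linear-algebra computation; the paper's route is terser but passes to $k+1$ variables.
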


Consider a system of linear forms $\mathcal{L}$ in $\F_p^k$, and note that for every $f:\F_p^n \to \C$,
\begin{equation}
t_\mathcal{L}(f) = \Ex \left[\prod_{L \in \mathcal{L}} f(T(L))\right],
\end{equation}
where $T$ is a random variable taking values uniformly in  $\Lin(\span(\mathcal{L}), \F_p^n)$. Suppose that there exists a non-trivial subset $S \subseteq \mathcal{L}$ such that
$${\rm span}(S) \cap  {\rm span}(\mathcal{L} \setminus S) = \{\vec{0}\}.$$
Then for every $f:\F_p^n \to \C$ we have
$$t_\mathcal{L}(f) = t_S (f) t_{\mathcal{L}\setminus S}(f).$$
This leads to the following definition.
\begin{definition}[Connected linear forms]
\label{def:Connected}
A system of linear forms $\mathcal{L}$ is called \emph{connected}, if for every non-trivial subset $S \subsetneq \mathcal{L}$, we have
$${\rm span}(S) \cap  {\rm span}(\mathcal{L}\setminus S) \neq \{\vec{0}\}.$$
\end{definition}

\section{Higher-order Fourier analysis}
\label{sec:highFourier}
The characters of $\F_p^n$ are exponentials of linear polynomials; that is for $\alpha \in \F_p^n$, the corresponding character is defined
as $\chi_\alpha(x) = \exp(\sum_{i=1}^n \alpha_i x_i)$. In higher-order Fourier analysis, the linear polynomials $\sum \alpha_i x_i$ are replaced by higher degree polynomials, and one would like to express a function $f:\F_p^n \to \C$ as a linear combination of the functions $\exp(P)$, where $P$ is a polynomial of a certain degree.

\emph{Polynomials:} Consider a function $\ff: \F_p^n \rightarrow \F_p$. For an element $y \in \F_p^n$, define the \emph{derivative} of $\ff$ in the direction $y$ as $\Delta_y \ff(x)= \ff(x+y)-\ff(x)$. Inductively we define $\Delta_{y_1,\ldots,y_k}\ff=\Delta_{y_k}(\Delta_{y_1,\ldots,y_{k-1}} \ff)$, for directions $y_1,\ldots,y_k \in \F_p^n$. We say that $\ff$ is a \emph{polynomial of degree at most $d$}, if for every $y_1,\ldots,y_{d+1} \in \F_p$, we have $\Delta_{y_1,\ldots,y_{d+1}} \ff \equiv 0$. The set of polynomials of degree at most $d$ is a vector space over $\F_p$ which we denote by ${\rm Poly}_d(\F_p^n)$. It is easy to see that the set of \emph{monomials} $x_1^{i_1} \ldots x_n^{i_n}$ where $0 \le i_1,\ldots,i_n < p$ and $\sum_{j=1}^n i_j \le d$ form a basis for ${\rm Poly}_d(\F_p^n)$. So every polynomial $P \in {\rm Poly}_d(\F_p^n)$ is of the from $P(x):=\sum c_{i_1,\ldots,i_n} x_1^{i_1} \ldots x_n^{i_n}$, where the sum is over all $1 \le i_1,\ldots,i_n < p$ with $\sum_{j=1}^n i_j \le d$, and $c_{i_1,\ldots,i_n}$ are elements of $\F_p$. The \emph{degree} of a polynomial $P:\F_p^n \rightarrow \F_p$, denoted by $\deg(P)$, is the smallest $d$ such that $P \in {\rm Poly}_d(\F_p^n)$. A polynomial  $P$ is called \emph{homogeneous}, if all monomials with non-zero coefficients in the expansion of $P$ are of degree exactly $\deg(P)$.

For a function $f:\F_p^n \rightarrow \C$, and a direction $y \in \F_p^n$ define the \emph{multiplicative derivative} of $f$ in the direction of $y$ as $\tilde\Delta_{y}f(x)=f(x+y)\overline{f(x)}$. Inductively we define $\tilde{\Delta}_{y_1,\ldots,y_k}f=\tilde{\Delta}_{y_k}(\tilde{\Delta}_{y_1,\ldots,y_{k-1}} f)$, for directions $y_1,\ldots,y_k \in \F_p^n$. Note that
for every $\ff: \mathbb{F}_p^n \rightarrow \mathbb{F}_p$, we have
$$\tilde{\Delta}_y \exp(\ff) = \exp(\Delta_y \ff).$$
This shows that if $\ff \in {\rm Poly}_d(\F_p^n)$, then
\begin{equation}
\label{eq:polyphase}
\tilde{\Delta}_y \exp(\ff)=1.
\end{equation}
Note that one can express Gowers uniformity norms using multiplicative derivatives:
$$\|f\|_{U^k}^{2^k} = \Ex\left[ \tilde{\Delta}_{Y_1,\ldots,Y_k}f(X)\right],$$
where $X,Y_1,\ldots,Y_k$ are independent random variables taking values in $\F_p^n$ uniformly. This together with (\ref{eq:polyphase}) show that every polynomial $\gg$ of degree at most $d$ satisfies $\|\exp(g)\|_{U^{d+1}}=1$.

Many basic properties of  Gowers uniformity norms are implied by the Gowers-Cauchy-Schwarz inequality which is first proved in~\cite{MR1844079} by iterated applications of the classical Cauchy-Schwarz inequality.
\begin{lemma}[Gowers-Cauchy-Schwarz]
\label{lem:GowersCauchyShwarz}
Let $G$ be a finite Abelian group, and consider a family of functions $f_S:G \rightarrow \C$, where $S \subseteq [k]$. Then
\begin{equation}
\left| \Ex \left[\prod_{S \subseteq [k]} \mathcal{C}^{k-|S|} f_S(X + \sum_{i \in S} Y_i) \right]\right|\le \prod_{S \subseteq [k]} \|f_S\|_{U^k},
\end{equation}
where $X,Y_1,\ldots,Y_k$ are independent random variables taking values in $G$ uniformly at random.
\end{lemma}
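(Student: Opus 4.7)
The plan is to prove the Gowers–Cauchy–Schwarz inequality by iterated applications of the classical Cauchy–Schwarz inequality, one iteration for each direction $Y_i$. Denote the quantity on the left by
$$\Lambda\bigl((f_S)_{S \subseteq [k]}\bigr) := \Ex\Bigl[\prod_{S \subseteq [k]} \mathcal{C}^{k-|S|} f_S\bigl(X + \sum_{i \in S} Y_i\bigr)\Bigr].$$
This is a $2^k$-linear form in the functions $f_S$. The base case $k=1$ is an equality: writing $X_0=X$, $X_1=X+Y_1$, the random variables $X_0,X_1$ are independent and uniform on $G$, so the expectation factors as $\Ex[f_\emptyset] \cdot \Ex[f_{\{1\}}] \cdot (\text{conjugation})$, which has absolute value $\|f_\emptyset\|_{U^1}\|f_{\{1\}}\|_{U^1}$.

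The core step is a \emph{splitting inequality} in the $Y_k$-direction:
$$\bigl|\Lambda\bigl((f_S)_S\bigr)\bigr|^{2} \le \Lambda\bigl((g_S)_S\bigr) \cdot \Lambda\bigl((h_S)_S\bigr),$$
where $g_S := f_{S \setminus \{k\}}$ and $h_S := f_{S \cup \{k\}}$ for every $S \subseteq [k]$; note that in each of the two forms on the right, positions $S$ and $S \triangle \{k\}$ now carry \emph{the same} function. To prove this inequality, I would make the substitution $X_0 = X$, $X_1 = X + Y_k$ (which are independent and uniform as $X,Y_k$ vary), observe that after conditioning on $Y_1,\ldots,Y_{k-1}$ the factors indexed by $S \not\ni k$ depend only on $X_0$ and the factors indexed by $S \ni k$ depend only on $X_1$, so the inner expectation factors as $\phi(Y_1,\ldots,Y_{k-1}) \cdot \psi(Y_1,\ldots,Y_{k-1})$. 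Applying the classical Cauchy–Schwarz inequality over the remaining variables gives $|\Lambda|^2 \le \Ex[|\phi|^2]\,\Ex[|\psi|^2]$; expanding $|\phi|^2$ by introducing an independent copy $X'$ of $X$ and setting $Y_k' := X' - X$ (uniform and independent of everything else), one sees, after a routine check of the conjugation parities (using $\mathcal{C}^2 = \mathrm{id}$), that $\Ex[|\phi|^2] = \Lambda((g_S)_S)$; similarly $\Ex[|\psi|^2] = \Lambda((h_S)_S)$.

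Iterating the splitting inequality in the directions $Y_k, Y_{k-1}, \ldots, Y_1$ gives, after $k$ steps,
$$\bigl|\Lambda\bigl((f_S)_S\bigr)\bigr|^{2^k} \le \prod_{S \subseteq [k]} \Lambda\bigl((f_S,\ldots,f_S)\bigr) = \prod_{S \subseteq [k]} \|f_S\|_{U^k}^{2^k},$$
where the last equality uses the definition of the Gowers norm together with the fact that $\Lambda$ with all functions equal to $f_S$ is exactly $\|f_S\|_{U^k}^{2^k}$. Taking $2^k$-th roots yields the inequality. The main (though not deep) obstacle is the bookkeeping: after each CS splitting in direction $i$, the functions at positions $S$ and $S \triangle \{i\}$ are forced to coincide; I would verify by induction on the number of splittings performed that after $j$ steps in directions $k, k-1, \ldots, k-j+1$, the resulting $2^j$ forms are indexed by $T \subseteq \{k-j+1,\ldots,k\}$ and each carries, at every position $S$, the function $f_{(S \setminus \{k-j+1,\ldots,k\}) \cup T}$; after $j=k$ steps this shows the $2^k$ forms obtained are exactly $\Lambda((f_S,\ldots,f_S))$, one for each $S \subseteq [k]$, closing the induction.
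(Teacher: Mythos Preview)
Your proposal is correct and follows exactly the approach the paper indicates: the paper does not give its own proof of this lemma but states that it ``is first proved in~[Gowers] by iterated applications of the classical Cauchy--Schwarz inequality,'' which is precisely your splitting-and-iterating argument. The bookkeeping you outline (each split in direction $i$ forces positions $S$ and $S\triangle\{i\}$ to carry the same function, and the intermediate forms are nonnegative since each equals an $\Ex[|\cdot|^2]$) is the standard way to organize this, and your conjugation-parity check via $\mathcal{C}^2=\mathrm{id}$ is the right observation.
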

A simple application of Lemma~\ref{lem:GowersCauchyShwarz} is the following. Consider an arbitrary function $f:G \rightarrow \C$.
Setting $f_\emptyset :=f$ and $f_S:=1$ for every $S \neq \emptyset$ in Lemma~\ref{lem:GowersCauchyShwarz}, we obtain
\begin{equation}
\label{eq:avgVsGowers}
|\Ex[f(X)] | \le \|f\|_{U^k}.
\end{equation}
Equation~(\ref{eq:avgVsGowers}) in particular shows that if $f,g:\F_p^n \to \C$,
then one can bound their inner product  with Gowers uniformity norms of $f\overline{g}$:
\begin{equation}
\label{eq:GowersCorrelation}
|\ip{f,g}| \le \|f\overline{g}\|_{U^k}.
\end{equation}
Consider an arbitrary $f:\F_p^n \to \C$ and a polynomial $\gg$ of degree at most $d$. Let $g=\exp(\gg)$. Then for every $y_1,\ldots,y_{d+1} \in \F_p^n$, we have
$$\tilde{\Delta}_{y_1,\ldots,y_{d+1}} (f g) = (\tilde{\Delta}_{y_1,\ldots,y_{d+1}} f) (\tilde{\Delta}_{y_1,\ldots,y_{d+1}}g) =\tilde{\Delta}_{y_1,\ldots,y_{d+1}} f,$$
which in turn implies that $\|fg\|_{U^{d+1}}=\|f\|_{U^{d+1}}$. Combining this with (\ref{eq:GowersCorrelation}), we conclude that the correlation of $f$ with any polynomial of degree at most $d$ is a lower-bound for $\|f\|_{U^{d+1}}$:
\begin{equation}
\label{eq:GowersCorrelationII}
\|f\|_{u(\mathrm{Poly}_d)} \le \|f\|_{U^{d+1}}.
\end{equation}
This provides us with a ``direct theorem'' for the $U^{d+1}$ norm: If $\|f\|_{u(\mathrm{Poly}_d)} \ge \delta$, then $\|f\|_{U^{d+1}} \ge \delta$. Recently Bergelson, Tao, and Ziegler~\cite{bergelson-2009,tao-2008} established the corresponding inverse theorem in the high characteristic case.
\begin{theorem}[\cite{bergelson-2009,tao-2008}]
\label{thm:inverse}
If $1\le d<p$, then there exists a function $\delta:(0,1] \rightarrow (0,1]$ such that for every $f:\F_p^n \rightarrow \mathbb{D}$, and $\eps>0$,
\begin{itemize}
\item \emph{Direct theorem:} If $\|f\|_{u(\mathrm{Poly}_d)} \ge \eps$, then $\|f\|_{U^{d+1}} \ge \eps$.
\item \emph{Inverse theorem:} If $\|f\|_{U^{d+1}} \ge \eps$, then $\|f\|_{u(\mathrm{Poly}_d)} \ge \delta(\eps).$
\end{itemize}
\end{theorem}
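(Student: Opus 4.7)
\medskip

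\noindent\textbf{Proof proposal for Theorem~\ref{thm:inverse}.}

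\emph{Direct theorem.} This part I would extract directly from the machinery already set up in the section. Given a polynomial $\gg \in \mathrm{Poly}_d(\F_p^n)$, set $g := \exp(\gg)$. Since both $\gg$ and $-\gg$ have degree at most $d$, equation~\eqref{eq:polyphase} applied to $g$ and to $\bar g = \exp(-\gg)$ gives $\tilde\Delta_{y_1,\ldots,y_{d+1}}(g) = \tilde\Delta_{y_1,\ldots,y_{d+1}}(\bar g) = 1$ for all directions. Because multiplicative derivatives satisfy the Leibniz rule $\tilde\Delta_y(fh) = (\tilde\Delta_y f)(\tilde\Delta_y h)$, iterating $d+1$ times yields
\[
\tilde\Delta_{y_1,\ldots,y_{d+1}}(f\bar g) \;=\; \tilde\Delta_{y_1,\ldots,y_{d+1}}(f),
\]
and averaging over $y_1,\ldots,y_{d+1}$ and a uniform $x$ gives $\|f\bar g\|_{U^{d+1}} = \|f\|_{U^{d+1}}$. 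Plugging this into~\eqref{eq:GowersCorrelation}, $|\ip{f,g}| \le \|f\bar g\|_{U^{d+1}} = \|f\|_{U^{d+1}}$. Taking the supremum over $\gg \in \mathrm{Poly}_d(\F_p^n)$ yields $\|f\|_{u(\mathrm{Poly}_d)} \le \|f\|_{U^{d+1}}$, which is the direct statement.

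\emph{Inverse theorem.} This is the deep half and is the content of~\cite{bergelson-2009, tao-2008}; the plan is to invoke it rather than reprove it, but let me outline the natural route. One argues by induction on $d$. The base case $d=1$ is the standard Fourier-analytic fact that a large $U^2$-norm forces a large Fourier coefficient, hence correlation with a linear (degree $1$) phase. For the inductive step, starting from $\|f\|_{U^{d+1}}\ge \eps$, one uses the identity
\[
\|f\|_{U^{d+1}}^{2^{d+1}} \;=\; \Ex_{y}\left[\|\tilde\Delta_y f\|_{U^d}^{2^d}\right]
\]
to deduce that, for a density-$\gtrsim \eps^{O(1)}$ set of directions $y$, the derivative $\tilde\Delta_y f$ has $U^d$-norm at least $\eps^{O(1)}$. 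By induction each such $\tilde\Delta_y f$ correlates with a polynomial phase $\exp(P_y)$ of degree at most $d-1$. The heart of the argument, and the step I expect to be the main obstacle, is the \emph{symmetrization / cocycle} analysis showing that the family $\{P_y\}$ can be stitched together (after passing to a large sub-family and modifying by lower-order terms) into the derivative in direction $y$ of a single classical polynomial $\gg$ of degree $\le d$. This is where the hypothesis $d<p$ enters crucially: in low characteristic the symmetrization produces only a non-classical polynomial, whereas $d<p$ lets one divide by $d!$ and recover an honest element of $\mathrm{Poly}_d(\F_p^n)$. Once $\gg$ is obtained, a final averaging argument converts the derivative correlation into a direct correlation between $f$ and $\exp(\gg)$, yielding $\|f\|_{u(\mathrm{Poly}_d)} \ge \delta(\eps)$ for an explicit (albeit very weak) function $\delta$.

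\emph{Why this suffices.} Combining the two halves, the qualitative equivalence
$
\lim_n \|\exp(\ff_n)\|_{U^{d+1}} = 0 \iff \lim_n \|\exp(\ff_n)\|_{u(\mathrm{Poly}_d)} = 0
$
follows immediately, which is the form in which the theorem is used in the remainder of the paper. The only quantitative loss is in $\delta(\eps)$, which is unavoidable with the current state of the art but inessential for our applications.
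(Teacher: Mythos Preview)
Your proposal is correct and matches the paper's treatment: the direct theorem is exactly the argument the paper gives in the paragraph leading up to Theorem~\ref{thm:inverse} (using $\tilde\Delta_{y_1,\ldots,y_{d+1}}(f\bar g)=\tilde\Delta_{y_1,\ldots,y_{d+1}}f$ to get $\|f\bar g\|_{U^{d+1}}=\|f\|_{U^{d+1}}$ and then invoking~\eqref{eq:GowersCorrelation}), while the inverse theorem is not proved in the paper at all but imported from~\cite{bergelson-2009,tao-2008}, which is precisely what you do. Your sketch of the inductive mechanism behind the inverse theorem is accurate and goes beyond what the paper itself supplies.
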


An important application of the inverse theorems is that they imply  ``decomposition theorems''. Roughly speaking these results say that under appropriate conditions, a function $f$ can be decomposed as $f_1+f_2$, where $f_1$ is ``structured'' in some sense that enables one to handle it easily, while $f_2$ is ``quasi-random'' meaning that it shares certain properties with a random function, and can be discarded as random noise (see \cite{GowersSurvey}). In the following we will discuss decomposition theorems that follow from Theorem~\ref{thm:inverse}, but first we need to define the polynomial factors on $\F_p^n$.
\begin{definition}[Polynomial factors~\cite{GreenTaoFiniteFields}]
\label{def:polyFactor}
Let $p$ be a fixed prime. Let $P_1,\ldots,P_C \in {\rm Poly}_d(\F_p^n)$. The sigma-algebra on $\F_p^n$ whose atoms are $\{x \in \F_p^n : P_1(x)=a(1),\ldots,P_C(x)=a(C)\}$ for all $a \in \F_p^C$ is called a \emph{polynomial factor of degree at most $d$ and complexity at most $C$}.
\end{definition}
Let $\mathcal{B}$ be a polynomial factor defined by $P_1,\ldots,P_C$. For $f:\F_p^n \to \C$, the conditional expectation of $f$ with respect to $\mathcal{B}$, denoted $\E(f|\mathcal{B}):\F_p^n \to \C$, is
$$
\E(f|\mathcal{B})(x) = \Ex_{\{y \in \F_p^n: P_1(y)=P_1(x),\ldots,P_C(y)=P_C(x)\}}[f(y)].
$$
That is, $\E(f|\mathcal{B})$ is constant on every atom of $\mathcal{B}$, and this constant is the average value that $f$ attains on this atom. A function $g:\F_p^n \to \C$ is $\mathcal{B}$-measurable, if it is constant on every atom of $\mathcal{B}$. Equivalently, we can write $g$ as $g(x)=\Gamma(P_1(x),\ldots,P_C(x))$ for some function $\Gamma:\F_p^C \to \C$. The following claim is quite useful, although its proof is immediate and holds for every sigma-algebra.
\begin{observation}
\label{obs:project_factor}
Let $f:\F_p^n \to \C$. Let $\mathcal{B}$ be a polynomial factor defined by polynomials $P_1,\ldots,P_C$.
Let $g:\F_p^n \to \C$ be any $\mathcal{B}$-measurable function. Then
$$
\ip{f,g} = \ip{\E(f|\mathcal{B}),g}.
$$
\end{observation}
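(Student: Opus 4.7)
The plan is to prove this by conditioning on the atoms of $\mathcal{B}$; it is the standard tower/adjointness property of conditional expectation applied to a finite partition, and the polynomial nature of the factor plays no role at all. First I would expand
$$
\ip{f,g} = \Ex_{X \in \F_p^n}\bigl[f(X)\overline{g(X)}\bigr] = \sum_{a \in \F_p^C} \Pr[X \in A_a] \cdot \Ex\bigl[f(X)\overline{g(X)} \;\big|\; X \in A_a\bigr],
$$
where $A_a = \{x : P_1(x)=a(1),\ldots,P_C(x)=a(C)\}$ is the atom indexed by $a$.

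Next I would use $\mathcal{B}$-measurability of $g$: on each atom $A_a$, the function $g$ takes a single value, call it $c_a$, so $\overline{g(X)}$ can be pulled outside the conditional expectation, yielding $\Ex[f(X)\overline{g(X)} \mid X \in A_a] = \overline{c_a}\, \Ex[f(X) \mid X \in A_a]$. By the definition of $\E(f \mid \mathcal{B})$, the latter conditional expectation equals the constant value of $\E(f \mid \mathcal{B})$ on $A_a$. Reassembling the sum then gives
$$
\ip{f,g} = \Ex_{X}\bigl[\E(f \mid \mathcal{B})(X)\, \overline{g(X)}\bigr] = \ip{\E(f \mid \mathcal{B}), g},
$$
as required.

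I do not anticipate any genuine obstacle; the only points worth writing out carefully are (i) that atoms of positive probability partition $\F_p^n$, so the law of total expectation applies cleanly, and (ii) that empty atoms contribute zero and can be ignored. The whole proof is a two-line calculation, and in fact the identity holds for arbitrary sigma-algebras on any finite probability space, so no feature specific to polynomial factors needs to be invoked.
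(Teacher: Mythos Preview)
Your proposal is correct and matches the paper's approach: the paper itself does not spell out a proof, noting only that it ``is immediate and holds for every sigma-algebra,'' which is exactly the tower/adjointness argument you give.
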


\begin{definition}[Bias]
The bias of a polynomial $P \in {\rm Poly}_d(\F_p^n)$ is defined as
$$
\bias(P) := \bias(\exp(P)) = |\Ex_{X \in \F_p^n}[\exp(P(X))]|.
$$
\end{definition}
We shall refine the set of polynomials $\{P_1,\ldots,P_t\}$ to obtain a new set of polynomials with the desired ``approximate orthogonality'' properties. This will be achieved through the notion of the \emph{rank} of a set of polynomials.
\begin{definition}[Rank~\cite{GreenTaoFiniteFields}]
\label{def:rank}
We say a set of polynomials $\P=\{P_1,\ldots,P_t\}$ is of rank greater than $r$, and denote this by $\rank(\P) > r$, if the following holds. For any non-zero $\alpha=(\alpha_1,\ldots,\alpha_t) \in \F_p^t$, define $P_{\alpha}(x) := \sum_{j=1}^t \alpha_j P_j(x)$. For $d := \max \{\deg(P_j):\alpha_j \ne 0\}$, the polynomial $P_{\alpha}$ cannot be expressed as a function of $r$ polynomials of degree at most $d-1$. More precisely, it is not possible to find $r$ polynomials $Q_1,\ldots,Q_r$ of degree at most $d-1$, and a function $\Gamma:\F_p^r \to \F_p$ such that
$$
P(x)=\Gamma(Q_1(x),\ldots,Q_r(x)).
$$
The rank of a single polynomial $P$ is defined to be $\rank(\{P\})$.
\end{definition}
The \emph{rank} of a polynomial factor is the rank of the set of polynomials defining it. 
{
\begin{remark}\label{rankrandom}
Consider integers $d,r \ge 1$, and let $P$ be a randomly and uniformly chosen homogeneous polynomial of degree $d$ over $\F_p^n$. There are at least $p^{{n \choose d}}$ such polynomials while the number of degree $d$ polynomials of rank at most $r$ is bounded from above by $p^{p^r} p^{r n^{d-1}}$. It follows that for sufficiently large $n$, with high probability $\rank(P)>r$. 
\end{remark}}

%
%
The following theorem due to Kaufman and Lovett~\cite{kaufman-lovett}  connects the notion of the rank to the bias of a polynomial. It was proved first by Green and Tao~\cite{GreenTaoFiniteFields} for the case $d<p$, and then extended in~\cite{kaufman-lovett} for the general case.
\begin{theorem}[Regularity~\cite{kaufman-lovett}]
\label{thm:regularity}
Fix $p$ prime and $d \ge 1$. There exists a function $r_{p,d}:(0,1] \rightarrow \mathbb{N}$ such that the following holds. If $P:\F_p^n \rightarrow \F_p$ is a polynomial of degree at most $d$ with $\bias(P) \ge \eps$, then $\rank(P) \le r_{p,d}(\eps)$.
\end{theorem}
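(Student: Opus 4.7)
The plan is to proceed by induction on $d$. The base case $d=1$ is immediate: non-constant affine polynomials have bias zero, so the hypothesis $\bias(P)\ge\eps$ forces $P$ to be constant, hence of rank zero.

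For the inductive step, fix $d\ge 2$, assume the theorem for all smaller degrees, and let $P$ have degree $d$ with $\bias(P)\ge\eps$. Write $P=P_d+R$, where $P_d$ is the top-degree homogeneous part and $\deg(R)\le d-1$. A useful observation is that since $R$ already has degree at most $d-1$, it may be used directly as one of the polynomials in a rank representation of $P$; thus the task reduces to bounding the rank of $P_d$ alone, for then $P$ can be written as a function of the polynomials witnessing $\rank(P_d)\le r$ together with $R$.

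The first move is the standard iterated Cauchy--Schwarz bound $\bias(P)^{2^k}\le\|\exp(P)\|_{U^k}^{2^k}$, which for $k=d$ becomes
$$
\eps^{2^d} \;\le\; \Ex_{Y_1,\ldots,Y_d}\bigl[\exp(\Delta_{Y_1,\ldots,Y_d}P)\bigr].
$$
Since $\Delta_{Y_1,\ldots,Y_d}P$ depends only on $P_d$ and is in fact constant in $x$, the right-hand side is exactly the bias of the symmetric multilinear form associated to $P_d$. In the regime $d<p$ (the original Green--Tao case) one has $\Delta_{Y_1,\ldots,Y_d}P=d!\cdot M(Y_1,\ldots,Y_d)$ with $P_d(x)=M(x,\ldots,x)$, and $d!\not\equiv 0 \pmod p$ allows this bias information to transfer directly to $M$. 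A nested induction on the multilinearity --- fixing all but one argument of $M$, treating it as a linear form there, and applying elementary Fourier analysis --- shows that a biased multilinear form has bounded partition rank, which translates into bounded polynomial rank for $P_d$, and combining with $R$ closes the outer induction.

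For the general case $d\ge p$ covered by Kaufman--Lovett, $d!$ may vanish modulo $p$ and the multilinear-form route breaks down. One instead starts from the one-step bound $\bias(P)^2\le\Ex_y \bias(\Delta_y P)$, which guarantees that at least an $\eps^2/2$ fraction of directions $y$ satisfy $\bias(\Delta_y P)\ge\eps^2/2$. The inductive hypothesis applies to each such $\Delta_y P$ of degree $\le d-1$, furnishing a rank-bounded representation; the central difficulty is then to force the lower-degree polynomials that appear in these representations to come from a single bounded list independent of $y$, which is achieved by a regularization plus pigeonhole/Ramsey-type procedure, after which a telescoping/averaging argument over the good directions reconstructs a bounded-rank representation of $P_d$ itself. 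The principal obstacle in both regimes is precisely this reconstruction step: the Cauchy--Schwarz preparation is routine, but the algebraic passage from bias of $P$ (or of its top-degree multilinear form) to a concrete bounded-rank representation is where the prime hypothesis enters and where the substance of the Kaufman--Lovett extension over Green--Tao does its work.
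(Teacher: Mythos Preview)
The paper does not prove this theorem at all: it is quoted as an external result, first established by Green and Tao for $d<p$ and extended to all $d$ by Kaufman and Lovett, and is used throughout the paper as a black box. There is therefore no ``paper's own proof'' to compare your proposal against.

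That said, your sketch is broadly faithful to the structure of the original arguments. The reduction $\rank(P)\le\rank(P_d)+1$ via the lower-degree remainder $R$ is correct under the paper's definition of rank. The Cauchy--Schwarz step $\bias(P)\le\|\exp(P)\|_{U^d}$ and the identification of $\|\exp(P)\|_{U^d}^{2^d}$ with the bias of the $d$-th iterated derivative (a constant, hence a symmetric multilinear form in the directions) is exactly how Green--Tao begin. For $d<p$ your use of $d!\not\equiv 0\pmod p$ to pass to the polarization is the crucial point of that regime, and the passage from biased multilinear form to bounded partition rank (and then to bounded polynomial rank of $P_d$) is indeed the heart of their argument, though you have left this step at the level of a slogan rather than an argument.

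For $d\ge p$ your description is accurate in spirit but considerably understates the difficulty. The one-step derivative bound and the inductive hypothesis on each $\Delta_y P$ are the correct starting moves, but the ``regularization plus pigeonhole/Ramsey-type procedure'' you allude to is the entire content of the Kaufman--Lovett paper; it is not a routine bookkeeping step, and your sketch gives no indication of how the bounded lists of lower-degree polynomials for varying $y$ are made to cohere. If you intend this as a proof rather than a roadmap, that step needs to be filled in; as written it is a plan, not a proof.
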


\begin{remark}\label{rem:HighRankSmallNorm}
It follows from Theorems~\ref{thm:inverse} and~\ref{thm:regularity} that if a polynomial $P$ of degree $k$ is of sufficiently large rank, then $\|\exp(P)\|_{U^{k}} \le \epsilon$. Indeed, otherwise by Theorem~\ref{thm:inverse} there exists a polynomial $Q$ of degree at most $k-1$ such that $\bias(P-Q) \ge \delta(\epsilon) >0$. Theorem~\ref{thm:regularity} then provides a bound on the rank of $P-Q$. Since $\deg(Q)<\deg(P)$ this implies a bound on the rank of $P$. Also in the opposite direction, if the rank of $P$ is $r$, then there is a polynomial $Q$ of degree at most $k-1$ such that $\bias(P-Q) \ge \epsilon(r)>0$. Now the direct part of Theorem~\ref{thm:inverse} implies that $\|\exp(P)\|_{U^{k}} \ge \epsilon(r)$.
\end{remark}

The following theorem goes back to the work of Green and Tao~\cite{MR2415379} (See also~\cite{ComplexityPaper}).
\begin{theorem}[Strong Decomposition Theorem - multiple functions]
\label{thm:decompose_multiple_func}
Let $p$ be a fixed prime, $0\le d<p$ and $m$ be integers, and $\delta>0$. Let $\mathcal{B}_0$ be a polynomial factor of degree at most $d$ and complexity $C_0$, and let $r:\N \rightarrow \N$ be an arbitrary growth function, and suppose that $n>n_0(p,d,\delta,m,r(\cdot),C_0)$ is sufficiently large. Given every set of functions $f_1,\ldots,f_m:\F_p^n \rightarrow \mathbb{D}$, there exists a decomposition of each $f_i$ as
$$
f_i=h_{i}+h'_{i},
$$
such that
$$\mbox{$h_{i}:=\E(f_i|\mathcal{B})$, \qquad $\|h'_{i}\|_{U^{d+1}} \le \delta$,}$$
where  $\mathcal{B}$ is a polynomial factor that refines $\mathcal{B}_0$ and it is of degree at most $d$, complexity  $C \le C_{\max}$ (where $C_{\max}$ depends on $p,d,\delta,m,r(\cdot),C_0$) and rank at least $r(C)$. Furthermore, if $\mathcal{B}_0$ is of rank at least $r_0(p,d,\delta,m,r(\cdot),C_0)$ then we can assume that $\mathcal{B}$ contains all the polynomials in $\mathcal{B}_0$.
\end{theorem}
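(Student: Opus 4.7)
The plan is a two-level iterative construction: an outer \emph{energy-increment} loop powered by the inverse Gowers theorem (Theorem~\ref{thm:inverse}), and an inner \emph{regularization} loop that enforces the rank condition, followed by a homogenization step. Define the joint energy
\[
\Phi(\mathcal{B}) := \sum_{i=1}^m \|\E(f_i|\mathcal{B})\|_2^2,
\]
and note that $0 \le \Phi(\mathcal{B}) \le m$ since each $f_i$ is $\D$-valued, and that $\Phi$ is monotone under refinement of $\mathcal{B}$.

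At each outer step we hold a polynomial factor $\mathcal{B}$ of degree at most $d$ and consider the residuals $h'_i := f_i - \E(f_i|\mathcal{B})$. If every $\|h'_i\|_{U^{d+1}} \le \delta$ we halt. Otherwise, for the offending index $i$, the inverse theorem yields a polynomial $P$ of degree $\le d$ with $|\ip{h'_i,\exp(P)}| \ge \delta_0$, where $\delta_0 = \delta_0(\delta,p,d)>0$ is supplied by Theorem~\ref{thm:inverse}. We adjoin $P$ to the generators of $\mathcal{B}$, obtaining a refinement $\mathcal{B}'$. Since $\exp(P)$ is $\mathcal{B}'$-measurable, Observation~\ref{obs:project_factor} and Pythagoras give
\[
\|\E(f_i|\mathcal{B}')\|_2^2 - \|\E(f_i|\mathcal{B})\|_2^2 \;\ge\; |\ip{h'_i,\exp(P)}|^2 \;\ge\; \delta_0^2,
\]
so the outer loop terminates after at most $m/\delta_0^2$ iterations.

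Inside each outer step we run an inner loop to ensure the current factor of complexity $C$ has rank at least $r(C)$. If it does not, Definition~\ref{def:rank} exhibits a nontrivial combination $P_\alpha = \Gamma(Q_1,\dots,Q_{r(C)})$ with each $\deg(Q_j) < d^\star := \max\{\deg(P_i): \alpha_i\neq 0\}$. We remove one $P_i$ of degree $d^\star$ from the generating set and adjoin the $Q_j$'s: the resulting factor still refines the prior one (so $\Phi$ does not decrease and the outer-loop progress is preserved), and the lexicographic degree-multiset strictly decreases, so the inner loop terminates. Interleaving inner with outer steps, and choosing the initial growth function to be iterated appropriately, produces a final bound $C \le C_{\max}(p,d,\delta,m,r(\cdot))$.

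Finally, for homogeneity, write each generator $P_i = \sum_{j=0}^{d} P_i^{(j)}$ as a sum of homogeneous parts and replace the generating set by the collection of all $P_i^{(j)}$. The resulting factor refines the previous one, hence preserves the $U^{d+1}$-bounds on the residuals; any nontrivial $\F_p$-linear combination of the $P_i^{(j)}$ that can be expressed via $r(C)$ polynomials of lower degree yields, after summing over $j$, a rank violation among the original $P_i$'s, so one more pass through the inner regularization restores the rank condition with homogeneous generators. The principal obstacle is the interaction between the steps: the inverse-theorem step raises complexity by one but can violate the rank requirement, while regularization restores rank at the cost of further complexity; the resolution is to choose $r$ fast-growing enough that the inner monovariant always terminates before the outer iteration count is exhausted, which is exactly why $C_{\max}$ depends on the entire growth function $r(\cdot)$ rather than only its initial values.
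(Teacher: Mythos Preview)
The paper does not itself prove Theorem~\ref{thm:decompose_multiple_func}; it is quoted from the companion paper~\cite{ComplexityPaper}. Your outline---an energy-increment outer loop driven by the inverse theorem, interleaved with a rank-regularization inner loop that terminates because the degree multiset strictly decreases in the (well-founded) lexicographic order on $\N^d$---is the standard argument and is essentially what is carried out there.

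There is, however, a real gap in your homogenization step. The claim that a rank violation among the homogeneous pieces $\{P_i^{(j)}\}$ can be ``summed over $j$'' to produce a rank violation among the original $\{P_i\}$ is false. Concretely, let $P$ be a homogeneous degree-$3$ polynomial of arbitrarily high rank and set $R := P + x_1 x_2$. Then the factor $\{R\}$ has rank at least $\rank(P)-1$, but its homogenization $\{P,\, x_1x_2\}$ has rank at most $2$, since the linear combination selecting $x_1x_2$ alone is a function of the two linear forms $x_1,x_2$; no ``summation over $j$'' recovers anything about $R$ from this. What actually rescues the argument is different: the inner regularization can be run \emph{homogeneously} throughout. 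When a rank violation $P_\alpha = \Gamma(Q_1,\ldots,Q_r)$ is found among homogeneous generators, replace the top-degree $P_i$ not by the $Q_j$ themselves but by all of their homogeneous parts $Q_j^{(e)}$. The new factor still refines the old one (each $Q_j$ is a function of its own homogeneous parts, hence so is $P_i$), all generators remain homogeneous, and the lexicographic degree-multiset still strictly decreases. So the correct fix is not ``one more pass'' appended after the fact, but running the entire regularization loop with homogeneous replacements from the outset; with that modification your sketch goes through.
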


\begin{remark}
In Theorem~\ref{thm:decompose_multiple_func} we can assume that  $\mathcal{B}$ is defined by homogeneous polynomials. Of course if we also want $\mathcal{B}$ to include to all the polynomials of $\mathcal{B}_0$, then we need to require that the polynomials in $\mathcal{B}_0$ are homogeneous as well. 
\end{remark}

The following proposition, proved in~\cite{ComplexityPaper}, is one of the key ingredients in the proof of our main result, Theorem~\ref{thm:F:main}.

\begin{proposition}[An invariance result~\cite{ComplexityPaper}]
\label{prop:invariance}
Let $p$ be a fixed prime.
There exists a function $r:\N \times (0,1] \to \N$ such that the following holds. Let $\P=\{P_1,\ldots,P_k\},\Q=\{Q_1,\ldots,Q_k\}$ be two collections of polynomials over $\F_p^n$ of degree at most $d<p$ such that $\deg(P_i)=\deg(Q_i)=d_i$ for every $1\le i \le k$. Let $\mathcal{L}=\{L_1,\ldots,L_m\}$ be a system of linear forms, and $\Gamma:\F_p^{k} \to \mathbb{D}$ be an arbitrary function. Define $f,g:\F_p^n \to \mathbb{D}$ by
$$
f(x) = \Gamma(P_1(x),\ldots,P_k(x))
$$
and
$$
g(x) = \Gamma(Q_1(x),\ldots,Q_k(x)).
$$
Then for every $\eps>0$, if $\rank(\P),\rank(\Q) > r(p,d,\eps)$,  we have
$$
\left| t_\mathcal{L}(f) - t_\mathcal{L}(g) \right| \le 2 p^{mk} \cdot \eps,
$$
provided that at least one of the following two conditions hold:
\begin{enumerate}
\item[(i)] The polynomials $P_1,\ldots,P_k$ and $Q_1,\ldots,Q_k$ are homogeneous.
\item[(ii)] The system of linear forms $\mathcal{L}$ is homogeneous.
\end{enumerate}
\end{proposition}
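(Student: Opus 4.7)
The plan is to Fourier-expand $\Gamma$ on $\F_p^k$ and reduce the comparison to a polynomial-bias estimate controlled by the regularity theorem (Theorem~\ref{thm:regularity}). Write $\Gamma(a)=\sum_{\beta\in\F_p^k}\widehat{\Gamma}(\beta)\exp(\beta\cdot a)$, with $|\widehat{\Gamma}(\beta)|\le 1$ since $|\Gamma|\le 1$. Setting $P_\beta:=\sum_{i}\beta(i)P_i$ and $Q_\beta:=\sum_{i}\beta(i)Q_i$, one gets $f=\sum_\beta\widehat{\Gamma}(\beta)\exp(P_\beta)$, $g=\sum_\beta\widehat{\Gamma}(\beta)\exp(Q_\beta)$. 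Substituting into $t_\mathcal{L}$ and expanding the product over $j=1,\ldots,m$ yields
\begin{equation*}
t_\mathcal{L}(f)-t_\mathcal{L}(g)=\sum_{\vec{\beta}\in(\F_p^k)^m}\Bigl(\prod_{j=1}^m\widehat{\Gamma}(\beta_j)\Bigr)\bigl(S^P_{\vec{\beta}}-S^Q_{\vec{\beta}}\bigr),
\end{equation*}
where $S^P_{\vec{\beta}}:=\E_{\mathbf{X}}\exp(R^P_{\vec{\beta}}(\mathbf{X}))$, $R^P_{\vec{\beta}}(\mathbf{x}):=\sum_{j=1}^m P_{\beta_j}(L_j(\mathbf{x}))$, and similarly for $Q$. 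Since there are at most $p^{mk}$ terms, it suffices to prove $|S^P_{\vec{\beta}}-S^Q_{\vec{\beta}}|\le 2\eps$ for each $\vec{\beta}$.

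The heart of the argument is a dichotomy: under either homogeneity hypothesis, $R^P_{\vec{\beta}}$ vanishes identically as a polynomial in $\mathbf{x}$ if and only if $R^Q_{\vec{\beta}}$ does. Given this, when both vanish we have $S^P_{\vec{\beta}}=S^Q_{\vec{\beta}}=1$ and the bound is trivial. When neither vanishes, I would argue that sufficiently high rank of $\mathcal{P}$ (resp.\ $\mathcal{Q}$) forces $R^P_{\vec{\beta}}$ (resp.\ $R^Q_{\vec{\beta}}$) itself to have high rank: any hypothetical representation of $R^P_{\vec{\beta}}$ as a function of few polynomials of degree strictly less than $\deg(R^P_{\vec{\beta}})$ pulls back, by $\F_p$-linear algebra after isolating the top-degree contribution, to a similar representation of some nontrivial combination $P_\alpha$, violating the rank assumption provided $r(p,d,\eps)$ is chosen to exceed the regularity threshold. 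Theorem~\ref{thm:regularity} then gives $|S^P_{\vec{\beta}}|,|S^Q_{\vec{\beta}}|\le\eps$, as desired.

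To establish the dichotomy, in case (i) every $P_i$ is homogeneous of degree $d_i$, so $P_{\beta_j}$ decomposes as $\sum_e P^{(e)}_{\beta_j}$ with $P^{(e)}_{\beta_j}:=\sum_{i:\,d_i=e}\beta_j(i)P_i$ homogeneous of degree $e$. Linearity of $L_j$ preserves this grading, so $R^P_{\vec{\beta}}\equiv 0$ forces $\sum_j P^{(e)}_{\beta_j}(L_j(\mathbf{x}))\equiv 0$ at each degree $e$ separately. The rank hypothesis on $\mathcal{P}$ guarantees that at each fixed degree $e$ the family $\{P_i:d_i=e\}$ is "independent enough" that this identity reduces to a purely combinatorial condition on $(\vec{\beta},d_1,\ldots,d_k,\mathcal{L})$, which is also satisfied by $\mathcal{Q}$ thanks to the matching degree sequence. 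In case (ii), Observation~\ref{obs:canonical} allows us to assume some variable $x_1$ occurs with coefficient $1$ in every $L_j$; writing $L_j(\mathbf{x})=x_1+L'_j(\mathbf{x}')$ and expanding $P_{\beta_j}(x_1+L'_j(\mathbf{x}'))$ via iterated additive derivatives in the $x_1$-direction exhibits $R^P_{\vec{\beta}}$ as a polynomial in $x_1$ whose coefficients are sums over $j$ of lower-order derivatives of $P_{\beta_j}$ composed with $L'_j$. Induction on degree then isolates the same combinatorial vanishing criterion, again depending only on $(\vec{\beta},d_1,\ldots,d_k,\mathcal{L})$ and not on the specific polynomials.

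The main obstacle is making this dichotomy quantitative: one must carefully track, through either the degree grading of (i) or the $x_1$-expansion of (ii), how vanishing of $R^P_{\vec{\beta}}$ factors through the degree data, and then translate non-vanishing into a concrete lower bound on $\rank(R^P_{\vec{\beta}})$ sufficient for Theorem~\ref{thm:regularity} to yield $\bias(R^P_{\vec{\beta}})\le\eps$. Once the threshold $r(p,d,\eps)$ in the statement is chosen to dominate the rank cost of these pullback arguments, summing the per-$\vec{\beta}$ estimate $|S^P_{\vec{\beta}}-S^Q_{\vec{\beta}}|\le 2\eps$ over the $p^{mk}$ Fourier terms produces the claimed bound $|t_\mathcal{L}(f)-t_\mathcal{L}(g)|\le 2p^{mk}\eps$.
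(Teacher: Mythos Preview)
The paper does not prove Proposition~\ref{prop:invariance}; it is imported from the companion paper~\cite{ComplexityPaper}, so there is no ``paper's own proof'' to compare against here. That said, your outline is precisely the approach taken in~\cite{ComplexityPaper}: Fourier-expand $\Gamma$, reduce $t_{\mathcal L}(f)-t_{\mathcal L}(g)$ to a sum over $\vec\beta\in(\F_p^k)^m$ of differences of biases $S^P_{\vec\beta}-S^Q_{\vec\beta}$, and then argue a dichotomy (identically zero versus high rank) for each $R^P_{\vec\beta}$ and $R^Q_{\vec\beta}$.

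The one place where your sketch is genuinely thin is the sentence claiming that a low-rank representation of $R^P_{\vec\beta}$ ``pulls back, by $\F_p$-linear algebra after isolating the top-degree contribution, to a similar representation of some nontrivial combination $P_\alpha$.'' This is the entire content of the proposition and is not a one-line argument. What one actually proves in~\cite{ComplexityPaper} is that the family $\{P_i\circ L_j\}_{i\in[k],\,j\in[m]}$ on $(\F_p^n)^{\ell}$ has rank bounded below in terms of $\rank(\mathcal P)$, \emph{modulo} only those linear relations that are forced by the degree sequence $(d_1,\ldots,d_k)$ and the linear forms $\mathcal L$ (concretely, the relations coming from linear dependencies among the tensor powers $L_j^{d_i}$). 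Hypothesis (i) or (ii) is exactly what makes these forced relations the same for $\mathcal P$ and $\mathcal Q$, giving the dichotomy you want. Establishing this rank-inheritance statement requires a derivative/specialization argument that is several pages long; your case analysis for (i) and (ii) gestures at the right mechanisms but does not carry it out. Once that lemma is in hand, your endgame (apply Theorem~\ref{thm:regularity}, sum the $p^{mk}$ terms) is correct.
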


\section{Characterization of strongly correlation testable properties}
\label{sec:D:testing_correlation}

Consider a family ${\mathcal D}:=\{D_n\}_{n \in \mathbb{N}}$ where $D_n$ is a set of functions from $\F_p^n$ to $\D$.
We recall some basic definitions from the introduction. The correlation of a function $f:\F_p^n \to \D$ with $D_n$ is
$$
\|f\|_{u(D_n)} = \sup_{g \in D_n} |\ip{f,g}|.
$$
Given a function $f:\F_p^n \rightarrow \D$ and a system of linear forms $\mathcal{L}=\{L_1,\ldots,L_m\}$ in $k$ variables, recall that the average of $f$ over $\mathcal{L}$, with conjugations $\alpha \in \{0,1\}^m$, is
$$
t_{\mathcal{L},\alpha}(f) = \Ex_{\X \in (\F_p^n)^k}\left[ \prod_{i=1}^m \conj^{\alpha_i} f(L_i(\X))\right]
$$
where $\conj$ is the conjugation operator. A family $\mathcal{D}$ is said to be correlation testable with linear forms if there exists a set of linear forms $\mathcal{L}_1,\ldots,\mathcal{L}_{\ell}$ along with conjugations $\alpha_1,\ldots,\alpha_{\ell}$, such that the collection of averages $(t_{\mathcal{L}_1,\alpha_1}(f),\ldots,t_{\mathcal{L}_{\ell},\alpha_{\ell}}(f))$ allows to distinguish whether $f$ has noticeable correlation with $D_n$ or negligible correlation with $D_n$. The {\em} true complexity (Cauchy-Schwarz complexity) of $\mathcal{D}$ is the maximal true complexity (Cauchy-Schwarz complexity) of $\{\mathcal{L}_i\}_{i=1,\ldots,\ell}$.

\begin{definition}[Strongly correlation testable properties]
\label{def:D:GeneralcorrTestable}
A family $\mathcal{D}=\{D_n\}_{n \in \N}$ is {\em strongly correlation testable} by linear systems with true complexity $d$ and Cauchy-Schwarz complexity $s$, if the following holds. For every $\eps>0$, there exist $\delta \in (0,\eps)$, $n_0 \in \N$, and systems of homogeneous linear forms
$\mathcal{L}_1,\ldots,\mathcal{L}_{\ell}$ in $m_{1},\ldots,m_{\ell}$ variables, respectively, where each system has
true complexity at most $d$ and Cauchy-Schwarz complexity at most $s$, along with conjugations $\alpha_1 \in \{0,1\}^{m_1},\ldots,\alpha_{\ell} \in \{0,1\}^{m_{\ell}}$ such that the closures of the following two sets are disjoint:
$$
T_{\epsilon}=\left\{\left(t_{\mathcal{L}_1,\alpha_1}(f),\ldots,t_{\mathcal{L}_k,\alpha_k}(f)\right): f:\F_p^n \to \D,
n \ge n_0, \|f\|_{u(D_n)} \ge \eps \right\}
$$
and
$$
S_{\epsilon}=\left\{\left(t_{\mathcal{L}_1,\alpha_1}(f),\ldots,t_{\mathcal{L}_k,\alpha_k}(f)\right): f:\F_p^n \to \D,
n \ge n_0, \|f\|_{u(D_n)} \le \delta \right\}.
$$
\end{definition}
As we mentioned earlier we use the notation $S_\eps$ instead of $S_\delta$ to emphasis the fact that both sets $T_\eps$ and $S_\eps$ are defined according to the parameter $\eps$.

A system $\mathcal{D}=\{D_n\}_{n \in \N}$ where $D_n$ is a set of functions from $\F_p^n$ to $\D$ is {\em consistent} if $D_n \subseteq D_{n+1}$, where we identify $\F_p^n$ with the subspace $\F_p^n \times \{0\}$ of $\F_p^{n+1}$. In this section, we prove the following theorem.

\begin{theorem}[Main theorem: strongly correlation testable functions]
\label{thm:D:main}
Consider a consistent family $\mathcal{D}=\{D_n\}_{n \in \N}$ .
If $\mathcal{D}$ is strongly correlation testable with true complexity $d$ and Cauchy-Schwarz complexity $s<p$, then there exists $0 \le t \le d$ such that the following holds. Let $(f_n:\F_p^n \rightarrow \D)_{n \in \N}$ be a sequence of functions. Then
$$
\lim_{n \to \infty} \|f_n - \Ex[f_n]\|_{u(D_n)}=0 \Longleftrightarrow\lim_{n \to \infty} \|f_n - \Ex[f_n]\|_{U^{t+1}}=0.
$$
\end{theorem}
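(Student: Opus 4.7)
\medskip\noindent\textbf{Proof plan.} The argument follows the outline of Section~\ref{sec:proof_overview}. Define $t$ to be the largest integer $\ge 0$ for which there exists a sequence $(Q_n)_{n\in\N}$ of polynomials $Q_n:\F_p^n\to\F_p$ with $\deg(Q_n)=t$, $\rank(Q_n)\to\infty$ as $n\to\infty$, and $\liminf_{n\to\infty}\|\exp(Q_n)\|_{u(D_n)}>0$. I first verify $t\le d$. If instead $\deg(Q_n)=k>d$ and $\rank(Q_n)\to\infty$, then Theorem~\ref{thm:regularity} applied to each nonzero $\F_p$-multiple $\lambda Q_n$ forces $\bias(\lambda Q_n)\to 0$, and combining with the inverse direction of Theorem~\ref{thm:inverse} yields $\|\exp(Q_n)\|_{U^{d+1}}\to 0$. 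Applying Theorem~\ref{thm:avg-approx-funcs} with $g_i\equiv 0$ to each system $\mathcal{L}_i$ witnessing strong testability (true complexity $\le d$) makes $t_{\mathcal{L}_i,\alpha_i}(\exp(Q_n))\to 0$ for all $i$, placing the test vector of $\exp(Q_n)$ in the closure of $S_\eps$ for every $\eps>0$ while it simultaneously lies in $T_{\eps_0}$ for some $\eps_0>0$, contradicting disjointness of closures.

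\medskip\noindent\textbf{The $\Longleftarrow$ direction.} Write $\widetilde f_n:=f_n-\E f_n$ and suppose, for contradiction, that $\|\widetilde f_n\|_{U^{t+1}}\to 0$ while $\|\widetilde f_n\|_{u(D_n)}\ge\eps$ along a subsequence. Let $\eps'$ denote the $L_\infty$-distance between $\overline{T_\eps}$ and $\overline{S_\eps}$. Invoke Theorem~\ref{thm:decompose_multiple_func} with degree $d$, a small parameter $\eta$, and a sufficiently rapid growth function $r(\cdot)$ to obtain a homogeneous polynomial factor $\mathcal{B}_n$ of complexity $C\le C_{\max}$ and rank $\ge r(C)$, together with $\widetilde f_n=h_n+h_n'$ where $h_n=\E(\widetilde f_n\mid\mathcal{B}_n)$ and $\|h_n'\|_{U^{d+1}}\le\eta$. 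For $\eta$ small enough, Theorem~\ref{thm:avg-approx-funcs} gives $|t_{\mathcal{L}_i,\alpha_i}(\widetilde f_n)-t_{\mathcal{L}_i,\alpha_i}(h_n)|\le\eps'/2$, so $\|h_n\|_{u(D_n)}\ge\delta$. Expand $h_n=\sum_{\gamma\in\F_p^C}c_{n,\gamma}\exp(P_{n,\gamma})$ with $P_{n,\gamma}=\sum_i\gamma(i)P_{n,i}$. The high rank of $\mathcal{B}_n$ makes $\{\exp(P_{n,\gamma})\}$ approximately orthonormal, so $|c_{n,\gamma}|\approx|\ip{\widetilde f_n,\exp(P_{n,\gamma})}|$; for $\gamma$ with $\deg(P_{n,\gamma})\le t$ this is bounded by $\|\widetilde f_n\|_{U^{t+1}}=o(1)$ via~\eqref{eq:GowersCorrelation}. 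Hence some $\gamma$ with $\deg(P_{n,\gamma})>t$ satisfies $\|\exp(P_{n,\gamma})\|_{u(D_n)}\ge\delta/(2p^C)$ infinitely often, and $P_{n,\gamma}$ inherits arbitrarily large rank from $\mathcal{B}_n$. This contradicts the maximality of $t$.

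\medskip\noindent\textbf{The $\Longrightarrow$ direction.} Suppose $\|\widetilde f_n\|_{U^{t+1}}\ge\eps$ along a subsequence. The inverse half of Theorem~\ref{thm:inverse} furnishes polynomials $P_n$ of degree $\le t$ with $|\ip{\widetilde f_n,\exp(P_n)}|\ge\delta(\eps)$, and Theorem~\ref{thm:regularity} lets me assume $\rank(P_n)$ arbitrarily large. If $\deg(P_n)<t$, I invoke Theorem~\ref{thm:InteriorLinearForms} (after passing to non-isomorphic connected sub-systems of the $\mathcal{L}_i$) to add a small $\D$-valued perturbation to $\widetilde f_n$ that installs correlation with a high-rank polynomial of degree exactly $t$, shifting each $t_{\mathcal{L}_i,\alpha_i}$ by at most $\eps'/2$; thus I may assume $\deg(P_n)=t$. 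Let $Q_n$ be the sequence from the definition of $t$. Applying Theorem~\ref{thm:decompose_multiple_func} to $\widetilde f_n$ and refining the resulting factor to include $P_n$, I write $\widetilde f_n\approx\Gamma_n(P_n,R_{n,1},\ldots,R_{n,C})$ modulo $U^{d+1}$-error. Proposition~\ref{prop:invariance}, applied to the high-rank collections $\{P_n,R_{n,1},\ldots,R_{n,C}\}$ and $\{Q_n,R_{n,1}',\ldots,R_{n,C}'\}$ of matching degree profile (with freshly sampled high-rank $R_{n,i}'$), produces $f_n':=\Gamma_n(Q_n,R_{n,1}',\ldots,R_{n,C}')$ with $t_{\mathcal{L}_i,\alpha_i}(\widetilde f_n)\approx t_{\mathcal{L}_i,\alpha_i}(f_n')$ and $|\ip{f_n',\exp(Q_n)}|\ge\delta(\eps)/2$. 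Hence $\|f_n'\|_{u(D_n)}$ is bounded below, so its test vector lies in $T_{\eps''}$, and by proximity the test vector of $\widetilde f_n$ cannot lie in $\overline{S_{\eps''}}$, forcing $\|\widetilde f_n\|_{u(D_n)}\not\to 0$.

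\medskip\noindent\textbf{Main obstacle.} The trickiest step is the degree-bump argument in the $\Longrightarrow$ direction: I must both (i) inject correlation with a degree-$t$ polynomial into $\widetilde f_n$ without disturbing the averages $t_{\mathcal{L}_i,\alpha_i}$, and (ii) swap this polynomial for $Q_n$ in a way invisible to those same averages. Step (i) is exactly what Theorem~\ref{thm:InteriorLinearForms} is designed to provide, after extracting non-isomorphic connected sub-systems of the $\mathcal{L}_i$; step (ii) hinges on Proposition~\ref{prop:invariance}, which requires the $\mathcal{L}_i$ to be homogeneous and the polynomial collections to be matched in degree profile and of sufficiently high rank. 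Matching the rank growth function $r(\cdot)$ furnished by Theorem~\ref{thm:decompose_multiple_func} to the rank threshold demanded by Proposition~\ref{prop:invariance} is the core of the remaining bookkeeping.
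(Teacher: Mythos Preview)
Your proposal tracks the paper's strategy closely: the definition of $t$ via high-rank polynomials correlated with $\mathcal{D}$, the bound $t\le d$, and the $\Longleftarrow$ direction all match Lemmas~\ref{lemma:D:main_rightarrow} and Claim~\ref{claim:D:S_bound} essentially verbatim.

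There is, however, a genuine gap in your $\Longrightarrow$ direction. After constructing $f_n'=\Gamma_n(Q_n,R'_{n,1},\ldots,R'_{n,C})$ you assert $|\ip{f_n',\exp(Q_n)}|\ge\delta(\eps)/2$ and conclude ``Hence $\|f_n'\|_{u(D_n)}$ is bounded below.'' This inference fails: correlation is not transitive. That $f_n'$ is correlated with $\exp(Q_n)$ and $\exp(Q_n)$ is correlated with some $g_n\in D_n$ does not force $|\ip{f_n',g_n}|$ to be large, since the components $\widehat{\Gamma_n}(\gamma)\exp(R'_{n,\gamma})$ for $\gamma\ne\e_1$ can cancel against $g_n$. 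The paper's Lemma~\ref{lemma:D:t_in_S} repairs this by \emph{also} decomposing the witness $g_n\in D_n$ against a second high-rank factor $\mathcal{B}'_n$, and then choosing the fresh polynomials $R_{n,i}$ \emph{randomly} so that every $R_{n,\gamma}-P'_{n,\gamma'}$ (with $(\gamma,\gamma')\ne(\e_1,\e_1)$) has high rank and hence negligible bias. The correlation $\ip{f_n',g_n'}$ is then computed term-by-term and the single pair $(\e_1,\e_1)$ dominates. This double decomposition is the technical heart of the direction and is not a consequence of Proposition~\ref{prop:invariance}, which only controls the averages $t_{\mathcal{L}_i,\alpha_i}$.

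A secondary point: your degree-bump step (``add a small $\D$-valued perturbation to $\widetilde f_n$'') does not match what Theorem~\ref{thm:InteriorLinearForms} actually provides. The paper does not perturb $\widetilde f_n$; instead Lemma~\ref{lem:D:identicalProfile} builds, for each $t'<t$, two \emph{new} sequences with identical test vectors, one correlated with a high-rank degree-$t$ polynomial and the other a bounded combination of high-rank degree-$t'$ exponentials, via a tensor construction $g_{n-m}\otimes(\delta h_m+(1-\delta)F\circ\pi)$. This establishes $\{1,\ldots,t\}\subseteq\mathcal{S}$, so whatever degree $t_0$ Claim~\ref{claim:D:regularize_Qn} produces automatically lies in $\mathcal{S}$, and Lemma~\ref{lemma:D:t_in_S} applies directly.
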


Define a set $\mathcal{S} \subseteq \N$ to be the set of all degrees $k \ge 1$ for which the following holds. For every growth function $r:\N \rightarrow \N$, there exists $n_0,a \in \N$, such that for every $n \ge n_0$
there exist a polynomial $P_n$ over $\F_p^n$ of degree exactly $k$ such that
\begin{enumerate}
\item[(i)] $\|\exp(P_n)\|_{u(D_n)} \ge 1/a$;
\item[(ii)] $\rank(P_n)>r(a)$.
\end{enumerate}

\begin{claim}
\label{claim:D:one_in_S}
Unless all functions in $\mathcal{D}$ are constant functions, we have $1 \in \mathcal{S}$.
\end{claim}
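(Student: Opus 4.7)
The plan is to exhibit, for every sufficiently large $n$, a non-constant linear polynomial $P_n$ whose rank is automatically unbounded (hence exceeds $r(a)$ for any growth function $r$ and any $a$) and whose correlation with $D_n$ is bounded below by a fixed positive constant. The hypothesis that some function in $\mathcal{D}$ is non-constant, together with the consistency inclusion $D_n \subseteq D_{n+1}$, will supply a single witness $g$ that lies in $D_n$ for all $n \geq n^*$; standard Parseval will then hand us the linear polynomial directly.

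Concretely, fix $n^* \in \N$ and $g \in D_{n^*}$ non-constant, and set $g_n(x_1,\ldots,x_n) := g(x_1,\ldots,x_{n^*})$, which lies in $D_n$ for every $n \geq n^*$ by iterated consistency. Since $g - \Ex[g] \not\equiv 0$, Parseval yields some $\alpha \in \F_p^{n^*} \setminus \{0\}$ with $|\hat g(\alpha)| =: c > 0$. With $\tilde\alpha := (\alpha, 0, \ldots, 0) \in \F_p^n$, a short character computation gives $\hat{g_n}(\tilde\alpha) = \hat g(\alpha)$. Taking $P_n(x) := \sum_i \tilde\alpha(i)\, x(i)$ we have $\exp(P_n) = \chi_{\tilde\alpha}$, so
\[
|\langle \exp(P_n),\, g_n\rangle| = |\hat{g_n}(\tilde\alpha)| = c,
\]
and hence $\|\exp(P_n)\|_{u(D_n)} \geq c$. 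This establishes condition~(i) in the definition of $\mathcal{S}$ with, say, $a := \lceil 1/c \rceil + 1$ and $n_0 := n^*$, both independent of the growth function $r$.

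For condition~(ii), I would use that $P_n$ has degree exactly $1$: for any $\lambda \in \F_p \setminus \{0\}$, the polynomial $\lambda P_n$ is still a non-constant degree-$1$ polynomial and hence takes more than one value on $\F_p^n$, whereas any function of finitely many degree-$\le 0$ polynomials (i.e.\ of constants) is itself a constant function. By Definition~\ref{def:rank}, this forces $\rank(P_n) > r$ for every integer $r$, and in particular $\rank(P_n) > r(a)$. Combined with the previous paragraph, this gives $1 \in \mathcal{S}$.

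There is essentially no obstacle in this claim: it reduces to consistency, a single non-trivial Fourier coefficient, and the triviality that a non-constant linear polynomial has unbounded rank. The only subtle point, specific to the base case $k=1$, is that the rank condition comes for free; for larger $k$ one would instead need to invoke the regularity theorem (Theorem~\ref{thm:regularity}) to convert a biased high-degree polynomial into a high-rank one, but no such step is required here.
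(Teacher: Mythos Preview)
Your proof is correct and follows essentially the same approach as the paper: pick a nonconstant $g \in D_{n^*}$, extend it by consistency to all $D_n$, extract a nonzero Fourier coefficient $\alpha$, and take $P_n$ to be the corresponding linear form, noting that nonzero linear polynomials have infinite rank. Your write-up is in fact slightly more careful than the paper's in explicitly requiring $\alpha \ne 0$ and in verifying that $\widehat{g_n}(\tilde\alpha) = \widehat g(\alpha)$.
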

\begin{proof}
Let $g_{n_0} \in D_{n_0}$ be a nonconstant function. There must exist a nonzero Fourier coefficient $\alpha \in \F_p^{n_0}$ such that
$$
\widehat{g_{n_0}}(\alpha) = \eta \ne 0.
$$
Since we assume that the family $\mathcal{D}$ is consistent, for every $n \ge n_0$, the function
$$
g_n(x(1),\ldots,x(n)) = g_{n_0}(x(1),\ldots,x(n_0))
$$
belongs to $D_n$. Let $P_n(x)=\sum_{i=1}^{n_0} \alpha(i) x(i)$ be a linear function. For every nonzero linear function we have $\rank(P_n)=\infty$. By construction, each linear function $P_n$ has correlation with $D_n$,
$$
\|\exp(P_n)\|_{u(D_n)} \ge |\ip{\exp(P_n),g_n}|=|\eta|>0.
$$
\end{proof}

The case where all functions in $\mathcal{D}$ are constants is easy to analyze, as in this case we have that
$$
\|f\|_{u(D_n)} = \|f\|_{U^1},
$$
for all function $f:\F_p^n \to \D$. Thus, in the sequel we assume that $\mathcal{D}$ contains at least one non-constant function, and hence $1 \in \mathcal{S}$.

\begin{claim}
\label{claim:D:S_bound}
$\mathcal{S} \subseteq \{1,\ldots,s\}$.
\end{claim}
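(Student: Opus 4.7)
The plan is a proof by contradiction. Suppose some $k \in \mathcal{S}$ satisfies $k > s$. I would invoke the definition of $\mathcal{S}$ with a carefully chosen growth function to produce polynomials $P_n$ of degree exactly $k$ that are both correlated with $D_n$ and of very high rank. The high rank will force the test vectors $(t_{\mathcal{L}_i,\alpha_i}(\exp(P_n)))_i$ to lie arbitrarily close to the origin, while the zero function certifies that the origin also lies in $S_{\eps}$, contradicting the disjointness of $\overline{T_{\eps}}$ and $\overline{S_{\eps}}$ guaranteed by strong correlation testability.

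The main analytic step is the following rank-to-uniformity estimate: for every $\eps > 0$ there is $r_0(\eps) \in \N$ such that any polynomial $P$ of degree $k > s$ with $\rank(P) > r_0(\eps)$ satisfies $\|\exp(P)\|_{U^{s+1}} < \eps$. I would prove this by combining Theorem~\ref{thm:inverse} (valid since $s < p$) with Theorem~\ref{thm:regularity}. If $\|\exp(P)\|_{U^{s+1}} \geq \eps$, the inverse theorem furnishes $Q \in \mathrm{Poly}_s(\F_p^n)$ with $\bias(P - Q) \geq \delta_0(\eps)$, where $\delta_0$ comes from Theorem~\ref{thm:inverse}. Since $\deg(Q) \leq s < k = \deg(P)$, the difference $P - Q$ still has degree exactly $k$, so Theorem~\ref{thm:regularity} bounds its rank by $r_{p,k}(\delta_0(\eps))$. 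Writing $P - Q = \Gamma(R_1,\ldots,R_t)$ for $t \leq r_{p,k}(\delta_0(\eps))$ polynomials of degree less than $k$ exhibits $P = Q + \Gamma(R_1,\ldots,R_t)$ as a function of $t + 1$ polynomials ($Q, R_1,\ldots,R_t$) all of degree less than $k$, so $\rank(P) \leq r_{p,k}(\delta_0(\eps)) + 1$. The contrapositive yields the estimate with $r_0(\eps) := r_{p,k}(\delta_0(\eps)) + 1$.

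With the estimate in hand, the contradiction is routine. For each $a \in \N$, strong correlation testability with $\eps = 1/a$ yields a positive $L^\infty$-separation $\eta_a > 0$ between $\overline{T_{1/a}}$ and $\overline{S_{1/a}}$. Define $r: \N \to \N$ by $r(a) := r_0(\eta_a / 2) + 1$, and apply $k \in \mathcal{S}$ to this $r$ to obtain $a \in \N$ and polynomials $P_n$ of degree exactly $k$ with $\|\exp(P_n)\|_{u(D_n)} \geq 1/a$ and $\rank(P_n) > r(a)$ for all large $n$; by the estimate, $\|\exp(P_n)\|_{U^{s+1}} < \eta_a / 2$. Since each $\mathcal{L}_i$ has Cauchy-Schwarz complexity at most $s$, Lemma~\ref{lemma:cauchy-schwarz-lemma} applied with the factors $\conj^{\alpha_i(j)}\exp(P_n)$ (whose $U^{s+1}$ norm agrees with $\|\exp(P_n)\|_{U^{s+1}}$) gives $|t_{\mathcal{L}_i,\alpha_i}(\exp(P_n))| < \eta_a / 2$ for every $i$, so the test vector of $\exp(P_n)$ lies in $T_{1/a}$ within $L^\infty$-distance $\eta_a / 2$ of the origin. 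But the zero function $0 : \F_p^n \to \D$ satisfies $\|0\|_{u(D_n)} = 0 \leq \delta$ and has test vector $(0,\ldots,0)$, placing the origin in $S_{1/a}$. This violates the $\eta_a$-separation.

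The principal obstacle is the first step -- converting high rank into a small $U^{s+1}$ norm -- which relies on the deep Bergelson-Tao-Ziegler inverse theorem and requires care that the polynomials witnessing low rank of $P$ are genuinely of degree less than $k$; this is exactly where the hypothesis $k > s$ enters. Everything else is bookkeeping: choosing the growth function $r$ so that the polynomials produced by $\mathcal{S}$ automatically have small enough $U^{s+1}$ norm, and identifying the zero function as the witness in $S_{1/a}$ that collides with their test vector.
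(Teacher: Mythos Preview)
Your proof is correct and follows essentially the same approach as the paper's: both argue by contradiction, use the inverse theorem (Theorem~\ref{thm:inverse}) together with regularity (Theorem~\ref{thm:regularity}) to convert high rank into small $U^{s+1}$ norm, invoke Lemma~\ref{lemma:cauchy-schwarz-lemma} to make the test vector close to the origin, and use the zero function to place the origin in $S_{1/a}$. Your write-up is in fact more explicit than the paper's in unpacking the rank-to-uniformity step; the paper simply asserts that by choosing $r(a)$ large enough one obtains $\|\exp(P_n)\|_{U^{s+1}} \le \mu(a)/2$, whereas you spell out the contrapositive via $P = Q + \Gamma(R_1,\ldots,R_t)$.
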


\begin{proof}
Let $k>s$, and assume to the contrary that for every growth function $r:\N \to \N$, there exists $a,n_0 \in \N$,
such that for every $n \ge n_0$, there exists a polynomial $P_n$ of degree $k$ with $\|\exp(P_n)\|_{u(D_n)} \ge 1/a$ and $\rank(P_n) \ge r(a)$. We will show that in this case, the closures of $T_{1/a}$ and $S_{1/a}$ are not disjoint for all $a \in \N$, which will yield a contradiction. Assume to the contrary that they are disjoint for every $a \in \N$. Then for every $a \in \N$, there exists a minimal distance $\mu(a)>0$, such that for every $z' \in T_{1/a}$ and $z'' \in S_{1/a}$ we have
\begin{equation}
\label{eq:D:S_bound:min_dist}
\|z' - z''\|_{\infty} \ge \mu(a).
\end{equation}

Note that even though $a$ is a function of $r$, by choosing $r$ properly, we can guarantee that $r(a)$ is larger than any given constant. It follows form Remark~\ref{rem:HighRankSmallNorm} that if we choose the rank bound $r(a)$ to be large enough, we can guarantee that
\begin{equation}
\label{eq:D:S_bound:U_small}
\|\exp(P_n)\|_{U^{s+1}} \le \mu(a)/2.
\end{equation}

We first note that $0^{\ell}=(0,\ldots,0)$ is in $S_{1/a}$ for every $a \in \N$, since for $f \equiv 0$ we have $t_{\mathcal{L},\alpha}(f)=0$. Combining this with~\eqref{eq:D:S_bound:min_dist} we get that for every sequence of functions $f_n:\F_p^n \to \D$ with $\liminf_{n \to \infty} \|f_n\|_{u(D_n)} \ge 1/a$, we have
\begin{equation}
\label{eq:D:S_bound:t_lower_bound}
\liminf_{n \to \infty} \|(t_{\mathcal{L}_1,\alpha_1}(f_n),\ldots,t_{\mathcal{L}_{\ell},\alpha_{\ell}}(f_n))\|_{\infty} \ge \mu(a).
\end{equation}
Consider now the polynomials $P_n$. By Lemma~\ref{lemma:cauchy-schwarz-lemma}, since each system $\mathcal{L}_i$ has Cauchy-Schwarz complexity at most $s<p$, we have
$$
|t_{\mathcal{L}_i,\alpha_i}(\exp(P_n))| \le \|\exp(P_n)\|_{U^{s+1}} \le \mu(a)/2,
$$
for all $1 \le i \le \ell$. Thus we reached a contradiction.
\end{proof}

We now define $t:=\max(\mathcal{S})$. Theorem~\ref{thm:D:main} follows from the following two lemmas.

\begin{lemma}
\label{lemma:D:main_rightarrow}
Let $(f_n:\F_p^n \to \D)_{n \in \N}$ be a sequence of functions
such that $\Ex[f_n]=0$.
If $\lim_{n \rightarrow \infty} \|f_n\|_{U^{t+1}}=0$, then $\lim_{n \rightarrow \infty} \|f_n\|_{u(D_n)}=0$.
\end{lemma}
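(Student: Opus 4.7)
The plan is to argue by contradiction. Suppose $\|f_n\|_{U^{t+1}} \to 0$ while $\liminf_{n \to \infty} \|f_n\|_{u(D_n)} \ge \eps$ for some $\eps > 0$; after passing to a subsequence we may assume $\|f_n\|_{u(D_n)} \ge \eps$ for all large $n$. Fix the systems $\mathcal{L}_1,\alpha_1,\ldots,\mathcal{L}_\ell,\alpha_\ell$ and the threshold $\delta \in (0,\eps)$ supplied by Definition~\ref{def:D:GeneralcorrTestable} for this $\eps$. Since $\overline{T_\eps}$ and $\overline{S_\eps}$ are disjoint bounded subsets of $\C^\ell$, there is $\eps' > 0$ such that every point of $T_\eps$ lies at $\ell_\infty$-distance at least $\eps'$ from $\overline{S_\eps}$.

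First I would replace $f_n$ by its projection onto a suitable polynomial factor. Applying Theorem~\ref{thm:decompose_multiple_func} at degree $s$ with an approximation parameter $\eta > 0$ and growth function $r:\N \to \N$ (both to be fixed later) yields $f_n = h_n + h'_n$, where $h_n = \E(f_n|\mathcal{B}_n)$, $\|h'_n\|_{U^{s+1}} \le \eta$, and $\mathcal{B}_n$ is a homogeneous polynomial factor of degree at most $s$, complexity $C_n \le C_{\max}$, and rank at least $r(C_n)$, defined by $P_{n,1},\ldots,P_{n,C_n}$. Because $d \le s$, monotonicity of Gowers norms for functions bounded by $1$ gives $\|h'_n\|_{U^{d+1}} \le \eta$, so Theorem~\ref{thm:avg-approx-funcs} applied to each system (all of true complexity at most $d$; the conjugations cause no trouble because Gowers norms are conjugation-invariant) ensures that for $\eta$ small enough in terms of $\eps'$,
\[
\bigl| t_{\mathcal{L}_i,\alpha_i}(f_n) - t_{\mathcal{L}_i,\alpha_i}(h_n) \bigr| \le \eps'/2 \qquad (1 \le i \le \ell).
\]
Hence the vector of averages for $h_n$ stays at distance at least $\eps'/2$ from $\overline{S_\eps}$, which forces $\|h_n\|_{u(D_n)} > \delta$; pick $g_n \in D_n$ witnessing $|\ip{h_n,g_n}| > \delta$.

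Next I would Fourier-expand $h_n$ over the factor, $h_n(x) = \sum_{\gamma \in \F_p^{C_n}} c_{n,\gamma} \exp(P_{n,\gamma}(x))$ with $P_{n,\gamma} := \sum_i \gamma(i) P_{n,i}$ and $|c_{n,\gamma}| \le 1$. Choosing $r$ large enough, Theorem~\ref{thm:regularity} makes $\bias(P_{n,\gamma-\gamma'})$ arbitrarily small for every $\gamma \ne \gamma'$, so $\{\exp(P_{n,\gamma})\}_\gamma$ is approximately orthonormal; combined with Observation~\ref{obs:project_factor} this gives $c_{n,\gamma} = \ip{f_n,\exp(P_{n,\gamma})} + o(1)$. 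For $\gamma = 0$, $c_{n,0} = \Ex[f_n] + o(1) = o(1)$ by the hypothesis $\Ex[f_n] = 0$; for nonzero $\gamma$ with $\deg(P_{n,\gamma}) \le t$, inequality~\eqref{eq:GowersCorrelationII} and $\|f_n\|_{U^{t+1}} \to 0$ yield $|c_{n,\gamma}| = o(1)$. Expanding $\ip{h_n,g_n}$ in this basis, the contribution of the low-degree terms is $o(1)$, so for $n$ large there must exist $\gamma_n$ with $\deg(P_{n,\gamma_n}) > t$ and
\[
\|\exp(P_{n,\gamma_n})\|_{u(D_n)} \ge \delta / (2 p^{C_{\max}}).
\]
Passing to a further subsequence on which $C_n$ is constant and $\deg(P_{n,\gamma_n}) = k$ for some fixed $k \in \{t+1,\ldots,s\}$, and using that in a factor of rank $> r(C_n)$ every nonzero linear combination of the defining polynomials inherits rank $> r(C_n)$ (Definition~\ref{def:rank}), we exhibit $k$ as an element of $\mathcal{S}$, contradicting $t = \max \mathcal{S}$.

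The main obstacle is the simultaneous calibration of $\eta$, the growth function $r$, and the complexity bound $C_{\max}$: we need $\eta$ small enough (depending on $\eps'$) for Theorem~\ref{thm:avg-approx-funcs} to give tolerance $\eps'/2$; $r$ large enough (depending on $C_{\max}$) for the near-orthogonality justifying $c_{n,\gamma} \approx \ip{f_n,\exp(P_{n,\gamma})}$; and $r$ also large enough (depending on the parameter $a \ge 2 p^{C_{\max}}/\delta$ arising in the final step) to meet the rank threshold required to witness $k \in \mathcal{S}$. Since Theorem~\ref{thm:decompose_multiple_func} allows us to plug in any growth function after fixing $\eta$ and $C_{\max}$, these requirements can be met in order.
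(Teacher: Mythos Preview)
Your argument is correct and follows essentially the same route as the paper's proof: decompose $f_n$ over a high-rank degree-$s$ polynomial factor, transfer the correlation with $D_n$ to $h_n=\E(f_n|\mathcal{B}_n)$ via Theorem~\ref{thm:avg-approx-funcs}, Fourier-expand over the factor, kill the low-degree coefficients using $\|f_n\|_{U^{t+1}}\to 0$, and extract a high-rank polynomial of degree in $\{t+1,\ldots,s\}$ correlated with $D_n$, contradicting $t=\max\mathcal{S}$. Two small points of phrasing: in your final paragraph you cannot ``fix $C_{\max}$'' before choosing $r$, since $C_{\max}$ is an output of Theorem~\ref{thm:decompose_multiple_func} depending on $r$; the correct quantifier order (as the paper uses) is that, given the target growth function $r'$ needed to certify $k\in\mathcal{S}$, you set $r(C)>r'(\lceil 2p^{C}/\delta\rceil)$ for each $C$, and only then is $C_{\max}(r)$ determined. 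Also, after passing to the final subsequence you should invoke consistency of $\mathcal{D}$ to upgrade ``along a subsequence'' to ``for all $n\ge n_0$'', as the definition of $\mathcal{S}$ requires.
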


\begin{lemma}
\label{lemma:D:main_leftarrow}
Let $(f_n:\F_p^n \to \D)_{n \in \N}$ be a sequence of functions
such that $\Ex[f_n]=0$.
If $\lim_{n \rightarrow \infty} \|f_n\|_{u(D_n)}=0$, then $\lim_{n \rightarrow \infty} \|f_n\|_{U^{t+1}}=0$.
\end{lemma}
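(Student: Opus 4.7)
The plan is to argue the contrapositive: assuming $\liminf_{n\to\infty}\|f_n\|_{U^{t+1}}\ge 2\eps$ for some $\eps>0$, I shall derive $\liminf_{n\to\infty}\|f_n\|_{u(D_n)}>0$. Since $t\le s<p$, the inverse theorem (Theorem~\ref{thm:inverse}) produces polynomials $P_n\in\mathrm{Poly}_t(\F_p^n)$ with $|\ip{f_n,\exp(P_n)}|\ge\delta_0:=\delta(2\eps)$, for infinitely many (hence after passing to a subsequence, all) $n$. Applying the regularity theorem (Theorem~\ref{thm:regularity}) I may further arrange that either $\deg(P_n)=t$ with $\rank(P_n)$ as large as desired, or $\deg(P_n)<t$; in the latter case I invoke the interior-of-linear-forms theorem of Section~\ref{sec:interior} to perturb $f_n$ imperceptibly with respect to all the testing averages $t_{\mathcal{L}_i,\alpha_i}$ so that it correlates with a polynomial of degree exactly $t$ and high rank. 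The maximality $t=\max(\mathcal{S})$ then supplies $a\in\N$ and polynomials $Q_n$ of degree exactly $t$ with $\|\exp(Q_n)\|_{u(D_n)}\ge 1/a$ and rank arbitrarily large; let $g_n\in D_n$ witness this, $|\ip{\exp(Q_n),g_n}|\ge 1/a$.

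Next, I decompose $f_n$ using Theorem~\ref{thm:decompose_multiple_func} with Cauchy--Schwarz complexity $s$ and a small $U^{s+1}$-error $\delta_1$, refining the resulting polynomial factor $\mathcal{B}_n$ so that it is generated by a homogeneous collection containing $P_n$ and $Q_n$ and having arbitrarily high joint rank. Writing $\E(f_n\mid\mathcal{B}_n)(x)=\Gamma_n(P_n(x),Q_n(x),R_{n,1}(x),\ldots,R_{n,M}(x))$, I define the swapped function
\[
f'_n(x):=\Gamma_n(Q_n(x),P_n(x),R_{n,1}(x),\ldots,R_{n,M}(x)).
\]
Since $P_n,Q_n$ share the same degree $t$ and the two polynomial collections have matching degree profiles and high rank, the invariance theorem (Proposition~\ref{prop:invariance}) applied to each homogeneous testing system $\mathcal{L}_i$ (of Cauchy--Schwarz complexity $\le s<p$), together with Theorem~\ref{thm:avg-approx-funcs} to absorb the $U^{s+1}$-residue $f_n-\E(f_n\mid\mathcal{B}_n)$, yields
\[
\bigl|t_{\mathcal{L}_i,\alpha_i}(f_n)-t_{\mathcal{L}_i,\alpha_i}(f'_n)\bigr|\le \eta
\]
for each $i$, with $\eta$ arbitrarily small.

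Expanding $\Gamma_n$ in Fourier over $\F_p^{M+2}$ and exploiting near-orthogonality of high-rank polynomial phases (via Theorem~\ref{thm:regularity}), the dominant contribution to $\ip{\E(f_n\mid\mathcal{B}_n),\exp(P_n)}$ comes from the coordinate $e_1$ of $P_n$, so $|\hat\Gamma_n(e_1)|\ge\delta_0/2$. Swapping, the analogous mode in $f'_n$ is $\hat\Gamma_n(e_1)\exp(Q_n)$. Pairing with $g_n$, the remaining Fourier modes yield polynomial-phase summands whose exponent either has degree $>t$ (negligible against $D_n$ by the maximality of $t$ in $\mathcal{S}$ once the prescribed rank bounds are large) or is a combination only of $P_n,Q_n$ and the degree-$\le t$ part of the factor (a bounded number of such terms, controlled by arranging the factor so that no degree-$\le t$ generators other than $P_n,Q_n$ appear, and using $\E[f_n]=0$ to kill the constant mode). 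This yields $\|f'_n\|_{u(D_n)}\ge|\ip{f'_n,g_n}|\ge\eps''>0$ for some $\eps''$ depending on $\delta_0,a$ but not on $n$.

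Finally, by hypothesis $\|f_n\|_{u(D_n)}\to 0$, so the tuple $T(f_n):=(t_{\mathcal{L}_i,\alpha_i}(f_n))_i$ eventually lies in $S_{\delta(\eps'')}$ from Definition~\ref{def:D:GeneralcorrTestable}, while $T(f'_n)\in T_{\eps''}$; choosing $\eta$ smaller than half the $L_\infty$ distance between the disjoint closures of $T_{\eps''}$ and $S_{\delta(\eps'')}$ forces a contradiction. The main obstacle is the Fourier-mode accounting of the preceding paragraph: ensuring that the remaining modes of $f'_n$ contribute negligibly to the pairing with $g_n$ requires arranging the decomposition so that only high-degree polynomials appear alongside $P_n,Q_n$ and then carefully using the maximality of $t$. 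A secondary subtlety is the reduction from $\deg(P_n)<t$ to $\deg(P_n)=t$ in the first step, which is where the interior-of-linear-forms theorem plays an essential role.
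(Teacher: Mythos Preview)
Your overall architecture matches the paper's: contrapositive, inverse theorem to obtain a polynomial correlating with $f_n$, decomposition via a polynomial factor, an invariance argument to build a companion $f'_n$ with nearly identical testing averages, and then a contradiction with the separation of $T_\eps$ and $S_\eps$. However, the Fourier-mode accounting in your penultimate paragraph contains a genuine gap, and it is exactly the place where the paper does something substantially different.

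You write that the ``remaining Fourier modes'' of $f'_n$ with degree exceeding $t$ are ``negligible against $D_n$ by the maximality of $t$ in $\mathcal{S}$''. This does not follow. The statement $k\notin\mathcal{S}$ is the \emph{failure of an existence statement for one particular growth function}: it says there is some $r^*$ such that for every $a$ there are infinitely many $n$ where no degree-$k$ polynomial of rank $>r^*(a)$ has correlation $\ge 1/a$ with $D_n$. It does \emph{not} say that all high-rank degree-$k$ polynomials have small correlation with a fixed $g_n\in D_n$ at a rank threshold you are free to choose. In your expansion $\ip{f'_n,g_n}=\sum_\gamma\widehat\Gamma_n(\gamma)\ip{\exp(R_\gamma),g_n}$ there is no mechanism to bound the individual inner products $\ip{\exp(R_\gamma),g_n}$ for the error terms. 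The same problem arises for the degree-$\le t$ modes $\gamma\ne e_1$: for instance $\gamma=e_2$ produces $\widehat\Gamma_n(e_2)\ip{\exp(P_n),g_n}$, and neither factor is controlled. Your remedy of ``arranging the factor so that no degree-$\le t$ generators other than $P_n,Q_n$ appear'' is not something Theorem~\ref{thm:decompose_multiple_func} lets you do.

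The paper circumvents this entirely. It decomposes \emph{both} $f_n$ and $g_n$ over high-rank factors $\mathcal{B}_n$ and $\mathcal{B}'_n$, identifies in each a single generator ($Q'_{n,1}$ resp.\ $P'_{n,1}$) that carries the correlation, and then defines $f'_n$ over a \emph{new} factor $\widetilde{\mathcal{B}}_n=\{P_n,R_{n,2},\ldots,R_{n,C_n}\}$ where the $R_{n,i}$ are \emph{random} polynomials with $\deg(R_{n,i})=\deg(Q'_{n,i})$. The correlation $\ip{f'_n,g_n}$ is then reduced to $\ip{f'_n,g'_n}$ and expanded as a double sum of biases $\bias(R_{n,\gamma}-P'_{n,\gamma'})$; the randomness of the $R_{n,i}$ together with the high rank of $\mathcal{B}'_n$ forces every term except $(\gamma,\gamma')=(e_1,e_1)$ to have tiny bias. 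No appeal to the definition of $\mathcal{S}$ is made at this step.

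A secondary point: your reduction of the case $\deg(P_n)<t$ to $\deg(P_n)=t$ by ``perturbing $f_n$ using the interior theorem'' is only gestured at. The paper does not perturb $f_n$; instead it proves separately (Lemma~\ref{lem:D:identicalProfile}) that $\mathcal{S}=\{1,\ldots,t\}$, so the degree $t_0$ produced by Claim~\ref{claim:D:regularize_Qn} is automatically in $\mathcal{S}$, and the main lemma (Lemma~\ref{lemma:D:t_in_S}) applies directly. The interior theorem is used there to construct \emph{two specific} sequences with identical testing profiles but different correlation behavior, which is a different manoeuvre from perturbing a given $f_n$.
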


We prove Lemma~\ref{lemma:D:main_rightarrow} in Subsection~\ref{subsec:proof:lemma:D:main_rightarrow} and Lemma~\ref{lemma:D:main_leftarrow} in Subsection~\ref{subsec:proof:lemma:D:main_leftarrow}.

\subsection{Proof of Lemma~\ref{lemma:D:main_rightarrow}}
\label{subsec:proof:lemma:D:main_rightarrow}

Suppose that $\lim_{n \rightarrow \infty} \|f_n\|_{U^{t+1}} =0$, but
$$c:=\limsup_{n \rightarrow \infty} \|f_n\|_{u(D_n)} > 0.$$
Since $\mathcal{D}$ is consistent, we can replace the $\limsup$ by an actual limit. Assume that $c=\lim_{n \rightarrow \infty} \|f_n\|_{u(D_n)}$ and set $\eps := c/2$. Since $\mathcal{D}$ is strongly correlation testable with
true complexity $d$ and Cauchy-Schwarz complexity $s<p$, there exist $\delta \in (0,\eps)$, $\eps'>0$, and a family of homogeneous systems of linear forms $\mathcal{L}_1,\ldots,\mathcal{L}_\ell$ of Cauchy-Schwarz complexity at most $s$ and true complexity at most $d$ along with conjugations $\alpha_1,\ldots,\alpha_\ell$ such that
\begin{equation}
\label{eq:D:main_rightarrow:testingSeparation}
\|(t_{\mathcal{L}_1,\alpha_1}(f),\ldots, t_{\mathcal{L}_{\ell},\alpha_{\ell}}(f)) - (t_{\mathcal{L}_1,\alpha_1}(g),\ldots, t_{\mathcal{L}_{\ell},\alpha_{\ell}}(g))\|_\infty  \ge \eps',
\end{equation}
for every $f,g:\F_p^n \rightarrow \D$ (with $n>n_0$) satisfying $\|f\|_{u(D_n)} \ge \eps$ and $\|g\|_{u(D_n)} \le \delta$.

Let $r:\N \to \N$ be a growth function to be defined later. We apply Theorem~\ref{thm:decompose_multiple_func} and Theorem~\ref{thm:avg-approx-funcs} to deduce that there exists a polynomial factor $\mathcal{B}_n$ of degree $s$, complexity $C_n \le C_{\max}(s,\eps,r(\cdot))$ and rank at least $r(C_n)$, such that for $h_n:=\E(f_n|\mathcal{B}_n)$ we have
\begin{equation}
\label{eq:D:main_rightarrow:approx_f_avg}
\left| t_{\mathcal{L}_i,\alpha_i}(f_n) - t_{\mathcal{L}_i,\alpha_i}(h_n) \right| \le \eps'/2,
\end{equation}
for all $1 \le i \le \ell$.
Equations~\eqref{eq:D:main_rightarrow:testingSeparation} and~\eqref{eq:D:main_rightarrow:approx_f_avg} imply that for large enough $n$ we have  $\|h_n\|_{u(D_n)} > \delta$. So, for large enough $n$, there exists $g_n \in D_n$ such that
\begin{equation}
\label{eq:D:main_rightarrow:lowerL4gh}
|\ip{h_n,g_n}| > \delta.
\end{equation}
Let $\mathcal{B}_n$ be defined by polynomials $Q_{n,1},\ldots,Q_{n,C_n}$ and define $Q_{n,\gamma}:=\sum_{i=1}^{C_n} \gamma(i) Q_{n,i}(x)$ for every $\gamma \in \F_p^{C_n}$. By choosing the growth function $r(\cdot)$ large enough, we have by Theorem~\ref{thm:regularity} that for all $\gamma \ne \gamma'$,
\begin{equation}
\label{eq:D:main_rightarrow:near_orthogonality}
|\bias(Q_{n,\gamma}-Q_{n,\gamma'})| \le p^{-2C_n} \delta/100.
\end{equation}
As $h_n$ is $\mathcal{B}_n$-measurable, we can express it as $h_n(x) = F_n(Q_{n,1}(x),\ldots,Q_{n,C_n}(x))$ for some $F_n:\F_p^{C_n} \rightarrow \D$. Consider the Fourier decomposition of $F_n$,
$$
F_n(z_1,\ldots,z_{C_n}) = \sum_{\gamma \in \F_p^{C_n}} \widehat{F_n}(\gamma) \exp\left(\sum_{i=1}^{C_n} \gamma(i) z_i\right).
$$
We thus have
$$h_n(x) = \sum_{\gamma \in \F_p^{C_n}} \widehat{F_n}(\gamma) \exp(Q_{n,\gamma}(x)),$$
where $|\widehat{F_n}(\gamma)| \le 1$. Define $W_n:=\{\gamma \in \F_p^{C_n}: \deg(Q_{n,\gamma}) \le t\}$. We now show that the assumption $\lim_{n \to \infty} \|f_n\|_{U^{t+1}}=0$ implies that by taking $n$ large enough, we can make $|\widehat{F_n}(\gamma)|$ arbitrarily small for all $\gamma \in W_n$.
\begin{claim}
\label{claim:D:h_no_corr_deg_k-1}
For large enough $n$, we have $|\widehat{F_n}(\gamma)| \le p^{-C_n} \delta/10$ for all $\gamma \in W_n$.
\end{claim}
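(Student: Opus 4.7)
The plan is to recover $\widehat{F_n}(\gamma)$ as an inner product against $\exp(Q_{n,\gamma})$, pass from $h_n$ to $f_n$ via the fact that $\exp(Q_{n,\gamma})$ is $\mathcal{B}_n$-measurable, and finally bound that inner product by $\|f_n\|_{U^{t+1}}$, which we know tends to zero.

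First, I would use the Fourier expansion $h_n = \sum_{\gamma'} \widehat{F_n}(\gamma') \exp(Q_{n,\gamma'})$ to compute
\[
\ip{h_n, \exp(Q_{n,\gamma})} = \widehat{F_n}(\gamma) + \sum_{\gamma' \neq \gamma} \widehat{F_n}(\gamma') \, \Ex[\exp(Q_{n,\gamma'} - Q_{n,\gamma})].
\]
Each of the $p^{C_n} - 1$ extra terms has modulus at most $\bias(Q_{n,\gamma'} - Q_{n,\gamma}) \le p^{-2C_n}\delta/100$ by the near-orthogonality estimate~\eqref{eq:D:main_rightarrow:near_orthogonality}, so the sum of error terms is at most $p^{-C_n}\delta/100$. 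Hence
\[
\left| \widehat{F_n}(\gamma) \right| \le \left| \ip{h_n, \exp(Q_{n,\gamma})} \right| + p^{-C_n}\delta/100.
\]

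Next, since $\exp(Q_{n,\gamma})$ is $\mathcal{B}_n$-measurable, Observation~\ref{obs:project_factor} gives $\ip{h_n, \exp(Q_{n,\gamma})} = \ip{f_n, \exp(Q_{n,\gamma})}$. For $\gamma \in W_n$, the polynomial $Q_{n,\gamma}$ has degree at most $t$, so inequality~\eqref{eq:GowersCorrelationII} yields
\[
\left| \ip{f_n, \exp(Q_{n,\gamma})} \right| \le \|f_n\|_{u(\mathrm{Poly}_t)} \le \|f_n\|_{U^{t+1}}.
\]

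Finally, recall that the polynomial factor $\mathcal{B}_n$ produced by Theorem~\ref{thm:decompose_multiple_func} has complexity $C_n \le C_{\max}$, where $C_{\max}$ depends only on $p, s, \eps', r(\cdot)$ and not on $n$. Since by hypothesis $\lim_{n \to \infty}\|f_n\|_{U^{t+1}} = 0$, for all sufficiently large $n$ we have $\|f_n\|_{U^{t+1}} \le p^{-C_{\max}}\delta/20 \le p^{-C_n}\delta/20$. Combining the estimates,
\[
\left| \widehat{F_n}(\gamma) \right| \le p^{-C_n}\delta/20 + p^{-C_n}\delta/100 \le p^{-C_n}\delta/10,
\]
for every $\gamma \in W_n$, which is the claim. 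There is no serious obstacle here; the only subtlety is that the uniform bound on $C_n$ is what allows the error thresholds (which depend on $C_n$) to be absorbed into the vanishing of $\|f_n\|_{U^{t+1}}$.
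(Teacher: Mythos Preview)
Your proof is correct and follows essentially the same approach as the paper: isolate $\widehat{F_n}(\gamma)$ from $\ip{h_n,\exp(Q_{n,\gamma})}$ using the near-orthogonality bound~\eqref{eq:D:main_rightarrow:near_orthogonality}, pass from $h_n$ to $f_n$ via $\mathcal{B}_n$-measurability, and then bound $|\ip{f_n,\exp(Q_{n,\gamma})}|$ by $\|f_n\|_{U^{t+1}}$ using $\deg(Q_{n,\gamma})\le t$ together with the uniform bound $C_n\le C_{\max}$. The only differences from the paper are cosmetic choices of intermediate constants.
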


\begin{proof}
By expanding $\ip{h_n,\exp(Q_{n,\gamma})}$ we have
$$
\widehat{F_n}(\gamma) = \ip{h_n,\exp(Q_{n,\gamma})} - \sum_{\gamma' \ne \gamma} \widehat{F_n}(\gamma') \cdot \bias(Q_{n,\gamma'}-Q_{n,\gamma}).
$$
We bound each term individually. As $\exp(Q_{n,\gamma})$ is $\mathcal{B}_n$-measurable and $h_n=\E(f_n|\mathcal{B}_n)$, we have that
$$
\ip{h_n,\exp(Q_{n,\gamma})} = \ip{f_n,\exp(Q_{n,\gamma})},
$$
and since $\deg(Q_{n,\gamma}) \le t$ we have
$$
|\ip{h_n,\exp(Q_{n,\gamma})}| \le \|f_n\|_{U^{t+1}} \le p^{-C_n} \delta/100,
$$
for large enough $n$. By~\eqref{eq:D:main_rightarrow:near_orthogonality} and the bound $\|\widehat{F_n}\|_{\infty} \le 1$, we conclude that
$$
|\widehat{F_n}(\gamma)| \le p^{-C_n} \delta/10,
$$
for all $\gamma \in W_n$.
\end{proof}

Now, Claim~\ref{claim:D:h_no_corr_deg_k-1} implies that
$$
\left|\sum_{\gamma \in W_n} \widehat{F_n}(\gamma) \ip{\exp(Q_{n,\gamma}), g_n}\right| \le \delta/2.
$$
However, since $|\ip{h_n,g_n}| \ge \delta$ we must have that there exists $\gamma^* \not\in W_n$ such that
\begin{equation}
\label{eq:D:main_rightarrow:lowerBoundCorLargeDeg}
|\ip{\exp(Q_{n,\gamma^*}), g_n}| \ge p^{-C_n} \delta/2.
\end{equation}
We now show a contradiction to the assumption that $t=\max(\mathcal{S})$. Set $P_n := Q_{n,\gamma^*}$. As $Q_{n,\gamma} \notin W_n$,  by~(\ref{eq:D:main_rightarrow:lowerBoundCorLargeDeg}) we have that
\begin{itemize}
\item $t+1 \le \deg(P_n) \le s$;
\item $\|\exp(P_n)\|_{u(D_n)} \ge p^{-C_n} \delta/2$;
\item $\rank(P_n) \ge \rank(\mathcal{B}_n) \ge r(C_n)$.
\end{itemize}
Let $n_1<n_2<\ldots$ be an infinite sequence such that $t':=\deg(P_{n_i}) \ge t+1$ and $C_n=C$. Since the family $\mathcal{D}$ is consistent, we may assume (by refining the sequence) that $\deg(P_n)=t'$ and $\rank(P_n) \ge r(C)$ for all $n \in \N$. As $r(\cdot)$ is an arbitrary growth function, we must have $t' \in \mathcal{S}$ and the lemma follows.


\subsection{Proof of Lemma~\ref{lemma:D:main_leftarrow}}
\label{subsec:proof:lemma:D:main_leftarrow}
Let $(f_n:\F_p^n \to \D)_{n \in \N}$ be a sequence of functions such that
$\lim_{n \to \infty} \|f_n\|_{u(D_n)}=0$, and assume to the contrary that
$$
\limsup_{n \to \infty} \|f_n\|_{U^{t+1}} >0.
$$
Let $t_0 \ge 1$ be the smallest positive integer for which  
$$
\zeta:=\limsup_{n \to \infty} \|f_n\|_{U^{t_0+1}} >0.
$$ 
Since $\mathcal{D}$ is consistent we may replace the $\limsup$ by an actual limit by refining the sequence. So we assume that $\|f_n\|_{U^{t_0+1}} \ge \zeta$ for all large enough $n>n_0$. By Theorem~\ref{thm:inverse} this
implies (since $t_0 \le t \le s<p$) that there exist polynomials $Q_n$ of degree at most $t_0$ such that
$$
|\ip{f_n,\exp(Q_n)}| \ge \frac{1}{b},
$$
for some integer $b=b(\zeta)>0$.  Since $t_0$ is minimal, we can assume after possibly refining the sequence that $\deg(Q_n)=t_0$ for every $n$.  We first show that the polynomials will have arbitrarily large rank.

\begin{claim}
\label{claim:D:regularize_Qn}
We have $\lim_{n \to \infty} \rank(Q_{n}) = \infty$.
\end{claim}

\begin{proof}
 Otherwise, there exist an integer $c$, and an infinite sequence  $n_1<n_2<\ldots$ such that $\rank(Q_{n_i})<c$ for every $i \in \mathbb{N}$.  Consider a particular $n:=n_i$ and let $Q:=Q_{n_i}$. By our assumption we can express  $Q$  as a function of $c$ polynomials $Q'_{1},\ldots,Q'_{c}$ of degrees at most $t_0-1$. Assume that $Q(x) = F(Q'_{1},\ldots,Q'_{c})$. Let $Q'_{\gamma}=\sum_{i=1}^{c} \gamma(i) Q'_{i}(x)$ for $\gamma \in \F_p^{c}$. By the Fourier decomposition of $F$, we have
$$
\exp(Q(x)) = \sum_{\gamma \in \F_p^{c}} \widehat{F}(\gamma) \exp\left(Q'_{\gamma}(x)\right).
$$
Since $\|\widehat{F}\|_{\infty} \le 1$, we have that there exists $\gamma$ with
$$
|\ip{f_n,\exp(Q'_{\gamma})}| \ge \frac{p^{-c}}{b}.
$$
Since we assumed $\Ex[f_n]=0$, we cannot have that $Q'_{\gamma}$ is a constant, thus we have $1 \le \deg(Q'_{\gamma}) < t_0$. This shows that 
$\|f_n\|_{U^{\deg(Q'_{\gamma})+1}} \ge \frac{p^{-c}}{b}$
which contradicts the minimality of $t_0$. 
\end{proof} 

So far we have established that there exist $1 \le t_0 \le t$,  functions $f_n:\F_p^n \to \D$
with $\lim_{n \to \infty} \|f_n\|_{u(D_n)}=0$, and polynomials $Q_n$ of degree exactly $t_0$, such that $\lim_{n \to \infty} \rank(Q_{n}) = \infty$, and for every $n>n_0$, we have $|\ip{f_n, \exp(Q_n)}| \ge \frac{1}{b}$.
We first derive a contradiction when $t_0 \in \mathcal{S}$.
\begin{lemma}
\label{lemma:D:t_in_S}
If $t_0 \in \mathcal{S}$, then $\limsup_{n \rightarrow \infty} \|f_n\|_{u(D_n)} > 0$.
\end{lemma}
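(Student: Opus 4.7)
The plan is to derive a contradiction from the assumption $\lim_n \|f_n\|_{u(D_n)}=0$ by constructing, for each large $n$, a function $f'_n:\F_p^n \to \D$ satisfying both (i) $t_{\mathcal{L}_i,\alpha_i}(f'_n)$ matches $t_{\mathcal{L}_i,\alpha_i}(f_n)$ up to an arbitrarily small error for every $i$, and (ii) $\|f'_n\|_{u(D_n)} \ge \eps''$ for some $\eps''>0$ independent of $n$. Together with $\lim_n \|f_n\|_{u(D_n)}=0$, these two properties contradict the disjointness of the closures of $T_{\eps''}$ and $S_{\delta(\eps'')}$ guaranteed by $\mathcal{D}$ being strongly correlation testable.

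For the construction, I would first use $t_0 \in \mathcal{S}$ (with a growth function $r_1$ to be fixed at the end) to produce polynomials $P_n$ of degree exactly $t_0$ with $\|\exp(P_n)\|_{u(D_n)} \ge 1/a$ and $\rank(P_n) \ge r_1(a)$, along with witnesses $h^*_n \in D_n$ satisfying $|\ip{\exp(P_n), h^*_n}| \ge 1/a$. Next, apply Theorem~\ref{thm:decompose_multiple_func} to $f_n$ and $h^*_n$ simultaneously with degree parameter $s$, then adjoin $Q_n$ and $P_n$ to the defining polynomials and re-regularize (via Theorem~\ref{thm:regularity}) to obtain a common polynomial factor $\mathcal{B}_n$ of degree $s$, complexity $C_n \le C_{\max}$, and rank at least $r_1(C_n)$, in which $Q_n = Q_{n,1}$ and $P_n = Q_{n,2}$ are among the defining polynomials, and satisfying $\|f_n-g_n\|_{U^{s+1}}, \|h^*_n-k_n\|_{U^{s+1}} \le \eta$ where $g_n := \E(f_n|\mathcal{B}_n)$ and $k_n := \E(h^*_n|\mathcal{B}_n)$. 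Writing $g_n(x)=\Gamma_n(Q_n(x), P_n(x), Q_{n,3}(x), \ldots, Q_{n,C_n}(x))$, define
$$
f'_n(x) := \Gamma_n(P_n(x), Q_n(x), Q_{n,3}(x), \ldots, Q_{n,C_n}(x)),
$$
which takes values in $\D$ since $\Gamma_n$ does. Property (i) follows by combining Theorem~\ref{thm:avg-approx-funcs} (using monotonicity of the Gowers norms and that the true complexity $d \le s$) with Proposition~\ref{prop:invariance} applied to the two collections $(Q_n, P_n, Q_{n,3},\ldots)$ and $(P_n, Q_n, Q_{n,3},\ldots)$ of matching degrees and large rank.

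The main obstacle is property (ii); I would target the lower bound $|\ip{f'_n, h^*_n}| \ge \eps''$. Since $f'_n$ is $\mathcal{B}_n$-measurable, the tower law (Observation~\ref{obs:project_factor}) gives $\ip{f'_n, h^*_n} = \ip{f'_n, k_n}$, and a further application of Proposition~\ref{prop:invariance} to the trivial single-form system (swapping $P_n \leftrightarrow Q_n$ inside the joint expectation defining $\ip{f'_n, k_n}$) shows this is within $o(1)$ of $\ip{g_n, k'_n}$, where $k'_n$ denotes the polynomial swap of $k_n = \Delta_n(Q_n, P_n, Q_{n,3}, \ldots)$. Expanding $\Gamma_n$ and $\Delta_n$ in Fourier bases over $\mathcal{B}_n$ and using near-orthogonality of characters on a high-rank factor, the Fourier coefficient of $g_n$ on $\exp(Q_n)$ has magnitude $\ge 1/b - o(1)$ (equal to $|\ip{f_n, \exp(Q_n)}|$), while that of $k_n$ on $\exp(P_n)$ has magnitude $\ge 1/a - o(1)$. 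These matched coefficients yield a ``diagonal'' contribution to $\ip{g_n, k'_n}$ of magnitude at least $1/(ab) - o(1)$. The delicate step is to control the off-diagonal Fourier contributions so that they do not overwhelm this diagonal term; I would handle this by a further refined invariance-and-near-orthogonality argument that compares $\ip{g_n, k'_n}$ with the product $\ip{g_n, \exp(Q_n)} \cdot \overline{\ip{\exp(P_n), k_n}}$ modulo errors controlled by making $\rank(\mathcal{B}_n)$ sufficiently large. This is the heart of the proof; once established, $|\ip{f'_n, h^*_n}| \ge \eps''$ follows, and the contradiction is closed.
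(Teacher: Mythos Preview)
Your high-level strategy matches the paper's: build $f'_n$ with nearly the same averages $t_{\mathcal{L}_i,\alpha_i}$ as $f_n$ but with $\|f'_n\|_{u(D_n)}$ bounded below, and derive a contradiction from the separation of $T_\eps$ and $S_\eps$. The construction, however, differs from the paper's in a way that leaves a genuine gap at the step you yourself flag as ``delicate.''

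After your swap you have (using near-orthogonality on a high-rank factor)
\[
\ip{g_n,k'_n}\;\approx\;\sum_{\gamma\in\F_p^{C_n}}\widehat{\Gamma_n}(\gamma)\,\overline{\widehat{\Delta_n}(\sigma(\gamma))},
\]
where $\sigma$ swaps the first two coordinates. You correctly isolate the $\gamma=e_1$ term, of size roughly $1/(ab)$. But increasing $\rank(\mathcal{B}_n)$ only sharpens the near-orthogonality (it makes the bias terms behave like Kronecker deltas); it does \emph{not} suppress the remaining Fourier coefficients $\widehat{\Gamma_n}(\gamma)$, $\widehat{\Delta_n}(\gamma)$ for $\gamma\neq e_1,e_2$. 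There are up to $p^{C_n}$ such terms and nothing prevents them from cancelling the $e_1$ contribution; in particular there is no ``invariance-and-near-orthogonality'' mechanism that forces $\ip{g_n,k'_n}$ close to $\widehat{\Gamma_n}(e_1)\overline{\widehat{\Delta_n}(e_2)}$. A secondary issue is that after adjoining $Q_n,P_n$ and re-regularizing you cannot in general keep both as defining polynomials of a high-rank factor, since $\rank(\{Q_n,P_n\})$ need not be large.

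The paper circumvents exactly this obstruction. It decomposes $f_n$ and the witness $g_n\in D_n$ over \emph{separate} factors $\mathcal{B}_n$ and $\mathcal{B}'_n$, locates the single dominant Fourier coefficient in each (at $Q'_{n,1}$, close to $Q_n$, and at $P'_{n,1}$, close to $P_n$), and then defines $f'_n$ using a \emph{new} factor $\widetilde{\mathcal{B}}_n=\{P_n,R_{n,2},\ldots,R_{n,C_n}\}$ where the $R_{n,j}$ are \emph{random} polynomials of the required degrees. Randomness guarantees that for every $(\gamma,\gamma')\neq(e_1,e_1)$ the polynomial $R_{n,\gamma}-P'_{n,\gamma'}$ has rank as large as we like, so its bias is tiny; consequently the only surviving contribution to $\ip{f'_n,g'_n}$ is precisely $\widehat{F_n}(e_1)\overline{\widehat{G_n}(e_1)}\,\bias(P_n-P'_{n,1})\gtrsim 1/(a^2 b)$. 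In other words, the random slots are what kill the off-diagonal terms that your swap leaves uncontrolled.
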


\begin{proof}
In the proof, we think of $p$ and $s$ as constants and do not explicitly mention dependencies on them. Given the value of $\delta$, let $\widetilde{r}_b:\N \to \N$ be a growth function to be determined later. Since $t_0 \in \mathcal{S}$, there exist polynomials $P_n$ of degree exactly $t_0$, functions $g_n \in D_n$, and $n_0,a \in \N$, such that for every $n>n_0$,
\begin{itemize}
\item $|\ip{\exp(P_n),g_n}| \ge 1/a$;
\item $\rank(P_n) \ge \widetilde{r}_b(a)$.
\end{itemize}
Set $\eps:=\tfrac{1}{1000 a b}$. Since $\mathcal{D}$ is correlation testable with true complexity $d$ and Cauchy-Schwarz complexity $s<p$, there exist $\delta \in (0,\eps)$, $\eta>0$, and a family of homogeneous systems of linear forms $\mathcal{L}_1,\ldots,\mathcal{L}_{\ell}$ of true complexity at most $d$ and Cauchy-Schwarz complexity at most $s$, and conjugations $\alpha_1,\ldots,\alpha_{\ell}$, such that
$$
\|(t_{\mathcal{L}_1,\alpha_1}(f),\ldots, t_{\mathcal{L}_{\ell},\alpha_{\ell}}(f)) - (t_{\mathcal{L}_1,\alpha_1}(f'),\ldots, t_{\mathcal{L}_{\ell},\alpha_{\ell}}(f'))\|_\infty \ge 2\eta,
$$
for every $f,f':\F_p^n \rightarrow \D$ (with $n>n_0$) satisfying $\|f'\|_{u(D_n)} \ge \eps$ and $\|f\|_{u(D_n)} \le \delta$. We will prove the lemma by constructing a new sequence of functions $(f'_n:\F_p^n \to \D)_{n \in \N}$ such that for large enough $n$, we will have
\begin{itemize}
\item $|\ip{f'_n,g_n}| \ge \eps$ and hence $\|f'_n\|_{u(D_n)} \ge \eps$;
\item $|t_{\mathcal{L}_i,\alpha_i}(f_n)-t_{\mathcal{L}_i,\alpha_i}(f'_n)| \le \eta$, for all $1 \le i \le k$.
\end{itemize}
This will conclude the lemma as it will show that $\|\exp(f_n)\|_{u(D_n)} \ge \delta$ for large enough $n$.

Let $r_1:\N \to \N$ be a growth function to be determined later (whose choice depends on the values of $a,b$). Since $\rank(Q_n) \to \infty$, 
By Theorem~\ref{thm:strong_independence}, for sufficiently large $n$, there exists a polynomial factor $\mathcal{B}_n$ of degree $s$, complexity $C_n \le C_{\max}(\eta,a,b,r_1(\cdot),1)$ and rank at least $r_1(C_n)$, defined by polynomials $Q'_{n,1},\ldots,Q'_{n,C_n}$ with $Q'_{n,1}=Q_n$ such that the following holds. For $h_n:=\E(f_n|\mathcal{B}_n)$ we have
\begin{equation}
\label{eq:D:t_in_S:f_h_close_L}
\left|t_{\mathcal{L}_i,\alpha_i}(f_n) - t_{\mathcal{L}_i,\alpha_i}(h_n)\right| \le \eta/10,
\end{equation}
for all $1 \le i \le k$, and also
\begin{equation}
\label{eq:D:t_in_S:f_h_close_gowers}
\|f_n - h_n\|_{U^{s+1}} \le 1/2b.
\end{equation}

Recall that $Q_n$ is a polynomial of degree $t_0 \le t \le s$ such that $|\ip{f_n,Q_n}| \ge 1/b$, for $n \ge n_0$. By the choice of~\eqref{eq:D:t_in_S:f_h_close_gowers} we have that $h_n$ is also correlated to $Q_n$, as
\begin{equation}
\label{eq:D:t_in_S:h_correlated_Q}
|\ip{h_n,\exp(Q_n)}| \ge |\ip{f_n,\exp(Q_n)}| - |\ip{f_n-h_n,\exp(Q_n)}| \ge \frac{1}{b} - \|f_n-h_n\|_{U^{s+1}} \ge 1/2b.
\end{equation}

Define $Q'_{n,\gamma} = \sum_{i=1}^{C_n} \gamma(i) Q'_{n,i}$ for $\gamma \in \F_p^{C_n}$. The function $h_n=\E(f_n|\mathcal{B}_n)$ is $\mathcal{B}_n$-measurable, hence $h_n=F_n(Q'_{n,1},\ldots,Q'_{n,C_n})$ for some function $F_n:\F_p^{C_n} \to \D$, and we have
$$
h_n(x) = \sum_{\gamma \in \F_p^{C_n}} \widehat{F_n}(\gamma) \exp(Q'_{n,\gamma}(x))
$$
where $\|\widehat{F_n}\|_{\infty} \le 1$. Thus we have
\begin{equation}
\label{eq:D:t_in_S:correlation_inequality}
1/2b \le \bigg|\ip{h_n,\exp(Q_n)}\bigg| \le \sum_{\gamma \in \F_p^{C_n}} |\widehat{F_n}(\gamma)| \cdot |\bias(Q'_{n,\gamma}-Q_n)|.
\end{equation}

We now show that when $r_1(\cdot)$ is chosen large enough (as a function of $a,b$), then almost all the contribution to the correlation in~\eqref{eq:D:t_in_S:correlation_inequality} comes from the single term $\gamma=\e_1=(1,0,\ldots,0)$. Let $r_1(C)$
be chosen large enough, such that if $R$ is a polynomial on $\F_p^n$ of degree at most $s$ and rank at least $r_1(C)$, then for large enough $n$ we have
\begin{equation}
\label{eq:D:t_in_S:bias_upper_bound}
\left| \bias(R) \right| \le \eps \cdot p^{-C}.
\end{equation}
Such a choice is guaranteed by Theorem~\ref{thm:regularity}. Assuming such a choice for $r_1(\cdot)$, we have 
$$
\sum_{\gamma \in \F_p^{C_n}, \gamma \neq \e_1} |\widehat{F_n}(\gamma)| |\bias(Q'_{n,\gamma}-Q_n)| \le \eps \le \frac{1}{10b},
$$ 
Thus \eqref{eq:D:t_in_S:correlation_inequality} implies that  $|\widehat{F_n}(\e_1)| \ge 1/4b$. 

Summarizing the discussion above, we have established the following properties:
\begin{enumerate}
\item $Q'_{n,1} = Q_n$;
\item $|\widehat{F_n}(\e_1)| \ge \tfrac{1}{4b}$;
\item $|\bias(Q'_{n,\gamma}-Q_n)| \le \eps \cdot p^{-C_n}$ for all $\gamma \ne \e_1$;
\end{enumerate}

We now repeat the same process for $g_n$. Let $\lambda = \eps p^{-C_{\max}}$. There exists a polynomial factor $\mathcal{B'}_n$ of degree $s$ and complexity $C'_n \le C_{max}$ defined by polynomials $\{P'_{n,i}: 1 \le i \le C'_n\}$, such that for $g'_n:=\E(g_n|\mathcal{B'}_n)$ we have
$$
\|g_n - g'_n\|_{U^{s+1}} \le \lambda,
$$
and also that $g'_n=G_n(P'_{n,1},\ldots,P'_{n,C'_n})$ where
\begin{enumerate}
\item $P'_{n,1} = P_n$; (To achieve this, we choose $\tilde{r}_b(a)$ to be sufficiently large so that $P_n$ is never decomposed). 
\item $|\widehat{G_n}(\e_1)| \ge \tfrac{1}{4a}$;
\item $|\bias(P'_{n,\gamma}-P_n)| \le \eps \cdot p^{-C'_n}$ for all $\gamma \ne \e_1$;
\end{enumerate}

We now define a new polynomial factor $\mathcal{\widetilde{B}}_n$ as follows. Let $R_{n,2},\ldots,R_{n,C_n}$ be random  polynomials in $\F_p^n$ chosen such that $\deg(R_{n,i})=\deg(P'_{n,i})$. The same argument as in Remark~\ref{rankrandom} shows that  for every rank bound $r^*$,
for large enough $n$ the following holds with high probability. Let $P^{1}$ be some (possibly zero) linear combination of $P_{n},P'_{n,1},\ldots,P'_{n,C'_n}$ and let $R^1$ be a nonzero linear combination of $R_{n,2},\ldots,R_{n,C_n}$.
Then as long as $\deg(R^1) \ge \deg(P^1)$, we have that $\rank(P^1+R^1) \ge r^*$. 

We define $\mathcal{\widetilde{B}}_{n}=\{P_n,R_{n,2},\ldots,R_{n,C_n}\}$. Define $R_{n,\gamma}=\gamma(1) P_n(x) + \sum_{i=2}^{C_n} \gamma(i) R_{n,i}(x)$ for $\gamma \in \F_p^{C_n}$. We define the new sequence of functions as
\begin{equation}
\label{eq:D:t_in_S:def_f'}
f'_n(x) = F_n(P_n,R_{n,2},\ldots,R_{n,C_n}).
\end{equation}

We conclude the proof by showing that $|t_{\mathcal{L}_i,\alpha_i}(f'_n)-t_{\mathcal{L}_i,\alpha_i}(f_n)| \le \eta$ for all $1 \le i \le \ell$, but that $|\ip{f'_n,g_n}| \ge \eps$.

\begin{claim}
\label{claim:D:f'_has_same_avg_as_f}
$|t_{\mathcal{L}_i,\alpha_i}(f'_n)-t_{\mathcal{L}_i,\alpha_i}(f_n)| \le \eta$ for all $1 \le i \le \ell$.
\end{claim}

\begin{proof}
Note that $\deg(R_{n,i})=\deg(P_{n,i})$ for all $1 \le i \le \ell$. Since all the linear forms are homogeneous, we can apply Proposition~\ref{prop:invariance}. By the proposition,
there exists a bound $r'(\eta)$ such that if $\rank(\mathcal{B}_n),\rank(\mathcal{\widetilde{B}}_n) \ge r'$ then the claim follows. To ensure this, we require for $\mathcal{B}_n$ that $r_1(C) \ge r'$ for all $C \in \N$; and for $\mathcal{\widetilde{B}}_n$ that $r^* \ge r'$ and that $\widetilde{r}_b(a) \ge r'$ (note that it is crucial to allow $\widetilde{r}$ to depend on both $a,b$, since $\eta$ depends on both $a,b$).
\end{proof}

\begin{claim}
$|\ip{f'_n,g_n}| \ge \eps$.
\end{claim}

\begin{proof}
We first claim, it suffices to prove $|\ip{f'_n,g'_n}| \ge 2\epsilon$. Indeed
$$
|\ip{f'_n,g_n-g'_n}| \le \sum_{\gamma \in \F_p^{C_n}} |\widehat{F_n}(\gamma)| |\ip{\exp(R_{n,\gamma}),g_n-g'_n}|
\le p^{C_n} \|g_n - g'_n\|_{U^{s+1}} \le \eps,
$$
where we used the fact that $\deg(R_{n,\gamma}) \le s$. Now note that
$$
\ip{f'_n,g'_n} = \sum_{\gamma \in \F_p^{C_n},\gamma' \in \F_p^{C'_n}} \widehat{F_n}(\gamma) \overline{\widehat{G_n}(\gamma')}
\bias(R_{n,\gamma}-P'_{n,\gamma'}).
$$
Consider first the term $\gamma=\gamma'=\e_1$. We have $R_{n,\e_1}=P'_{n,\e_1}=P_n$ and by our construction,
$$
\left|\widehat{F_n}(\e_1) \overline{\widehat{G_n}(\e_1)} \right| \ge \frac{1}{4b} \cdot \frac{1}{4a} \ge 3 \eps.
$$
We now bound all other terms $(\gamma,\gamma') \ne (\e_1,\e_1)$. By choosing $r_1(\cdot),r^*$ large enough, we can bound
$$
|\bias(R_{n,\gamma}-P'_{n,\gamma'})| \le \eps \cdot p^{-(C_n+C'_n)}.
$$
Putting all these together we conclude that $|\ip{f'_n,g_n}| \ge \eps$ as claimed.
\end{proof}

This concludes the proof of Lemma~\ref{lemma:D:t_in_S}.
\end{proof}

Lemma~\ref{lemma:D:t_in_S} shows that if $t_0 \in \mathcal{S}$, then we are done. We will next show that
$\mathcal{S}=\{1,\ldots,t\}$ which will conclude the proof of Lemma~\ref{lemma:D:main_leftarrow}. Consider any $t'<t$ and systems of linear forms $\mathcal{L}_1,\ldots,\mathcal{L}_{\ell}$. For a function $f:\F_p^n \to \mathbb{R}$ we shorthand $t_{\mathcal{L}}(f)=t_{\mathcal{L},\alpha}(f)$ for all conjugates $\alpha$ since $f=\overline{f}$. In Lemma~\ref{lem:D:identicalProfile} below we shall show that it is possible to construct two different families of functions $f_n,f'_n: \F_p^n \rightarrow [-1,1]$ such that

\begin{enumerate}
\item[(i)] $f_n$ and $f'_n$ cannot be distinguished by averages $t_{\mathcal{L}_i}$;
\item[(ii)] $f_n$ has correlation with polynomials of degree exactly $t$ and of arbitrarily high rank;
\item[(iii)] $f'_n$ is a linear combination of a bounded number of exponentials of polynomials of degree exactly $t'$ and arbitrarily high rank.
\end{enumerate}

We note that the combination of (i), (ii), (iii) implies that $t' \in \mathcal{S}$:  Since $t \in \mathcal{S}$, Condition~(ii) implies by Lemma~\ref{lemma:D:t_in_S} that $\limsup_{n \rightarrow \infty} \|f_n\|_{u(D_n)} > 0$. By Condition (i) this implies that also
$\limsup_{n \rightarrow \infty} \|f'_n\|_{u(D_n)} > 0$; so there exist $g_n \in D_n$ such that $\limsup_{n \rightarrow \infty} |\ip{f'_n,g_n}|=\delta>0$. By Condition (iii) we can express $f'_n(x) = \sum_{i=1}^C a_i \exp(Q_{n,i}(x))$ where $C$ is a uniform bound, $|a_i| \le 1$ and $Q_{n,i}$ are polynomials of degree $t'$ and arbitrarily high rank. Hence, we must have
for infinitely many $n$ that $|\ip{\exp(Q_{n,i}),g_n}| \ge \delta p^{-C}$ for some $i$. Since $\mathcal{D}$ is consistent, we can extend this to all large enough $n$ and complete the proof.

It only remains to prove the following lemma whose proof uses the main result of Section~\ref{sec:interior}.

\begin{lemma}
\label{lem:D:identicalProfile}
Let $\mathcal{L}_1,\ldots,\mathcal{L}_{\ell}$ be systems of linear forms. Let $t'<t$. There exist functions $f_n,f'_n: \F_p^n \rightarrow [-1,1]$ and a constant $C \in \mathbb{N}$ such that the following holds:
\begin{enumerate}
\item[(i)] For every $i \in [\ell]$,
$$
t_{\mathcal{L}_i}(f_n) =  t_{\mathcal{L}_i}(f'_n).
$$

\item[(ii)] There exist polynomials $P_n:\F_p^n \rightarrow \F_p$ satisfying $\deg(P_n)=t$ and $\lim_{n \rightarrow \infty} \rank(P_n)=\infty$, and
$$\liminf_{n \rightarrow \infty} |\ip{ \exp(P_n), f_{n} }| > 0.$$

\item[(iii)] $f'_{n}$ is a linear combination of exponentials of $C$ high rank polynomials of degree exactly $t'$. That is, there exist polynomials $Q_{n,1},\ldots,Q_{n,C}:\F_p^n \rightarrow \F_p$ satisfying $\deg(Q_{n,i})=t'$ and $\lim_{n \rightarrow \infty} \rank(Q_{n,i})=\infty$ for all $1 \le i \le C$, and
$$f'_n(x) = \sum_{i=1}^C a_i \exp(Q_{n,i}(x)),$$
where $|a_i| \le 1$.
\end{enumerate}
\end{lemma}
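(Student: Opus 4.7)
We plan to leverage Theorem~\ref{thm:InteriorLinearForms} to realize a common vector of averages by two structurally distinct functions.

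First we reduce to the case where $\mathcal{L}_1, \ldots, \mathcal{L}_\ell$ are pairwise non-isomorphic and connected: isomorphic systems produce identical averages on every function (so duplicates may be dropped), while for a disconnected system $\mathcal{L} = \mathcal{L}' \dotcup \mathcal{L}''$ with $\span(\mathcal{L}') \cap \span(\mathcal{L}'') = \{\vec 0\}$ the average factorizes as $t_\mathcal{L}(f) = t_{\mathcal{L}'}(f)\, t_{\mathcal{L}''}(f)$, so it suffices to match averages on each connected component. An application of Theorem~\ref{thm:InteriorLinearForms} (after an affine translation/rescaling to the mean-zero regime) then supplies $N \in \N$ and a function $h_0: \F_p^N \to [-1,1]$ of mean zero whose image $\vec z^* := (t_{\mathcal{L}_1}(h_0), \ldots, t_{\mathcal{L}_\ell}(h_0))$ lies in the topological interior of the set of achievable vectors. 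For each sufficiently large $n$, pick polynomials $Q_{n,1}, \ldots, Q_{n,N}$ of degree exactly $t'$ and a polynomial $P_n$ of degree exactly $t$, chosen so that the joint family $\{Q_{n,1}, \ldots, Q_{n,N}, P_n\}$ has rank tending to infinity with $n$.

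Given templates $F, G: \F_p^N \to [-1,1]$ of mean zero (to be determined) and a small parameter $\epsilon > 0$, define
$$f'_n(x) := G\bigl(Q_{n,1}(x),\ldots,Q_{n,N}(x)\bigr), \qquad f_n(x) := (1-\epsilon)\,F\bigl(Q_{n,1}(x),\ldots,Q_{n,N}(x)\bigr) + \tfrac{\epsilon}{2}\bigl(\exp(P_n(x)) + \exp(-P_n(x))\bigr).$$
Both take values in $[-1,1]$ by convexity, since $\tfrac12(\exp(P_n) + \exp(-P_n))$ is a real function bounded by $1$. Fourier expansion of $G$ on $\F_p^N$ rewrites $f'_n = \sum_{\alpha \neq 0} \widehat G(\alpha)\,\exp(Q_{n,\alpha})$ with $Q_{n,\alpha} := \sum_i \alpha(i) Q_{n,i}$ of degree exactly $t'$ and arbitrarily high rank for $\alpha \neq 0$ (the mean-zero assumption removes the would-be constant term), which establishes (iii) with $C = p^N - 1$. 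Since $\deg(P_n) > \deg(Q_{n,\alpha})$ for $\alpha \neq 0$ and the polynomials have high joint rank, the biases $|\bias(Q_{n,\alpha}-P_n)|$ and $|\bias(-2P_n)|$ vanish as $n \to \infty$, giving $\ip{f_n, \exp(P_n)} = \epsilon/2 + o_n(1)$ and hence (ii).

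It remains to arrange (i), the matching of averages $t_{\mathcal{L}_i}(f_n) = t_{\mathcal{L}_i}(f'_n)$ for every $i$; this is the main obstacle. By Proposition~\ref{prop:invariance} and Theorem~\ref{thm:avg-approx-funcs} applied to the combined polynomial factor $\{Q_{n,1}, \ldots, Q_{n,N}, P_n\}$, both averages are approximated, up to $o_n(1)$ error, by continuous functions $\Psi_i^{(t')}(G)$ and $\Psi_i^{(t)}(F,\epsilon)$ that depend only on the templates and on $\epsilon$; the high joint rank causes the $P_n$-part of $f_n$ to decouple from the $Q$-part, so averaging factors cleanly. At $F = G = h_0$ and $\epsilon = 0$, the two families coincide trivially. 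To extend the match to some $\epsilon > 0$, we invoke the openness from Theorem~\ref{thm:InteriorLinearForms}: the map $G \mapsto (\Psi_i^{(t')}(G))_i$ has image with nonempty interior around $\vec z^*$, so its derivative at $h_0$ is surjective onto $\mathbb{R}^\ell$; the implicit function theorem then supplies an $O(\epsilon)$-perturbation of $G$ that cancels the $O(\epsilon)$ drift introduced by the $\exp(\pm P_n)$ terms, restoring exact equality $\Psi_i^{(t')}(G) = \Psi_i^{(t)}(F,\epsilon)$ for every $i$. Establishing this local surjectivity is the most delicate step, and is precisely the content for which Theorem~\ref{thm:InteriorLinearForms} is invoked.
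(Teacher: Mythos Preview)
Your proposal has a genuine gap in the handling of (i). The map you call $G \mapsto (\Psi_i^{(t')}(G))_i$ --- the limiting value of $t_{\mathcal{L}_i}\bigl(G(Q_{n,1},\ldots,Q_{n,N})\bigr)$ for high-rank degree-$t'$ polynomials $Q_{n,j}$ --- is \emph{not} the map $G \mapsto (t_{\mathcal{L}_i}(G))_i$ on $\F_p^N$ to which Theorem~\ref{thm:InteriorLinearForms} applies. The latter corresponds to composing $G$ with the coordinate (degree-$1$) functions; for $t'>1$ the two maps differ, because the joint distribution of $(Q_j(L(\X)))_{j,L}$ for high-rank degree-$t'$ polynomials is governed by the tensor powers $L^{t'}$ of the linear forms, not by the forms themselves. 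So Theorem~\ref{thm:InteriorLinearForms} tells you nothing about the image of $\Psi^{(t')}$ or about surjectivity of its derivative at $h_0$, and your implicit-function step is unsupported. (Even for the map that \emph{is} covered by Theorem~\ref{thm:InteriorLinearForms}, nonempty interior of the image does not by itself yield surjectivity of the derivative at a prescribed point.) A second, smaller issue: even if the $\Psi$-values matched, you would only obtain $t_{\mathcal{L}_i}(f_n)=t_{\mathcal{L}_i}(f'_n)+o_n(1)$, not the exact equality that (i) requires.

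The paper avoids both difficulties by a tensor-product construction rather than by composing templates with degree-$t'$ polynomials. It splits $\F_p^n=\F_p^{n-m}\times\F_p^m$ and sets
\[
f_n=g_{n-m}\otimes\bigl(\delta h_m+(1-\delta)\,F\circ\pi_{m\to N}\bigr),\qquad
f'_n=g_{n-m}\otimes\bigl(G_m\circ\pi_{m\to N}\bigr),
\]
where $g_{n-m}$ is built from a single degree-$t'$ polynomial $Q_{n-m}$ (this factor, common to both, supplies the degree-$t'$ structure for (iii)), $h_m$ is the indicator of $\{P_m=0\}$ for a degree-$t$ polynomial $P_m$ (supplying (ii) via correlation with $Q_{n-m}\otimes P_m$), and $F,G_m:\F_p^N\to[0,1]$ are lifted through the coordinate projection $\pi_{m\to N}$. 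Because $t_{\mathcal{L}}$ is \emph{exactly} multiplicative under $\otimes$ and \emph{exactly} preserved under $\circ\,\pi_{m\to N}$, condition (i) reduces to finding $G_m$ on $\F_p^N$ with $t_{\mathcal{L}_i}(G_m)=t_{\mathcal{L}_i}\bigl(\delta h_m+(1-\delta)F\circ\pi_{m\to N}\bigr)$ for each $i$ --- and this is now literally the setting of Theorem~\ref{thm:InteriorLinearForms}: choose $\delta$ small so the target lies in the $\eps$-ball about $(t_{\mathcal{L}_i}(F))_i$. No invariance result, no implicit function theorem, and (i) holds with equality.
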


\begin{proof}
We first note that it is sufficient to prove the lemma for systems of linear forms which are non-isomorphic and connected, since we can decompose each system to its connected components and remove isomorphic copies.

First let us introduce some notations. For positive integers $m>n$, let $\pi_{m \to n}$ denote the natural projection from $\F_p^m$ to $\F_p^n$ defined as
$$
\pi_{m \rightarrow n}:(x_1,\ldots,x_m) \mapsto (x_1,\ldots,x_n).
$$
For functions $f:\F_p^m \rightarrow \C$ and $g:\F_p^n \rightarrow \C$, let $f \otimes g:\F^{m+n} \rightarrow \C$ denote the function
$$
(f \otimes g)(x_1,\ldots,x_{m+n}) = f(x_1,\ldots,x_m) g(x_{m+1},\ldots,x_{m+n}),
$$
and note that for every system of linear forms $\mathcal{L}$, we have $$t_\mathcal{L}(f \otimes g)= t_\mathcal{L}(f) t_\mathcal{L}(g).$$

By Theorem~\ref{thm:InteriorLinearForms} that will be stated and proved later in Section~\ref{sec:interior}, there exist a constant $N \in \mathbb{N}$, an $\eps>0$, and a function $F:\F_p^N \rightarrow [0,1]$ such that
\begin{equation}
\label{eq:D:interior}
\left\{ z \in \mathbb{R}^{\ell} \ | \ \|z - (t_{\mathcal{L}_1}(F),\ldots,t_{\mathcal{L}_{\ell}}(F))\|_\infty \le \eps \right\} \subseteq  \left\{(t_{\mathcal{L}_1}(f),\ldots,t_{\mathcal{L}_{\ell}}(f))\ |\  f:\F_p^N \rightarrow [0,1]\right\}.
\end{equation}
Consider two sequences of polynomials $P_n,Q_n:\F_p^n \rightarrow \F_p$ such that $\deg(P_n)=t$ and $\deg(Q_n)=t'$, and
$$\lim_{n \rightarrow \infty} \rank(P_n)= \lim_{n \rightarrow \infty} \rank(Q_n) = \infty.$$
Define $g_n: \F_p^n \rightarrow \{-1/p,1-1/p\}$ as $g_n(x)=1-1/p$ if and only if $Q_n(x)=0$. Note that
$$g_n(x) = \sum_{\alpha \in \F_p \setminus \{0\}} \frac{1}{p} \exp(\alpha Q_n(x)).$$
Define $h_n: \F_p^n \rightarrow \{0,1\}$ as $h_n(x)=1$ if and only if $P_n(x)=0$.

Let $\delta>0$ be sufficiently small so that for every $m>N$, and every $i \in [k]$,
$$\left| t_{\mathcal{L}_i}(F) -  t_{\mathcal{L}_i}(\delta h_m + (1-\delta) F \circ \pi_{m \rightarrow N})\right| \le \eps/2.$$
Then by (\ref{eq:D:interior}), for every $m>N$, there exists a function $G_m:\F_p^N \rightarrow  [0,1]$ such that for every $i \in [k]$,
\begin{equation}
\label{eq:D:GnTheSame}
t_{\mathcal{L}_i}(G_m) =  t_{\mathcal{L}_i}(\delta h_m + (1-\delta) F \circ \pi_{m \rightarrow N}).
\end{equation}
For $n>2N$, let $m=\lfloor n/2 \rfloor$ and define $f_{n}, f'_{n}:\F_p^{n} \rightarrow [-1,1]$ as
\begin{eqnarray*}
f_{n} &:=& g_{n-m}\otimes  (\delta h_m + (1-\delta) F \circ \pi_{m \rightarrow N}),\\
f'_{n}  &:=& g_{n-m}\otimes  (G_{m} \circ \pi_{m \to N}).
\end{eqnarray*}
By (\ref{eq:D:GnTheSame})  for every $i \in [k]$ and every $n>2N$,
$$t_{\mathcal{L}_i}(f_n) = t_{\mathcal{L}_i}(g_{n-m}) t_{\mathcal{L}_i}(\delta h_m + (1-\delta)F \circ \pi_{m \rightarrow N}) = t_{\mathcal{L}_i}(g_{n-m}) t_{\mathcal{L}_i}(G_m \circ \pi_{m \to N})
= t_{\mathcal{L}_i}(f'_n),$$
which establishes (i). To establish (ii), let $R_n := Q_{n-m} \otimes P_m$. Note that $R_n$ is a polynomial of degree $t$ and $\lim_{n \to \infty}\rank(R_n)=\infty$. We have
$$
\ip{f_n,\exp(R_n)} = \ip{g_{n-m},\exp(Q_{n-m})} \cdot \ip{\delta h_m + (1-\delta) F \otimes \pi_{m \to N},\exp(P_m)} .
$$
We now lower bound the terms. By the definition of $g_{n-m}$, we have
$$
|\ip{g_{n-m},\exp(Q_{n-m})}| = \left|\frac{1}{p} + \frac{1}{p} \sum_{\alpha \in \F_p \setminus \{0,1\}} \bias((\alpha-1) Q_{n-m})\right| \ge \frac{1}{2p},
$$
for large enough $n-m$ since $\rank(Q_{n-m}) \to \infty$. The function $F \otimes \pi_{m \to N}$ depends only on the first $N$ variables; hence we have $\lim_{m \to \infty} \ip{F \otimes \pi_{m \to N},\exp(P_m)}=0$ since $\rank(P_m) \to \infty$ by Theorem~\ref{thm:regularity}. Finally, by the definition of $h_m$ we have
$$
\ip{h_m,\exp(P_m)}=\Pr_X[P_m(X)=0] \ge \frac{1}{2p}
$$
for large enough $m$, since $\rank(P_m) \to \infty$. We thus conclude that
$$
|\ip{f_n,\exp(R_n)}| \ge \frac{1}{4p^2}
$$
for large enough $n$, which establishes (ii). To conclude the proof we establish (iii). Let
$$
G_m(x) = \sum_{\gamma \in \F_p^N} \widehat{G}(\gamma) \exp\left(\sum_{i=1}^N \gamma(i) x(i)\right)
$$
where $|\widehat{G}(\gamma)| \le 1$. We thus have
\begin{eqnarray*}
f'_n(x)&=&g_{n-m}(x(1),\ldots,x(n-m)) G_m(x(n-m+1),\ldots,x(n-m+N)) \\
&=&
\sum_{\gamma \in \F_p^N, \alpha \in \F_p \setminus \{0\}} \frac{1}{p} \widehat{G}(\gamma) \exp\left(\alpha Q_{n-m}(x(1),\ldots,x(n-m)) + \sum_{i=1}^N \gamma(i) x(n-m+i)\right).
\end{eqnarray*}
Hence, we can express $f'_n$ as the linear combination of $C=(p-1)p^N$ exponentials of polynomials of degree exactly $t'$; and as $N$ is fixed and $n-m \to \infty$, their rank is unbounded. This established (iii) and concludes the proof.
\end{proof}

\section{Characterization of correlation testable properties}
\label{sec:F:testing_correlation}

Consider a family ${\mathcal D}:=\{D_n\}_{n \in \mathbb{N}}$ where $D_n$ is a set of functions from $\F_p^n$ to $\F_p$. Given a function $\ff_n:\F_p^n \rightarrow \F_p$, we want to probabilistically determine whether $\ff_n$ has correlation with $D_n$. That is, we want to determine if $\|\exp(\ff_n)\|_{u(D_n)}$ is non-negligible or not, where we are only allowed to read the value of $\ff_n$ on a few points.  This was made into a precise definition in Definition~\ref{def:corTestable}.

We study proper dual families $\mathcal{D}$. We recall that a family $\mathcal{D}$ is {\em proper dual} if the following conditions (first introduced in Section~\ref{sec:introCorTesting})   hold:
\begin{itemize}
\item {\bf A1: Consistency} For positive integers $m>n$ and $\gg \in D_n$, the function $\hh: \F_p^m \rightarrow \F_p$ defined as $\hh(x_1,\ldots,x_m)=\gg(x_1,\ldots,x_n)$ belongs to $D_m$.

\item {\bf A2: Affine invariance} For every positive integer $n$, if $\gg \in D_n$, then for every $A \in \Aff(n,\F_p)$, we have $A\gg \in D_n$.

\item {\bf A3: Sparsity} For every $\eps>0$ and large enough $n$, we have $|D_n| \le p^{\eps p^n}$.
\end{itemize} 

\subsection{Correlation testing by averages over linear forms}

We first show that if $\mathcal{D}$ is a proper dual family which is correlation testable using $q$ queries, then it is in fact also testable using averages of linear forms. When arguing about functions to $\F_p$, one may allow more general types of averages. Let $\mathcal{L}=\{L_1,\ldots,L_m\}$ be a system of linear forms in $k$ variables. Let $\beta \in \F_p^m$ be a vector of coefficients. Recall that for a function $\ff:\F_p^n \to \F_p$, we define the average
$$
t^*_{\mathcal{L},\beta}(\ff) = \Ex_{\X \in (\F_p^n)^k} \left[\exp \left(\sum_{i=1}^m \beta(i) \ff(L_i(\X)) \right)\right].
$$
We note that for functions $\ff:\F_p^n \to \F_p$, these averages generalize the previous averages $t_{\mathcal{L},\alpha}$ which were defined for bounded functions. Indeed, for $\alpha \in \{0,1\}^m$ let $\beta \in \{-1,1\}^m$ be defined as $\beta(i) = (-1)^{\alpha(i)}$, then
$$
t^*_{\mathcal{L},\beta}(\ff) = t_{\mathcal{L},\alpha}(\exp(\ff)).
$$

\begin{lemma}
\label{lemma:F:correlation_testable_by_linear_forms_with_few_queries}
Suppose that a proper dual family ${\mathcal D}=\{D_n\}$ is correlation testable with $q$ queries. Then for every $\eps>0$, there exists  $\delta \in (0,\eps)$, $n_0 \in \mathbb{N}$, and homogeneous systems of linear forms
$\mathcal{L}_1,\ldots,\mathcal{L}_{\ell}$ with $m_1,\ldots,m_{\ell}$ linear forms, accordingly, and corresponding coefficients $\beta_1 \in \F_p^{m_1},\ldots,\beta_{\ell} \in \F_p^{m_{\ell}}$, such that the closures of the following two sets are disjoint:
$$T_\eps := \{(t^*_{\mathcal{L}_1,\beta_1}(\ff),\ldots, t^*_{\mathcal{L}_{\ell},\beta_{\ell}}(\ff)) | n >n_0,
\ff:\F_p^n \rightarrow \F_p, \|\exp(\ff)\|_{u(D_n)} \ge \eps\},$$
and
$$S_\eps := \{(t^*_{\mathcal{L}_1,\beta_1}(\ff),\ldots, t^*_{\mathcal{L}_{\ell},\beta_{\ell}}(\ff)) | n>n_0, \ff:\F_p^n \rightarrow \F_p, \|\exp(\ff)\|_{u(D_n)} \le \delta\}.$$
Moreover, the systems $\mathcal{L}_1,\ldots,\mathcal{L}_{\ell}$ have Cauchy-Schwarz complexity at most $q-2$.
\end{lemma}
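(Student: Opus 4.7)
The plan is to extract the desired systems of linear forms from the given test $(\mu,\Gamma)$ in three maneuvers: symmetrize $\mu$ via affine invariance; decompose the symmetrized distribution as a signed linear combination of homogeneous-system distributions; and Fourier-expand $\Gamma$ so that the closure condition collapses to a compactness argument.

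First, assumption \textbf{A2} combined with the affine-invariance of the inner product gives $\|\exp(\ff\circ A)\|_{u(D_n)}=\|\exp(\ff)\|_{u(D_n)}$ for every $A\in\Aff(n,\F_p)$. Averaging $\mu$ over the action of $\Aff(n,\F_p)$ on $(\F_p^n)^q$ therefore produces an affine-invariant distribution $\mu'$ that still separates $\|\exp(\ff)\|_{u(D_n)}\ge\eps$ from $\le\delta$ by the thresholds $\theta^+,\theta^-$, because the original test succeeds uniformly on every affine copy of $\ff$.

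Next, classify the orbits of $\Aff(n,\F_p)$ on $(\F_p^n)^q$ by the subspace of affine relations $V(x)=\{\alpha\in\F_p^q:\sum\alpha_i=0,\ \sum\alpha_i x_i=0\}$, a finite datum independent of $n$. Write $\mu'=\sum_V p_{V,n}\,U_V$ as a convex combination of the uniform distributions $U_V$ on each orbit $\mathcal{O}_V$. For each $V$, fix a representative $v_{V,1},\ldots,v_{V,q}\in\F_p^{r_V}$ of $V$-type and set $L_{V,i}=(1,v_{V,i}^{(1)},\ldots,v_{V,i}^{(r_V)})$, so that $\mathcal{L}_V=\{L_{V,i}\}$ is a system of $m_V\le q$ distinct, pairwise non-proportional forms (the common first coordinate equal to $1$ precludes proportionality), which is homogeneous by Observation~\ref{obs:canonical}. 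Let $\mu^{\mathcal{L}_V}$ denote the law of $(L_{V,i}(Y_0,\ldots,Y_{r_V}))_{i=1}^q$ for i.i.d.\ uniform $Y_0,\ldots,Y_{r_V}\in\F_p^n$; this is affine-invariant and supported on $\bigsqcup_{V'\supseteq V}\mathcal{O}_{V'}$, with the weight on $\mathcal{O}_V$ itself being the probability that $Y_1,\ldots,Y_{r_V}$ are linearly independent, bounded below uniformly for large $n$. The change-of-basis matrix from $\{\mu^{\mathcal{L}_V}\}_V$ to $\{U_V\}_V$ is therefore triangular with nonzero diagonal, so Möbius inversion expresses each $U_V$ (and hence $\mu'$) as a signed linear combination of the $\mu^{\mathcal{L}_V}$ with coefficients bounded uniformly in $n$.

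Finally, expand $\Gamma(y)=\sum_{\beta\in\F_p^q}\widehat{\Gamma}(\beta)\exp(\langle\beta,y\rangle)$; plugging this into the acceptance probability and using the decomposition of $\mu'$ yields
$$\Pr_{\mu'}\!\bigl[\Gamma(\ff(X_1),\ldots,\ff(X_q))=1\bigr]=\sum_{i=1}^{\ell}c_{i,n}\,t^*_{\mathcal{L}_i,\beta_i}(\ff),$$
for a finite, $n$-independent list $(\mathcal{L}_i,\beta_i)_{i=1}^{\ell}$ and uniformly bounded coefficients $c_{i,n}$. If the closures of $T_\eps$ and $S_\eps$ intersected, one could pick $n_k\to\infty$ together with $\ff^\pm_k$ whose tuples $(t^*_{\mathcal{L}_i,\beta_i}(\ff^\pm_k))_i$ both converge to a common limit $z$; passing to a further subsequence on which $c_{i,n_k}\to c_i$, both acceptance probabilities would converge to $\sum_i c_i z_i$, contradicting the separation $\theta^+>\theta^-$. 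The Cauchy-Schwarz complexity bound $\le q-2$ is immediate from the remark following Definition~\ref{dfn:cauchy-schwarz-complexity}, since each $\mathcal{L}_V$ has at most $q$ pairwise linearly independent forms.

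The main obstacle is step two --- the Möbius-inversion decomposition of an arbitrary affine-invariant distribution into a signed combination of homogeneous-system distributions, with coefficients bounded independently of $n$. The symmetrization and the Fourier expansion are routine bookkeeping on top of this structural fact.
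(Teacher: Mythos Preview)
Your proposal is correct and follows the same three-step outline as the paper (symmetrize via affine invariance, reduce to homogeneous systems, Fourier-expand $\Gamma$). The difference is in step two. You express each orbit distribution $U_V$ \emph{exactly} as a signed linear combination of the homogeneous-system distributions $\mu^{\mathcal{L}_{V'}}$ via M\"obius inversion on the poset of relation spaces, and then run a compactness argument on the bounded coefficients $c_{i,n}$ to separate the closures. The paper bypasses this entirely: it simply observes that for large $n$ the conditioning on linear independence of $Y_1,\ldots,Y_r$ holds with probability $1-o_n(1)$, so $U_V$ is $o_n(1)$-close in total variation to the single distribution $\mu^{\mathcal{L}_V}$. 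This gives directly that the acceptance probability is within $o_n(1)$ of a fixed (up to $o_n(1)$) linear combination of the $t^*_{\mathcal{L}_i,\beta_i}(\ff)$'s, and the separation of closures follows at once from $\theta^+>\theta^-$. In other words, the ``main obstacle'' you identify --- the M\"obius inversion with uniformly bounded coefficients --- is not needed: approximation suffices where you insist on an exact decomposition. Your route works (the change-of-basis matrix is $I+O(p^{-n})$ on a finite index set, so its inverse is bounded), but it is strictly more work than necessary.
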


\begin{proof}
Since $D_n$ is a proper dual, by Condition $\bf A2$, for every $\ff_n:\F_p^n \to \F_p$ and every $A \in \Aff(n,\F_p)$, we have that
$\|\exp(A \ff_n)\|_{u(D_n)} = \|\exp(\ff_n)\|_{u(D_n)}$. Let $A \in \Aff(n,\F_p)$ be a uniform random invertible affine transformation. Then by the assumption that $\mathcal{D}$ is correlation testable, we have that
\begin{itemize}
\item If $\|\exp(\ff_n)\|_{u(D_n)}\ge \eps$, then $\Pr_{(X_1,\ldots,X_q) \sim \mu, A \in \Aff(n,\F_p) }[\Gamma(\ff_n(A X_1),\ldots,\ff_n(A X_q))=1] \ge \theta^+$.
\item If $\|\exp(\ff_n)\|_{u(D_n)}\le \delta$, then $\Pr_{(X_1,\ldots,X_q) \sim \mu,, A \in \Aff(n,\F_p)}[\Gamma(\ff_n(AX_1),\ldots,\ff_n(AX_q))=1] \le \theta^-$.
\end{itemize}
We establish the lemma by showing that if we set $n_0$ large enough, then the probability
$$
\Pr_{(X_1,\ldots,X_q) \sim \mu, A \in \Aff(n,\F_p)}[\Gamma(\ff(AX_1),\ldots,\ff(AX_q))=1]
$$
can be approximated with an arbitrarily small error by linear combinations of $t^*_{\mathcal{L}_1,\beta_1}(\ff),\ldots,t^*_{\mathcal{L}_{\ell},\beta_{\ell}}(\ff)$, where $\{(\mathcal{L}_i,\beta_i)\}_{1 \le i \le \ell}$ are all possible homogeneous systems of at most $q$ linear forms. Note that $\ell$ is a constant depending only on $p,q$, and that the Cauchy-Schwarz complexity of any homogeneous system of $t \le q$ linear forms is at most $t-2 \le q-2$.
We start by decomposing $\Gamma$ to its Fourier decomposition
$$
\Gamma(z(1),\ldots,z(q)) = \sum_{\gamma \in \F_p^q} \widehat{\Gamma}(\gamma) \exp\left(\sum_{i=1}^q
\gamma(i) \cdot z(i)\right).
$$
We thus have that
$$
\Pr_{(X_1,\ldots,X_q) \sim \mu,A \sim \Aff(n,\F_p)}[\Gamma(\ff(AX_1),\ldots,\ff(AX_q))=1] =
\sum_{\gamma \in \F_p^q} \widehat{\Gamma}(\gamma) \Ex \left[ \exp\left(\sum_{i=1}^q \gamma(i) \cdot \ff(A X_i)\right) \right],
$$
where the expectation is taken over $(X_1,\ldots,X_q) \sim \mu$ and $A \sim \Aff(n,\F_p)$. Thus, it is enough to
show that each term $\Ex \left[\exp(\sum_{i=1}^q \beta(i) \cdot \ff(A X_i))\right]$ can be approximated by linear combinations of $\{t^*_{\mathcal{L}_i,\beta_i}(\ff)\}_{1 \le i \le k}$.

Fix $(x_1,\ldots,x_q) \in (\F_p^n)^q$. Suppose that the rank of ${\rm span}\{x_1,\ldots,x_q\}$ over $\F_p$ is $r$. Let $y_1,\ldots,y_r \in \F_p^n$ form a basis for ${\rm span} \{x_1,\ldots,x_q\}$, so that $x_i = \sum_{j=1}^r \lambda_{i,j} y_j$, for every $1 \le i \le q$. Then the distribution of $(Ax_1,\ldots,Ax_q)$ is the same as the distribution of $(Y_0+\sum_{j=1}^r \lambda_{1,j}Y_j,\ldots,Y_0+\sum_{j=1}^r \lambda_{q,j}Y_j)$, where $Y_0,Y_1,\ldots,Y_r$ are i.i.d. random variables taking values in $\F_p^n$ uniformly at random conditioned on $Y_1,\ldots,Y_{\ell}$ being linearly independent. However since if we pick $Y_1,\ldots,Y_r$ independently and uniformly at random, with probability $1-o_{n \rightarrow \infty}(1)$ they will be linearly independent,
by taking $n$ to be sufficiently large this distribution can be made arbitrarily close to the distribution of $(Y_0+\sum_{j=1}^r \lambda_{1,j}Y_j,\ldots,Y_0+\sum_{j=1}^r \lambda_{q,j}Y_j)$, where $Y_0,\ldots,Y_r$ are i.i.d. random variables taking values in $\F_p^n$ uniformly at random.

Thus, we can approximate each term $\Ex \left[\exp(\sum_{i=1}^q \beta(i) \cdot \ff(A x_i))\right]$ by
$
\Ex\left[ \exp(\sum_{i=1}^q \beta(i) \cdot \ff(Y_0+\sum_{j=1}^r \lambda_{i,j}Y_j))\right],
$
which is one of the averages $t^*_{\mathcal{L}_i,\beta_i}(\ff)$. We now conclude the proof, since
$\Ex\left[ \exp(\sum_{i=1}^q \beta(i) \cdot \ff(A X_i))\right]$ where $(X_1,\ldots,X_q) \sim \mu$ can be approximated by an appropriate weighted average of $t^*_{\mathcal{L}_i,\beta_1}(\ff),\ldots,t^*_{\mathcal{L}_{\ell},\beta_{\ell}}(\ff)$.
\end{proof}

\subsection{From field functions to distributional functions}

The next step is to move from functions $\ff:\F_p^n \to \F_p$ to functions whose output lies in some convex set. Once this is accomplished, we can use the same techniques used for studying functions $f:\F_p^n \to \D$  that were used in Section~\ref{sec:D:testing_correlation}.

Let $\Dp$ denote the family of probability measures over $\F_p$. That is, $\Dp \subset \mathbb{R}^p$ is given by
$$
\Dp=\{\mu:\F_p \to [0,1]: \sum_{c \in \F_p} \mu(c) = 1\}.
$$
We identify every element $c \in \F_p$ with its corresponding dirac measure on $\F_p$. That is $c \in \F_p$ is corresponded with the probability measure $\mu_c$ where $\mu_c(c)=1$ and $\mu_c(c')=0$ for all $c' \ne c$. We refer to functions $\Gamma: \F_p^n \to \Dp$ as {\em distributional functions}. Note that they are a superfamily of functions from $\F_p^n$ to $\F_p$, which can be regarded as deterministic functions. Given a distributional function $\Gamma$, we identify it with a distribution over functions from $\F_p^n$ to $\F_p$: If $\FF \sim \Gamma$, the value $\FF(x)$ is independently chosen for every $x \in \F_p^n$ according to the distribution $\Gamma(x)$.

We extend the notion of averages $t^*_{\mathcal{L},\beta}$ to distributional functions $\Gamma$.
For $c \in \F_p$ define the function $\mathfrak{a}_c:\Dp \to \D$ to be
$$
\mathfrak{a}_c(\mu) = \Ex_{z \sim \mu} [\exp(c \cdot z)].
$$
For a distributional function $\Gamma:\F_p^n \to \Dp$ we consider the functions $\mathfrak{a}_c \circ \Gamma:\F_p^n \to \D$, which can equivalently be defined as
$$
(\mathfrak{a}_c \circ \Gamma)(x) = \Ex_{\FF \sim \Gamma} [\exp(c \cdot \FF(x))].
$$
Let $\mathcal{L}=\{L_1,\ldots,L_m\}$ be a system of $m$ linear forms in $k$ variables, and let $\beta \in \F_p^m$. We define
$$
t^*_{\mathcal{L},\beta}(\Gamma) = \Ex_{\X \in (\F_p^n)^k}\left[ \prod_{i=1}^m (\mathfrak{a}_{\beta(i)} \circ \Gamma)(L_i(\X))\right].
$$
Note that for functions $\ff:\F_p^n \to \F_p$ this definition  coincides  with our previous definition.

\begin{claim}
\label{claim:F:avg_F_mu_thesame_t*}
Let $\mathcal{L}=\{L_1,\ldots,L_m\}$ be a system of linear forms in $k$ variables. Let $\beta \in \F_p^m$ be a vector of corresponding coefficients. Then for every distributional function $\Gamma: \F_p^n \to \Dp$, and every $\epsilon>0$, we have
$$
\Pr_{\FF \sim \Gamma}\left[\bigg|t^*_{\mathcal{L},\beta}(\FF)-t^*_{\mathcal{L},\beta}(\Gamma)\bigg| \le \epsilon \right] =1-o_n(1).
$$
\end{claim}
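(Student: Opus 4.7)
The plan is to treat $t^*_{\mathcal{L},\beta}(\FF)$ as a random variable on the product probability space induced by the independent draws $\FF(x) \sim \Gamma(x)$, $x \in \F_p^n$, and to establish concentration around $t^*_{\mathcal{L},\beta}(\Gamma)$ via a first- and second-moment computation combined with Chebyshev's inequality. The key input is the independence of the $\FF(x)$ across distinct points $x$, coupled with the fact that, since the forms in $\mathcal{L}$ are pairwise distinct elements of $\F_p^k$, any two evaluations $L_i(\X)$ and $L_j(\X)$ at a uniformly random $\X \in (\F_p^n)^k$ coincide with probability only $p^{-n}$ (a nonzero linear form in $k$ variables vanishes on a uniform $(\F_p^n)^k$ point with that probability).

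For the first moment, I would exchange the order of expectation and write
\[
\Ex_{\FF}[t^*_{\mathcal{L},\beta}(\FF)] \;=\; \Ex_{\X}\, \Ex_{\FF} \left[ \prod_{i=1}^m \exp\bigl(\beta(i)\, \FF(L_i(\X))\bigr) \right].
\]
On the event that the $m$ points $L_1(\X),\ldots,L_m(\X)$ are pairwise distinct, the inner $\FF$-expectation factors as $\prod_i (\mathfrak{a}_{\beta(i)} \circ \Gamma)(L_i(\X))$ by independence, and averaging over $\X$ matches exactly $t^*_{\mathcal{L},\beta}(\Gamma)$. A union bound over the at most $\binom{m}{2}$ coincidence events, each of probability $p^{-n}$, combined with the trivial bound of $1$ on the integrand, yields $\bigl|\Ex_{\FF}[t^*_{\mathcal{L},\beta}(\FF)] - t^*_{\mathcal{L},\beta}(\Gamma)\bigr| \le \binom{m}{2} p^{-n} = o_n(1)$.

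For the second moment, I would carry out the analogous calculation for $|t^*_{\mathcal{L},\beta}(\FF)|^2$. Using independent copies $\X,\Y \in (\F_p^n)^k$,
\[
|t^*_{\mathcal{L},\beta}(\FF)|^2 \;=\; \Ex_{\X,\Y} \left[ \exp\!\left( \sum_{i=1}^m \beta(i)\, \bigl(\FF(L_i(\X)) - \FF(L_i(\Y))\bigr) \right) \right].
\]
Provided the $2m$ points $\{L_i(\X), L_i(\Y)\}_{i=1}^m$ are pairwise distinct, the $\FF$-expectation factors and, after averaging over $(\X,\Y)$, produces exactly $|t^*_{\mathcal{L},\beta}(\Gamma)|^2$. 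The distinctness fails with probability at most $\binom{2m}{2} p^{-n}$, bounded via the same per-pair analysis: for same-block pairs $L_i(\X) = L_j(\X)$ the argument is as before, while for cross-block pairs $L_i(\X) = L_j(\Y)$ the relevant nonzero form lives in $2k$ variables and still vanishes with probability $p^{-n}$ because $\X$ and $\Y$ are independent and uniform. Combining this with the first-moment bound gives $\Var_{\FF}[t^*_{\mathcal{L},\beta}(\FF)] = O(m^2 p^{-n})$.

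Chebyshev's inequality then gives $\Pr_{\FF}\bigl[|t^*_{\mathcal{L},\beta}(\FF) - \Ex_{\FF}[t^*_{\mathcal{L},\beta}(\FF)]| \ge \epsilon/2\bigr] \le O(m^2 p^{-n}/\epsilon^2) = o_n(1)$, and for sufficiently large $n$ the $O(p^{-n})$ gap between $\Ex_{\FF}[t^*_{\mathcal{L},\beta}(\FF)]$ and $t^*_{\mathcal{L},\beta}(\Gamma)$ is itself smaller than $\epsilon/2$, which completes the proof. There is no real conceptual obstacle; the only point requiring care is the pairwise collision analysis for evaluated linear forms, which is essentially bookkeeping once one isolates the observation about when a random $\X$ annihilates a fixed nonzero form.
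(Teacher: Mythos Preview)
Your proposal is correct and follows essentially the same approach as the paper: a first- and second-moment computation followed by Chebyshev, using that collisions among the evaluated linear forms (within one copy for the mean, across two independent copies for the variance) each occur with probability $p^{-n}$ and are handled by a union bound. The only cosmetic difference is that the paper bounds the variance by observing directly that $A(\x')$ and $A(\x'')$ are independent whenever their evaluation sets are disjoint, whereas you compute $\Ex_{\FF}[|t^*(\FF)|^2]$ and compare it to $|t^*_{\mathcal{L},\beta}(\Gamma)|^2$; both routes yield the same $O(m^2 p^{-n})$ bound.
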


\begin{proof}
The proof follows by a first and second moment estimation, and then applying Chebyshev's inequality.
Let $\FF \sim \Gamma$. Fix $\x \in (\F_p^n)^k$, and consider the random variable
$$
A(\x)=\exp\left(\sum_{i=1}^m \beta(i) \FF(L_i(\x)) \right).
$$
We have $t^*_{\mathcal{L},\beta}(\FF) = \frac{1}{p^{nk}} \sum_{\x \in (\F_p^n)^k} A(\x)$. Note that when $L_1(\x),\ldots,L_m(\x)$ are all distinct, we have
$$
\Ex_{\FF \sim \Gamma}[A(\x)] = \prod_{i=1}^n (\mathfrak{a}_{\beta(i)} \circ \Gamma)(L_i(\x)).
$$
Thus, we get that
$$
\left|\Ex_{\FF \sim \Gamma}[t^*_{\mathcal{L},\beta}(\FF)]-t^*_{\mathcal{L},\beta}(\Gamma)\right| \le \Pr_{\X \in (\F_p^n)^k}[L_1(\X),\ldots,L_m(\X) \textrm{ not all distinct}] \le m^2 p^{-n},
$$
where the second inequality follows by the union bound. We now bound the variance of $t^*_{\mathcal{L},\beta}(\FF)$.
Note that two random variables $A(\x'),A(\x'')$ are independent if $\{L_1(\x'),\ldots,L_m(\x')\}$ and $\{L_1(\x''),\ldots,L_m(\x'')\}$ are disjoint. We thus can bound
$$
\mathrm{Var}_{\FF \sim \Gamma}[t^*_{\mathcal{L},\beta}(\FF)] \le \Pr_{\x',\x'' \in (\F_p^n)^k}[\{L_1(\x'),\ldots,L_m(\x')\} \cap \{L_1(\x''),\ldots,L_m(\x'')\} \ne \emptyset] \le m^2 p^{-n},
$$
where the second inequality follows from the union bound. The claim follows from Chebychev's bound.
\end{proof}

We extend also the notion of correlation to distributional functions. We shorthand $\exp(\Gamma):=\mathfrak{a}_1 \circ \Gamma$, and consider
$$
\|\exp(\Gamma)\|_{u(D_n)} = \sup_{\gg \in D_n} |\ip{\exp(\Gamma),\exp(\gg)}|.
$$
The following claim is the only place in the proof of Theorem~\ref{thm:F:main} that uses the sparsity Condition $\bf A3$.

\begin{claim}
\label{claim:F:avg_F_mu_thesame_exp}
Let $\mathcal{D}=\{D_n\}_{n \in \N}$ be a proper dual family. Then for every distributional function $\Gamma: \F_p^n \to \Dp$ and any $\eps>0$ we have
$$
\Pr_{\FF \sim \Gamma}\left[\bigg|\|\exp(\FF)\|_{u(D_n)} - \|\exp(\Gamma)\|_{u(D_n)}\bigg| \ge \eps \right] =o_n(1).
$$
\end{claim}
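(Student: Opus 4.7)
The plan is to reduce the statement to a large-deviation estimate for each fixed $\gg \in D_n$ and then take a union bound over $\gg$, using the sparsity assumption \textbf{A3} to ensure that $|D_n|$ is small enough to absorb the union bound.

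First I would fix $\gg \in D_n$ and consider the random variable
$$
Z_\gg \;:=\; \ip{\exp(\FF),\exp(\gg)} \;=\; \frac{1}{p^n}\sum_{x \in \F_p^n} \exp(\FF(x))\,\overline{\exp(\gg(x))}.
$$
When $\FF \sim \Gamma$, the values $\FF(x)$ are independent, so $Z_\gg$ is an average of $p^n$ independent complex random variables bounded in absolute value by $1$. Its expectation is computed as
$$
\Ex_{\FF\sim\Gamma}[Z_\gg] \;=\; \Ex_{X \in \F_p^n}\!\left[(\mathfrak{a}_1 \circ \Gamma)(X)\,\overline{\exp(\gg(X))}\right] \;=\; \ip{\exp(\Gamma),\exp(\gg)}.
$$
Applying Hoeffding's inequality to the real and imaginary parts of $Z_\gg - \Ex[Z_\gg]$ separately yields a bound of the form
$$
\Pr_{\FF \sim \Gamma}\!\left[\,|Z_\gg - \ip{\exp(\Gamma),\exp(\gg)}| > \eps/2\,\right] \;\le\; 4\exp\!\left(-c\,\eps^2\, p^n\right)
$$
for some absolute constant $c>0$.

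Next I would union bound over all $\gg \in D_n$. By Condition \textbf{A3}, for any $\eps'>0$ and sufficiently large $n$ we have $|D_n| \le p^{\eps' p^n}$. Choosing $\eps'$ small enough that $\eps' \ln p < c\,\eps^2/2$, the union bound gives
$$
\Pr_{\FF \sim \Gamma}\!\left[\,\exists\, \gg \in D_n :\ |Z_\gg - \ip{\exp(\Gamma),\exp(\gg)}| > \eps/2\,\right] \;\le\; 4\,|D_n|\,\exp(-c\,\eps^2\,p^n) \;=\; o_n(1).
$$
On the complement of this event, the reverse triangle inequality gives
$$
\bigl|\sup_{\gg \in D_n} |Z_\gg|\;-\;\sup_{\gg \in D_n} |\ip{\exp(\Gamma),\exp(\gg)}|\bigr| \;\le\; \sup_{\gg \in D_n} |Z_\gg - \ip{\exp(\Gamma),\exp(\gg)}| \;\le\; \eps/2 \;<\; \eps,
$$
which is precisely $\bigl|\|\exp(\FF)\|_{u(D_n)} - \|\exp(\Gamma)\|_{u(D_n)}\bigr| < \eps$.

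The only nontrivial step is checking that \textbf{A3} provides enough slack: the probability bound from Hoeffding decays as $\exp(-\Omega(p^n))$, while $|D_n|$ grows at most as $\exp(\eps' p^n \ln p)$ with $\eps'$ arbitrarily small. Since we may choose $\eps'$ after $\eps$ (and $c$) are fixed, the two exponents can always be arranged so that the product tends to $0$. This is exactly the role that sparsity was introduced to play in the definition of a proper dual, so the argument goes through cleanly.
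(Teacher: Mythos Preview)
Your proposal is correct and follows essentially the same approach as the paper: fix $\gg \in D_n$, apply a Chernoff/Hoeffding-type concentration bound to the average of independent bounded random variables $\exp(\FF(x)-\gg(x))$, then union bound over $D_n$ using the sparsity condition \textbf{A3}. Your write-up is in fact slightly more explicit than the paper's, since you spell out the reverse triangle inequality step that passes from uniform closeness of the inner products to closeness of the suprema.
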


\begin{proof}
Fix $\gg \in D_n$, and consider the random variable $\ip{\exp(\FF),\exp(\gg)} = \frac{1}{p^n} \sum_{x \in \F_p^n} \exp(\FF(x)-\gg(x))$. Its expected value is $\ip{\exp(\Gamma),\exp(\gg)}$, and since the values $\{\FF(x): x \in \F_p^n\}$ are chosen independently, we can apply Chernoff's bound and get that
$$
\Pr_{\FF \sim \Gamma}\left[\left|\ip{\exp(\FF),\exp(\gg)} - \ip{\exp(\Gamma),\exp(\gg)}\right| \ge \eps \right] \le 2e^{-c \cdot p^n}
$$
for some constant $c=c(\eps)>0$. By the sparsity Condition $\bf A3$ we get that for every $c'>0$ there exists $n_0$, such that for every $n>n_0$ we have $|D_n| \le p^{c' p^n}$. We conclude the proof by choosing $c'$ small enough such that $p^{c'}<e^c$, and apply the union bound over all $\gg \in D_n$.
\end{proof}

We thus obtain the following lemma, which allows us to consider distributional functions instead of field functions.

\begin{lemma}
\label{lemma:F:extend_correlation_testability}
Suppose that a proper dual family ${\mathcal D}=\{D_n\}$ is correlation testable with $q$ queries. Then for every $\eps>0$, there exists  $\delta \in (0,\eps)$, $n_0 \in \mathbb{N}$, and homogeneous systems of linear forms
$\mathcal{L}_1,\ldots,\mathcal{L}_{\ell}$ with $m_1,\ldots,m_{\ell}$ linear forms, accordingly, and corresponding coefficients $\beta_1 \in \F_p^{m_1},\ldots,\beta_{\ell} \in \F_p^{m_{\ell}}$, such that the closures of the following two sets are disjoint:
$$T_\eps := \{(t^*_{\mathcal{L}_1,\beta_1}(\Gamma),\ldots, t^*_{\mathcal{L}_{\ell},\beta_{\ell}}(\Gamma)) | n >n_0,
\Gamma:\F_p^n \rightarrow \Dp, \|\exp(\Gamma)\|_{u(D_n)} \ge \eps\},$$
and
$$S_\eps := \{(t^*_{\mathcal{L}_1,\beta_1}(\Gamma),\ldots, t^*_{\mathcal{L}_{\ell},\beta_{\ell}}(\Gamma)) | n>n_0, \Gamma:\F_p^n \rightarrow \Dp, \|\exp(\Gamma)\|_{u(D_n)} \le \delta\}.$$
Moreover, the systems $\mathcal{L}_1,\ldots,\mathcal{L}_{\ell}$ have Cauchy-Schwarz complexity at most $q-2$.
\end{lemma}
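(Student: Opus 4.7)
\medskip
\noindent\textbf{Proof plan.} The plan is to reduce this to Lemma~\ref{lemma:F:correlation_testable_by_linear_forms_with_few_queries} by viewing a distributional function $\Gamma:\F_p^n \to \Dp$ as generating a random deterministic function $\FF \sim \Gamma$ and using the two concentration claims (Claim~\ref{claim:F:avg_F_mu_thesame_t*} and Claim~\ref{claim:F:avg_F_mu_thesame_exp}) to transfer both the values of $t^*_{\mathcal{L}_i,\beta_i}$ and the correlation $\|\exp(\cdot)\|_{u(D_n)}$ from $\Gamma$ to $\FF$ with vanishing error. Given the target $\eps>0$, I would first apply Lemma~\ref{lemma:F:correlation_testable_by_linear_forms_with_few_queries} with parameter $\eps/2$ in place of $\eps$. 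This yields some $\delta'\in (0,\eps/2)$, homogeneous systems of linear forms $\mathcal{L}_1,\ldots,\mathcal{L}_\ell$ of Cauchy-Schwarz complexity at most $q-2$, coefficients $\beta_1,\ldots,\beta_\ell$, and $n_1\in\N$ such that the closures of the corresponding \emph{field} sets $T^{\mathrm{fld}}_{\eps/2}$ and $S^{\mathrm{fld}}_{\delta'}$ are disjoint. Since both lie in a compact subset of $\mathbb{C}^\ell$, their closures are separated by some $L_\infty$-distance $\eta>0$.

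I would then set $\delta:=\delta'/2$ and claim that the same systems $\mathcal{L}_i,\beta_i$ work for the distributional sets $T^{\mathrm{dist}}_\eps$ and $S^{\mathrm{dist}}_\delta$. The argument in both directions is the same: given $\Gamma:\F_p^n\to\Dp$ with $n$ large and $\FF \sim \Gamma$, Claim~\ref{claim:F:avg_F_mu_thesame_t*} (applied to each of the $\ell$ systems and combined with the union bound) gives
\[
\Pr_{\FF\sim\Gamma}\Bigl[\max_{1\le i\le \ell}\bigl|t^*_{\mathcal{L}_i,\beta_i}(\FF)-t^*_{\mathcal{L}_i,\beta_i}(\Gamma)\bigr|\le \eta/3\Bigr]=1-o_n(1),
\]
while Claim~\ref{claim:F:avg_F_mu_thesame_exp} (applied with accuracy $\min(\eps/2,\delta/2)$) gives
\[
\Pr_{\FF\sim\Gamma}\Bigl[\bigl|\|\exp(\FF)\|_{u(D_n)}-\|\exp(\Gamma)\|_{u(D_n)}\bigr|\le \delta/2\Bigr]=1-o_n(1).
\]
A union bound yields, for $n$ large enough, a deterministic $\ff:\F_p^n\to\F_p$ simultaneously enjoying both approximations. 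Thus every point of $T^{\mathrm{dist}}_\eps$ (respectively $S^{\mathrm{dist}}_\delta$) is within $L_\infty$-distance $\eta/3$ of a point of $T^{\mathrm{fld}}_{\eps/2}$ (respectively $S^{\mathrm{fld}}_{\delta'}$), once $n$ exceeds some threshold $n_0\ge n_1$.

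Taking closures, the closure of $T^{\mathrm{dist}}_\eps$ lies within $L_\infty$-distance $\eta/3$ of the closure of $T^{\mathrm{fld}}_{\eps/2}$, and similarly for the $S$-sets. Since the field closures are at distance at least $\eta$, the triangle inequality forces the distributional closures to be at distance at least $\eta-2\eta/3=\eta/3>0$, hence disjoint. The Cauchy-Schwarz complexity bound and homogeneity are inherited directly from Lemma~\ref{lemma:F:correlation_testable_by_linear_forms_with_few_queries}. The only mildly subtle point is that the second claim quietly uses Condition~$\mathbf{A3}$ (sparsity of $D_n$) to absorb the union bound over $g\in D_n$ via Chernoff, but this has already been done inside Claim~\ref{claim:F:avg_F_mu_thesame_exp}, so the present proof is essentially a routine parameter-chasing exercise on top of the two claims and Lemma~\ref{lemma:F:correlation_testable_by_linear_forms_with_few_queries}, with no genuine obstacle beyond keeping the constants $\eps,\eps/2,\delta,\delta',\eta$ correctly ordered.
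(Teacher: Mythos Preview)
Your proposal is correct and follows essentially the same approach as the paper: apply Lemma~\ref{lemma:F:correlation_testable_by_linear_forms_with_few_queries} with parameter $\eps/2$, use the positive $L_\infty$-gap between the resulting field closures, and then invoke Claims~\ref{claim:F:avg_F_mu_thesame_t*} and~\ref{claim:F:avg_F_mu_thesame_exp} with a union bound to produce, for each distributional $\Gamma$, a deterministic $\ff$ whose $t^*$-vector and correlation are both close to those of $\Gamma$. The only differences are cosmetic (you use $\eta/3$ where the paper uses $c/4$, and you name $\delta'=2\delta$ explicitly), so there is nothing to add.
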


\begin{proof}
Apply Lemma~\ref{lemma:F:correlation_testable_by_linear_forms_with_few_queries} for $\mathcal{D}$ and $\eps/2$. There exist $\delta \in (0,\eps/4)$ and systems of linear forms $\mathcal{L}_1,\ldots,\mathcal{L}_{\ell}$ with $m_1,\ldots,m_{\ell}$ linear forms, accordingly, along with coefficients $\beta_1 \in \F_p^{m_1},\ldots,\beta_{\ell} \in \F_p^{m_{\ell}}$, such that the closures of the sets
$$T_{\eps/2} := \{(t^*_{\mathcal{L}_1,\beta_1}(\ff),\ldots, t^*_{\mathcal{L}_{\ell},\beta_{\ell}}(\ff)) | n >n_0,
\ff:\F_p^n \rightarrow \F_p, \|\exp(\ff)\|_{u(D_n)} \ge \eps/2\}$$
and
$$S_{\eps/2} := \{(t^*_{\mathcal{L}_1,\beta_1}(\ff),\ldots, t^*_{\mathcal{L}_{\ell},\beta_{\ell}}(\ff)) | n>n_0, \ff:\F_p^n \rightarrow \F_p, \|\exp(\ff)\|_{u(D_n)} \le 2\delta\}$$
are disjoint. For a distributional function $\Gamma:\F_p^n \to \Dp$ define $z(\Gamma):=(t^*_{\mathcal{L}_1,\beta_1}(\Gamma),\ldots, t^*_{\mathcal{L}_{\ell},\beta_{\ell}}(\Gamma))$. Let $c>0$ be the $L_{\infty}$ distance between the closures of $T_{\eps/2}$ and $S_{\eps/2}$, and set $\eps'=c/4$. We will show that for large enough $n>n_0$, if $\Gamma:\F_p^n \to \Dp$ has $\|\exp(\Gamma)\|_{u(D_n)} \ge \eps$ then $z(\Gamma)$ is within $L_{\infty}$ distance $\eps'$ from $T_{\eps/2}$; and if $\|\exp(\Gamma)\|_{u(D_n)} \le \delta$ then $z(\Gamma)$ is within
$L_{\infty}$ distance $\eps'$ from $S_{\eps/2}$; this will conclude the lemma.

Consider first the case where $\|\exp(\Gamma)\|_{u(D_n)} \ge \eps$. Let $\FF \sim \Gamma$. By Claims~\ref{claim:F:avg_F_mu_thesame_t*} and~\ref{claim:F:avg_F_mu_thesame_exp} we have that there exists $n_0$, such that for every $n>n_0$ we have
$$
\Pr_{\FF \sim \Gamma}[\|\exp(\FF)\|_{u(D_n)} \ge \eps/2] \ge 0.99,
$$
and that for every $1 \le i \le \ell$ we have
$$
\Pr_{\FF \sim \Gamma}[|t^*_{\mathcal{L}_i,\beta_i}(\FF)-t^*_{\mathcal{L}_i,\beta_i}(\Gamma)| \le \eps'] \ge 1-\frac{1}{100\ell}.
$$
By the union bound, there exists a specific $\ff:\F_p^n \to \F_p$ such that both conditions hold. That is,
$z(\ff) \in S_{\eps/2}$ and $\|z(\Gamma)-z(\ff)\|_{\infty} \le \eps'$. The case where $\|\exp(\Gamma)\|_{u(D_n)} \le \delta$
is completely analogous.
\end{proof}

We thus study from now on distributional functions $\Gamma:\F_p^n \to \Dp$. The next step is to define averages of such functions with regards to polynomial factors. Let $\mathcal{B}$ be a polynomial factor. We define the average $\E(\Gamma|\mathcal{B}):\F_p^n \to \Dp$ as follows. Assume $\mathcal{B}$ defines a partition $C_1 \dotcup \ldots \dotcup C_b$ of $\F_p^n$. For $x \in C_i$ define $\E(\Gamma|\mathcal{B})(x)$ to be the average of $\Gamma(y)$ over $y \in C_i$,
$$
\E(\Gamma|\mathcal{B})(x) = \frac{1}{|C_i|} \sum_{y \in C_i} \Gamma(y).
$$
Note that $\mathfrak{a}_c \circ \E(\Gamma|\mathcal{B}) \equiv \E(\mathfrak{a}_c \circ \Gamma|\mathcal{B})$.

We say a polynomial factor $\mathcal{B}$ is $(d,\delta)$-good for $\Gamma$ if, informally, $U^{d+1}$ norms cannot
distinguish between $\Gamma$ and $\E(\Gamma|\mathcal{B})$ with  an advantage more than $\delta$. Formally, we say a polynomial factor $\mathcal{B}$ is $(d,\delta)$-good for $\Gamma$ if for all $c \in \F_p$ we have
\begin{equation}
\label{eq:bound_ac_circ_mu}
\|\mathfrak{a}_c \circ \Gamma - \mathfrak{a}_c \circ \E(\Gamma|\mathcal{B})\|_{U^{d+1}} \le \delta.
\end{equation}
We first argue that $(d,\delta)$-good polynomial factors exist for every distributional function.

\begin{claim}
\label{claim:F:good_poly_factors_exist_for_distributional}
Let $\delta>0$, $d<p$ and $r(\cdot)$ be an arbitrary growth function. Then for every distributional function $\Gamma:\F_p^n \to \Dp$ there exists a $(d,\delta)$-good polynomial factor $\mathcal{B}$ with degree $d$, complexity $C \le C_{\max}(p,d,\delta,r(\cdot))$ and rank at least $r(C)$.
\end{claim}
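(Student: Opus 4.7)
The plan is to deduce the claim directly from the multi-function strong decomposition theorem (Theorem~\ref{thm:decompose_multiple_func}) applied to the $p$-parameter family of bounded functions $\{\mathfrak{a}_c \circ \Gamma\}_{c \in \F_p}$. First I would observe that for each $c \in \F_p$ the function $\mathfrak{a}_c \circ \Gamma : \F_p^n \to \D$ is indeed $\D$-valued, since for every probability measure $\mu \in \Dp$,
$$
|\mathfrak{a}_c(\mu)| = \left| \Ex_{z \sim \mu}[\exp(c \cdot z)] \right| \le \Ex_{z \sim \mu}[|\exp(c \cdot z)|] = 1.
$$
Hence Theorem~\ref{thm:decompose_multiple_func}, applied with the given parameters $p, d, \delta$, the number of functions $m = p$, and the growth function $r(\cdot)$, yields a polynomial factor $\mathcal{B}$ of degree at most $d$, complexity $C \le C_{\max}(p,d,\delta,r(\cdot))$, and rank at least $r(C)$, defined by homogeneous polynomials, such that for every $c \in \F_p$,
$$
\left\| \mathfrak{a}_c \circ \Gamma - \E(\mathfrak{a}_c \circ \Gamma \mid \mathcal{B}) \right\|_{U^{d+1}} \le \delta.
$$

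The second step is the commutativity identity $\mathfrak{a}_c \circ \E(\Gamma \mid \mathcal{B}) = \E(\mathfrak{a}_c \circ \Gamma \mid \mathcal{B})$, which follows from the fact that $\mathfrak{a}_c : \Dp \to \D$ is an affine (in fact $\mathbb{R}$-linear) functional on the convex set $\Dp$: explicitly, $\mathfrak{a}_c(\mu) = \sum_{z \in \F_p} \mu(z) \exp(c z)$. If $C_i$ is the atom of $\mathcal{B}$ containing $x$, then both sides evaluate on $x$ to $\frac{1}{|C_i|} \sum_{y \in C_i} \mathfrak{a}_c(\Gamma(y))$ by pushing the linear functional $\mathfrak{a}_c$ through the average. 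Substituting this identity into the bound above yields precisely inequality~\eqref{eq:bound_ac_circ_mu}, proving that $\mathcal{B}$ is $(d,\delta)$-good for $\Gamma$.

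There is no serious obstacle in this step: the heavy lifting — constructing a polynomial factor that simultaneously approximates several $\D$-valued functions in the $U^{d+1}$ norm — has already been done in Theorem~\ref{thm:decompose_multiple_func}. The present claim is essentially a packaging lemma that lifts that decomposition from $\D$-valued functions to distributional functions $\Gamma : \F_p^n \to \Dp$, by viewing $\Gamma$ through the ``coordinate'' projections $\mathfrak{a}_c$ and exploiting the convex structure of $\Dp$ so that the componentwise conditional expectations assemble back into the vector-valued conditional expectation $\E(\Gamma \mid \mathcal{B})$.
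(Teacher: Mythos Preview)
Your proposal is correct and follows exactly the paper's approach: the paper's entire proof is the one-line instruction to apply Theorem~\ref{thm:decompose_multiple_func} to the family $\{\mathfrak{a}_c \circ \Gamma : c \in \F_p\}$, and the commutativity identity you spell out is precisely the observation the paper records just before the claim.
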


\begin{proof}
Apply Theorem~\ref{thm:decompose_multiple_func} on the set of functions $\{\mathfrak{a}_c \circ \Gamma: c \in \F_p\}$.
\end{proof}

The next claim is similar to Theorem~\ref{thm:avg-approx-funcs}. It shows that if $\Gamma$ is a distributional function, and if $\mathcal{B}$ is a $(d,\delta)$-good polynomial factor for $\Gamma$ where $\delta>0$ is small enough, then averages $t^*_{\mathcal{L},\beta}$ cannot distinguish between $\Gamma$ and $\E(\Gamma|\mathcal{B})$ if the true complexity of $\mathcal{L}$ is at most $d$.

\begin{claim}
\label{claim:F:averages_cant_distinguish_distributional}
Let $\mathcal{L}=\{L_1,\ldots,L_m\}$ be a linear system of true complexity $d$ and Cauchy-Schwarz complexity at most $p$.
For every $\eps>0$, there exists $\delta>0$ such that the following holds. Let $\Gamma:\F_p^n \to \Dp$ be a distributional function and let $\mathcal{B}$ be a $(d,\delta)$-good polynomial factor for $\Gamma$. Then for every $\beta \in \F_p^m$,
$$
|t^*_{\mathcal{L},\beta}(\Gamma) - t^*_{\mathcal{L},\beta}(\E(\Gamma|\mathcal{B}))| \le \eps.
$$
\end{claim}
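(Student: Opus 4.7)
\medskip
\noindent\textbf{Proof proposal.}
The plan is to reduce this claim to Theorem~\ref{thm:avg-approx-funcs} by unpacking the definition of $t^*_{\mathcal{L},\beta}$ for distributional functions. For $1 \le i \le m$, set
$$
f_i := \mathfrak{a}_{\beta(i)} \circ \Gamma \colon \F_p^n \to \D, \qquad g_i := \mathfrak{a}_{\beta(i)} \circ \E(\Gamma \mid \mathcal{B}) \colon \F_p^n \to \D.
$$
The first observation is that the operator $\mathfrak{a}_c$ is linear in its input (since $\mathfrak{a}_c(\mu) = \sum_{z \in \F_p} \mu(z) \exp(cz)$), and the conditional expectation $\E(\cdot \mid \mathcal{B})$ is a linear averaging operator on each atom of $\mathcal{B}$. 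Hence they commute, giving $g_i = \E(f_i \mid \mathcal{B})$.

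Next, directly from the definition of $t^*_{\mathcal{L},\beta}$ for distributional functions, we have
$$
t^*_{\mathcal{L},\beta}(\Gamma) = \Ex_{\X \in (\F_p^n)^k} \left[\prod_{i=1}^m f_i(L_i(\X))\right], \qquad t^*_{\mathcal{L},\beta}(\E(\Gamma \mid \mathcal{B})) = \Ex_{\X \in (\F_p^n)^k} \left[\prod_{i=1}^m g_i(L_i(\X))\right].
$$
By the $(d,\delta)$-goodness hypothesis~\eqref{eq:bound_ac_circ_mu}, we have $\|f_i - g_i\|_{U^{d+1}} \le \delta$ for every $i \in [m]$.

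Now apply Theorem~\ref{thm:avg-approx-funcs} to the system $\mathcal{L}$ (which has true complexity $d$ and Cauchy-Schwarz complexity at most $p$ by hypothesis). The theorem provides, for the given $\eps > 0$, a threshold $\delta_0 > 0$ such that whenever $\|f_i - g_i\|_{U^{d+1}} \le \delta_0$ for all $i$, the two averages above differ by at most $\eps$. Choosing $\delta := \delta_0$ in the statement of the claim completes the proof. There is no substantive obstacle here: the content is entirely contained in Theorem~\ref{thm:avg-approx-funcs}, and the only work is the elementary verification that $\mathfrak{a}_c$ commutes with conditional expectation, together with writing $t^*_{\mathcal{L},\beta}$ in terms of the component functions $\mathfrak{a}_{\beta(i)} \circ \Gamma$.
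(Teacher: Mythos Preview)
Your proof is correct and follows essentially the same approach as the paper: define $f_i = \mathfrak{a}_{\beta(i)} \circ \Gamma$, use the commutation $\mathfrak{a}_c \circ \E(\Gamma|\mathcal{B}) = \E(\mathfrak{a}_c \circ \Gamma|\mathcal{B})$ to identify $g_i$ with $\E(f_i|\mathcal{B})$, and then apply Theorem~\ref{thm:avg-approx-funcs} together with the $(d,\delta)$-goodness bound~\eqref{eq:bound_ac_circ_mu}.
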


\begin{proof}
Let $f_i := \mathfrak{a}_{\beta(i)} \circ \Gamma$ for $1 \le i \le m$. We have
$$
t^*_{\mathcal{L},\beta}(\Gamma) = \Ex_{\X \in (\F_p^n)^k} \left[\prod_{i=1}^m f_i(L_i(\X))\right],
$$
and since $\mathfrak{a}_{\beta(i)} \circ \E(\Gamma|\mathcal{B}) \equiv \E(f_i|\mathcal{B})$, we also have
$$
t^*_{\mathcal{L},\beta}(\E(\Gamma|\mathcal{B})) = \Ex_{\X \in (\F_p^n)^k}\left[ \prod_{i=1}^m \E(f_i|\mathcal{B})(L_i(\X))\right].
$$
The claim follows by Theorem~\ref{thm:avg-approx-funcs} and (\ref{eq:bound_ac_circ_mu}).
\end{proof}

We would also require an analog of Proposition~\ref{prop:invariance}. Let $\Gamma:\F_p^n \to \Dp$ be a distributional function, and let $\mathcal{B}$ be a $(d,\delta)$-good polynomial factor for $\Gamma$ defined by polynomials $P_1,\ldots,P_C$. Let $\mathcal{B'}$ be another polynomial factor defined by polynomials $Q_1,\ldots,Q_C$. We define a new hybrid distribution, denoted $\E(\Gamma|\mathcal{B} \rightarrow \mathcal{B'})$ as follows: assume $\E(\Gamma|\mathcal{B})(x)=F(P_1(x),\ldots,P_C(x))$ where $F:\F_p^C \to \Dp$ is some function; we define
$$
\E(\Gamma|\mathcal{B} \to \mathcal{B'})(x):=F(Q_1(x),\ldots,Q_C(x)).
$$

\begin{lemma}
\label{lemma:F:hybrid_distribution_invariance}
Let $\mathcal{L}$ be a homogeneous system of $m$ linear forms. Let $d \ge 1$ be a degree bound, and $\eps>0$ a required error. There exists $r_{\min}=r_{\min}(m,d,\eps)$ such that the following holds. Let $\mathcal{B},\mathcal{B'}$ be polynomial factors of degree at most $d$ defined by $P_1,\ldots,P_C$ and $Q_1,\ldots,Q_C$, respectively. Assume that $\deg(P_i)=\deg(Q_i)$ for all $1 \le i \le C$ and $\rank(\mathcal{B}),\rank(\mathcal{B'}) \ge r_{\min}$. Then for every distributional function $\Gamma:\F_p^n \to \Dp$ and any $\beta \in \F_p^m$ we have
$$
\left| t^*_{\mathcal{L},\beta}(\E(\Gamma|\mathcal{B}))-t^*_{\mathcal{L},\beta}(\E(\Gamma|\mathcal{B}\rightarrow \mathcal{B'})) \right| \le \eps.
$$
\end{lemma}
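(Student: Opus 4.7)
The plan is to follow the strategy of Proposition~\ref{prop:invariance} with a Fourier decomposition that handles the extra multiplicative coefficients $\beta(i)$. First, since $\E(\Gamma|\mathcal{B})$ is $\mathcal{B}$-measurable, I write $\E(\Gamma|\mathcal{B})(x) = F(P_1(x),\ldots,P_C(x))$ for some $F:\F_p^C \to \Dp$, and by definition of the hybrid, $\E(\Gamma|\mathcal{B}\to \mathcal{B'})(x) = F(Q_1(x),\ldots,Q_C(x))$. Setting $F_i := \mathfrak{a}_{\beta(i)}\circ F : \F_p^C \to \D$ and using $\mathfrak{a}_c \circ \E(\Gamma|\mathcal{B}) = \E(\mathfrak{a}_c\circ\Gamma|\mathcal{B})$, we obtain
$$
t^*_{\mathcal{L},\beta}\bigl(\E(\Gamma|\mathcal{B})\bigr) = \Ex_{\X \in (\F_p^n)^k}\left[\prod_{i=1}^m F_i\bigl(P_1(L_i(\X)),\ldots,P_C(L_i(\X))\bigr)\right],
$$
with the analogous expression for $\E(\Gamma|\mathcal{B}\to\mathcal{B'})$ obtained by replacing each $P_j$ with $Q_j$.

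Next, I expand each $F_i$ in its Fourier series on $\F_p^C$ as $F_i(z) = \sum_{\gamma \in \F_p^C}\widehat{F_i}(\gamma)\exp(\gamma \cdot z)$, noting $|\widehat{F_i}(\gamma)|\le 1$. Substituting and interchanging sums gives
$$
t^*_{\mathcal{L},\beta}\bigl(\E(\Gamma|\mathcal{B})\bigr) = \sum_{\vec\gamma \in (\F_p^C)^m} \Bigl(\prod_{i=1}^m \widehat{F_i}(\gamma_i)\Bigr)\bias\bigl(R^P_{\vec\gamma}\bigr),
$$
where $R^P_{\vec\gamma}(\X) := \sum_{i=1}^m\sum_{j=1}^C \gamma_i(j)\,P_j(L_i(\X))$ is a polynomial on $(\F_p^n)^k$ of degree at most $d$; the analogous expression with $R^Q_{\vec\gamma}$ in place of $R^P_{\vec\gamma}$ holds for $\E(\Gamma|\mathcal{B}\to\mathcal{B'})$. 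Since there are $p^{mC}$ Fourier terms and each coefficient product is bounded by $1$, it suffices to show $|\bias(R^P_{\vec\gamma}) - \bias(R^Q_{\vec\gamma})| \le \eps \cdot p^{-mC}$ for every $\vec\gamma$. For $\vec\gamma = \vec 0$ both polynomials vanish identically so both biases equal $1$; for $\vec\gamma \ne \vec 0$ I aim to show each bias is at most $\eps\cdot p^{-mC}/2$.

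By the regularity theorem (Theorem~\ref{thm:regularity}), the desired bias bound for $\vec\gamma \ne \vec 0$ follows from a rank lower bound $\rank(R^P_{\vec\gamma}),\rank(R^Q_{\vec\gamma}) \ge r^*$ for $r^* = r_{p,d}(\eps\cdot p^{-mC}/2)$. This is where homogeneity of $\mathcal{L}$ enters: by Observation~\ref{obs:canonical} we may assume a distinguished variable $X_k$ appears with coefficient $1$ in every $L_i$, so $L_i(\X) = L'_i(X_1,\ldots,X_{k-1}) + X_k$. If $R^P_{\vec\gamma}$ admitted a decomposition as a function of $r^*-1$ polynomials of degree strictly less than $\deg(R^P_{\vec\gamma})$, then specializing $X_1,\ldots,X_{k-1}$ (turning each $L'_i$ into a constant $c_i$) and viewing the result as a polynomial in $X_k$ produces a decomposition of $\sum_i P^{(i)}(X_k + c_i)$, where $P^{(i)} := \sum_j \gamma_i(j)\,P_j$, as a function of $r^*-1$ lower-degree polynomials in $X_k$. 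A careful choice of the $c_i$'s, exploiting the distinctness of the linear forms $L_i$ and the non-vanishing of $\vec\gamma$, ensures that this sum is a nonzero linear combination of the $P_j$'s and their translates, which by the hypothesis $\rank(\mathcal{B}) \ge r_{\min}$ has rank much larger than $r^*-1$ (up to bounded slack from the translates), a contradiction.

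The main obstacle is precisely this rank translation: extracting from the hypothetical low-rank decomposition of $R^P_{\vec\gamma}$ a genuinely nonzero combination of the defining polynomials of $\mathcal{B}$ (and similarly for $\mathcal{B'}$) that inherits a low-rank representation. The argument closely parallels the rank manipulations underlying Proposition~\ref{prop:invariance} in~\cite{ComplexityPaper}, and once it is carried out the required $r_{\min}(m,d,\eps)$ can be read off. With the rank bound in hand, Theorem~\ref{thm:regularity} yields $|\bias(R^P_{\vec\gamma})|,|\bias(R^Q_{\vec\gamma})| \le \eps\cdot p^{-mC}/2$ for every $\vec\gamma \ne \vec 0$, and summing the $p^{mC}$ Fourier contributions completes the proof.
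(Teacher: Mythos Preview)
Your overall strategy---Fourier-expanding each $F_i$ and reducing to a termwise comparison of $\bias(R^P_{\vec\gamma})$ with $\bias(R^Q_{\vec\gamma})$---is exactly the route behind Proposition~\ref{prop:invariance} (the paper itself gives no separate argument for this lemma, merely declaring the proof identical). However, your handling of the case $\vec\gamma\ne\vec 0$ contains a genuine gap.

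You assert that for every nonzero $\vec\gamma$ each bias can be made at most $\eps\, p^{-mC}/2$, because a suitable specialization of $X_1,\ldots,X_{k-1}$ will always produce a nonzero linear combination of the $P_j$'s and their translates. This is false. Take the $U^2$ pattern $\mathcal{L}=\{X,\,X{+}Y,\,X{+}Z,\,X{+}Y{+}Z\}$, a single linear polynomial $P_1$ (so $C=1$), and $\vec\gamma=(1,-1,-1,1)$. Then
\[
R^P_{\vec\gamma}(X,Y,Z)=P_1(X)-P_1(X{+}Y)-P_1(X{+}Z)+P_1(X{+}Y{+}Z)\equiv 0
\]
identically, so $\bias(R^P_{\vec\gamma})=1$ regardless of $\rank(\mathcal{B})$; no specialization can extract a nonzero combination, because the vanishing is a \emph{formal} identity determined solely by $\deg P_1=1$ and the linear forms. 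More generally, whenever the $(d_j{+}1)$-th derivative relations among the $L_i$ produce such cancellations, $R^P_{\vec\gamma}$ vanishes for \emph{every} choice of $P_j$ of degree $d_j$.

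The correct dichotomy---and this is the actual content of the argument in~\cite{ComplexityPaper}---is that the set $\Lambda\subseteq(\F_p^C)^m$ of tuples $\vec\gamma$ for which $R^P_{\vec\gamma}\equiv 0$ depends only on $\mathcal{L}$ and on the degree sequence $(\deg P_1,\ldots,\deg P_C)$, not on the specific polynomials. Since $\deg P_j=\deg Q_j$ by hypothesis, the same $\Lambda$ governs both factors: for $\vec\gamma\in\Lambda$ both biases equal $1$ exactly and those terms contribute identically to the two sums; for $\vec\gamma\notin\Lambda$ the high-rank hypothesis on $\mathcal{B},\mathcal{B}'$ does force both $R^P_{\vec\gamma}$ and $R^Q_{\vec\gamma}$ to have high rank, and there your rank-translation sketch (suitably formalized) applies via Theorem~\ref{thm:regularity}. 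The missing idea in your write-up is precisely that the $\vec\gamma\in\Lambda$ terms are not small but rather equal on both sides.
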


The proof is identical to the proof of Proposition~\ref{prop:invariance} in~\cite{ComplexityPaper}, and we do not repeat it.

\subsection{Proof of Theorem~\ref{thm:F:main}}

The proof follows very similar lines to the proof of Theorem~\ref{thm:D:main}. We will highlight the changes that need to be made in the proof, and avoid repetition wherever possible.

Let $\mathcal{D}=\{D_n\}_{n \in \N}$  be a proper dual family where $D_n$ is a family of functions from $\F_p^n \to \F_p$. By Lemma~\ref{lemma:F:extend_correlation_testability} we get that for every $\eps>0$ there exists $\delta \in (0,\eps)$, $n_0 \in \N$ and systems of homogeneous linear forms $\mathcal{L}_1,\ldots,\mathcal{L}_{\ell}$ which have Cauchy-Schwarz complexity $\le q-2$, along with coefficients $\beta_1,\ldots,\beta_{\ell}$, such such that the closure of the following two sets are disjoint:
$$T_\eps := \{(t^*_{\mathcal{L}_1,\beta_1}(\Gamma),\ldots, t^*_{\mathcal{L}_{\ell},\beta_{\ell}}(\Gamma)) | n>n_0,
\Gamma:\F_p^n \rightarrow \Dp, \|\exp(\Gamma)\|_{u(D_n)} \ge \eps\},$$
and
$$S_\eps := \{(t^*_{\mathcal{L}_1,\beta_1}(\Gamma),\ldots, t^*_{\mathcal{L}_{\ell},\beta_{\ell}}(\Gamma)) | n>n_0, \Gamma:\F_p^n \rightarrow \Dp, \|\exp(\Gamma)\|_{u(D_n)} \le \delta\}.$$

Let $s \le q-2 < p$ be a bound on the Cauchy-Schwarz complexity of $\mathcal{L}_1,\ldots,\mathcal{L}_{\ell}$
and $d \le s$ be a bound on their true complexity. We define $\mathcal{S}$ in the same way as in the proof of Theorem~\ref{thm:D:main}.

\begin{claim}
\label{claim:F:one_in_S}
Unless all functions in $\mathcal{D}$ are constants, we have $1 \in \mathcal{S}$.
\end{claim}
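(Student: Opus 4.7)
The plan is to adapt the proof of Claim~\ref{claim:D:one_in_S} almost verbatim to the $\F_p$-valued setting, the only new ingredient being the transfer of non-constancy from $\gg \in D_{n_0}$ (a function into $\F_p$) to $\exp(\gg)$ (a function into $\D$). Since by hypothesis not every function in $\mathcal{D}$ is constant, fix some $n_0$ and a non-constant $\gg_{n_0} \in D_{n_0}$. Because $\exp$ is injective on $\F_p$, the function $\exp(\gg_{n_0}):\F_p^{n_0} \to \D$ is also non-constant. Hence its Fourier expansion
$$
\exp(\gg_{n_0})(x) = \sum_{\alpha \in \F_p^{n_0}} \widehat{\exp(\gg_{n_0})}(\alpha)\, \chi_\alpha(x)
$$
must contain some non-zero coefficient $\widehat{\exp(\gg_{n_0})}(\alpha) = \eta \ne 0$ with $\alpha \ne 0$; otherwise $\exp(\gg_{n_0})$ would be constant.

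Next, I would invoke the consistency assumption {\bf A1}: for every $n \ge n_0$, the function
$$
\gg_n(x(1),\ldots,x(n)) := \gg_{n_0}(x(1),\ldots,x(n_0))
$$
belongs to $D_n$. Define the linear polynomial $P_n(x) := \sum_{i=1}^{n_0} \alpha(i)\, x(i)$. This is a non-zero polynomial of degree exactly $1$, and being non-zero and linear it cannot be written as a function of any number of polynomials of degree at most $0$ (which are constants), so $\rank(P_n) = \infty$. In particular $\rank(P_n) > r(a)$ for any growth function $r(\cdot)$ and any $a \in \N$.

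Finally, I would verify the correlation condition. Since both $P_n$ and $\gg_n$ depend only on the first $n_0$ coordinates, a direct computation gives
$$
\ip{\exp(P_n),\exp(\gg_n)} = \Ex_{y \in \F_p^{n_0}}\!\left[\chi_\alpha(y)\,\overline{\exp(\gg_{n_0}(y))}\right] = \overline{\widehat{\exp(\gg_{n_0})}(\alpha)} = \overline{\eta},
$$
so $\|\exp(P_n)\|_{u(D_n)} \ge |\eta| > 0$. Setting $a := \lceil 1/|\eta| \rceil$, both clauses (i) and (ii) of the definition of $\mathcal{S}$ are satisfied with $k = 1$ for every growth function $r(\cdot)$, i.e.\ $1 \in \mathcal{S}$. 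There is no real obstacle here; the only subtle point is ensuring that non-constancy of $\gg_{n_0}$ forces some genuinely \emph{non-trivial} (i.e.\ $\alpha \ne 0$) Fourier coefficient of $\exp(\gg_{n_0})$, which is what lets $P_n$ be a bona fide linear polynomial of infinite rank rather than a constant.
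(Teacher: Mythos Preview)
Your proof is correct and essentially identical to the paper's approach: the paper simply states that the proof is identical to that of Claim~\ref{claim:D:one_in_S}, and you have faithfully carried out that adaptation, including the one extra observation (injectivity of $\exp$ on $\F_p$) needed to transfer non-constancy from $\gg_{n_0}$ to $\exp(\gg_{n_0})$. If anything, you are slightly more careful than the paper's original proof in explicitly noting that the relevant Fourier coefficient must occur at a nonzero frequency $\alpha \ne 0$, which is what guarantees $P_n$ is genuinely linear and hence of infinite rank.
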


The proof is identical to the proof of Claim~\ref{claim:D:one_in_S}. The case where $\mathcal{D}$ consists of only constant functions is analyzed in the same way, hence we assume that $1 \in \mathcal{S}$ from now on.

\begin{claim}
\label{claim:F:S_bound}
$\mathcal{S} \subseteq \{1,\ldots,s\}$.
\end{claim}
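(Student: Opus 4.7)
The plan is to adapt the proof of Claim~\ref{claim:D:S_bound} to the present distributional setting. The two differences I need to handle are: averages here have the form $t^*_{\mathcal{L}_i,\beta_i}$ (weighted exponential products over $\F_p$) rather than $t_{\mathcal{L}_i,\alpha_i}$, and the convenient witness for a point of $S_{1/a}$ is no longer the zero function (which isn't of the right type) but the uniform distributional function $\Gamma_{\mathrm{u}}(x) \equiv \mu_{\mathrm{unif}}$. Suppose for contradiction that some $k > s$ lies in $\mathcal{S}$. Then for every growth function $r:\N\to\N$, the definition of $\mathcal{S}$ yields $a, n_0 \in \N$ and polynomials $P_n$ of degree exactly $k$ satisfying $\|\exp(P_n)\|_{u(D_n)} \ge 1/a$ and $\rank(P_n) > r(a)$, which place $(t^*_{\mathcal{L}_i,\beta_i}(P_n))_{i=1}^{\ell} \in T_{1/a}$. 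Since $T_{1/a}$ and $S_{1/a}$ lie in the compact polydisk $\mathbb{D}^{\ell}$ and have disjoint closures, they are separated in $L_{\infty}$ by some $\mu(a)>0$, and I will derive a contradiction by exhibiting a point of $S_{1/a}$ within distance $\mu(a)/2$ of this vector.

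The witness is $\Gamma_{\mathrm{u}}$ itself. Since $\mathfrak{a}_c \circ \Gamma_{\mathrm{u}} \equiv 0$ for $c \ne 0$ and $\equiv 1$ for $c = 0$, one has $\exp(\Gamma_{\mathrm{u}}) \equiv 0$, hence $\|\exp(\Gamma_{\mathrm{u}})\|_{u(D_n)}=0$ and $\Gamma_{\mathrm{u}} \in S_{1/a}$; moreover $t^*_{\mathcal{L}_i,\beta_i}(\Gamma_{\mathrm{u}}) = \mathbf{1}[\beta_i \equiv 0]$. Comparing coordinate-wise with $(t^*_{\mathcal{L}_i,\beta_i}(P_n))_i$: when $\beta_i \equiv 0$ the $i$-th coordinate of both vectors equals $1$, so these coordinates agree exactly; when some entry $\beta_i(j^*)$ is nonzero, Lemma~\ref{lemma:cauchy-schwarz-lemma} (applicable because the Cauchy--Schwarz complexity $s$ is at most $p$) gives
$$
|t^*_{\mathcal{L}_i,\beta_i}(P_n)| \;\le\; \|\exp(\beta_i(j^*)\, P_n)\|_{U^{s+1}}.
$$

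The substantive step---and the only place real work is needed---is bounding this $U^{s+1}$ norm by $\mu(a)/2$. Multiplication by the nonzero scalar $\beta_i(j^*)\in \F_p^*$ preserves both the degree $k$ and the rank of $P_n$, so it suffices to show that polynomials of degree $k>s$ and sufficiently high rank are quasirandom at level $U^{s+1}$. If on the contrary $\|\exp(\beta_i(j^*) P_n)\|_{U^{s+1}} > \mu(a)/2$, then Theorem~\ref{thm:inverse} (valid since $s<p$) produces a polynomial $Q$ of degree at most $s$ with $|\bias(\beta_i(j^*) P_n - Q)| \ge \delta(\mu(a)/2)$; since $k>s$, the difference still has degree $k$, so Theorem~\ref{thm:regularity} bounds $\rank(\beta_i(j^*) P_n - Q)$ by a function of $\mu(a)/2$, and subadditivity of rank (adding back the single degree-$\le s$ polynomial $Q$ contributes at most $1$ to the rank of $P_n$ in its degree-$k$ representation) forces $\rank(P_n)$ to be bounded by a constant depending only on $a$. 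Choosing $r(a)$ larger than this constant produces the desired contradiction, so $(t^*_{\mathcal{L}_i,\beta_i}(P_n))_i$ must lie within $L_{\infty}$-distance $\mu(a)/2$ of the $\Gamma_{\mathrm{u}}$-vector in $S_{1/a}$, contradicting the separation $\mu(a)$. The main obstacle is really just the rank-subadditivity bookkeeping in the last step; the remainder is a direct transcription of the template from Claim~\ref{claim:D:S_bound}.
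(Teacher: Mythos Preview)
Your proof is correct and follows essentially the same approach as the paper. The paper's own proof says merely to transcribe Claim~\ref{claim:D:S_bound}, replacing the witness $f\equiv 0$ by the uniform distributional function, and asserts that $|t^*_{\mathcal{L}_i,\beta_i}(\Gamma_{\mathrm{u}})| = O(p^{-n})$ so that $0^{\ell}$ lies in the closure of $S_{1/a}$. Your version is in fact more careful here: you correctly compute $t^*_{\mathcal{L}_i,\beta_i}(\Gamma_{\mathrm{u}}) = \mathbf{1}[\beta_i \equiv 0]$ exactly (the paper's $O(p^{-n})$ statement is imprecise on the $\beta_i\equiv 0$ coordinates), and then compare the $P_n$-vector with the $\Gamma_{\mathrm{u}}$-vector coordinatewise rather than with $0^{\ell}$, noting that the $\beta_i\equiv 0$ coordinates equal $1$ for \emph{every} function and so never contribute to the $L_\infty$ distance. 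The remaining rank/inverse-theorem argument for the nontrivial coordinates is exactly the template from Claim~\ref{claim:D:S_bound}.
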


\begin{proof}
The proof is identical to the proof of Claim~\ref{claim:D:S_bound}. The only difference is the argument why $0^{\ell}$ is in the closure of $S_{\eps}$. Let $\Gamma$ map every element of $\mathbb{F}_p^n$ to the uniform probability distribution over $\F_p$. Then it is easy to verify that $|t_{\mathcal{L}_i,\beta_i}(\Gamma)| = O(p^{-n})$ for all $1 \le i \le \ell$.
\end{proof}

We denote $t:=\max(\mathcal{S})$. Let $(\Gamma_n:\F_p^n \to \Dp)_{n \in \N}$ be a sequence of distributional functions where $\lim_{n \to \infty} \Ex [\exp(\Gamma_n)]=0$. Similar to the proof of Theorem~\ref{thm:D:main}, the proof of Theorem~\ref{thm:F:main} follows from the following two lemmas.

\begin{lemma}
\label{lemma:F:main_rightarrow}
If $\lim_{n \rightarrow \infty} \|\exp(\Gamma_n)\|_{U^{t+1}}=0$ then $\lim_{n \rightarrow \infty} \|\Gamma_n\|_{u(D_n)}=0$. (Should be $\|\exp(\Gamma_n)\|_{u(D_n)}=0$)
\end{lemma}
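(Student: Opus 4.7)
The plan is to mimic the proof of Lemma~\ref{lemma:D:main_rightarrow} step for step, substituting $\exp(\Gamma_n)$ in place of $f_n$ and using the distributional analogues Claims~\ref{claim:F:good_poly_factors_exist_for_distributional}--\ref{claim:F:averages_cant_distinguish_distributional} in place of Theorems~\ref{thm:decompose_multiple_func} and~\ref{thm:avg-approx-funcs}. First I assume for contradiction that $c := \limsup_{n \to \infty} \|\exp(\Gamma_n)\|_{u(D_n)} > 0$; by Condition {\bf A1} and refinement of the subsequence the $\limsup$ becomes an actual limit, and set $\eps := c/2$. Applying Lemma~\ref{lemma:F:extend_correlation_testability} to $\eps$ yields $\delta \in (0,\eps)$, homogeneous systems of linear forms $\mathcal{L}_1,\ldots,\mathcal{L}_\ell$ of Cauchy-Schwarz complexity at most $s := q-2 < p$ (hence true complexity at most $d := s$) together with coefficients $\beta_1,\ldots,\beta_\ell$ such that the closures of $T_\eps$ and $S_\eps$ for distributional functions are $\ell_\infty$-separated by some $\eps'>0$.

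Next, fix a growth function $r:\N \to \N$ to be specified later. By Claim~\ref{claim:F:good_poly_factors_exist_for_distributional} applied to each $\Gamma_n$, there is a $(s,\delta'')$-good polynomial factor $\mathcal{B}_n$ of degree at most $s$, complexity $C_n \le C_{\max}$, and rank at least $r(C_n)$, where $\delta''$ is chosen small enough that Claim~\ref{claim:F:averages_cant_distinguish_distributional} guarantees
$$\bigl|t^*_{\mathcal{L}_i,\beta_i}(\Gamma_n) - t^*_{\mathcal{L}_i,\beta_i}(\E(\Gamma_n|\mathcal{B}_n))\bigr| \le \eps'/2$$
for every $i \in [\ell]$. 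Setting $h_n := \mathfrak{a}_1 \circ \E(\Gamma_n|\mathcal{B}_n) = \E(\exp(\Gamma_n)|\mathcal{B}_n)$, the separation of $T_\eps$ and $S_\eps$ forces $\|h_n\|_{u(D_n)} \ge \delta$ for all large $n$, so there exists $g_n \in D_n$ with $|\ip{h_n,\exp(g_n)}| \ge \delta$.

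Let $Q_{n,1},\ldots,Q_{n,C_n}$ be polynomials defining $\mathcal{B}_n$ and put $Q_{n,\gamma} := \sum_i \gamma(i) Q_{n,i}$. Since $h_n$ is $\mathcal{B}_n$-measurable, expand
$$h_n(x) = \sum_{\gamma \in \F_p^{C_n}} \widehat{F_n}(\gamma)\, \exp(Q_{n,\gamma}(x))$$
with $|\widehat{F_n}(\gamma)| \le 1$. Taking $r(\cdot)$ large enough, Theorem~\ref{thm:regularity} yields the near-orthogonality bound $|\bias(Q_{n,\gamma}-Q_{n,\gamma'})| \le p^{-2C_n}\delta/100$ for distinct $\gamma,\gamma'$. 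For every $\gamma$ in $W_n := \{\gamma : \deg(Q_{n,\gamma}) \le t\}$, Observation~\ref{obs:project_factor} combined with the hypothesis $\|\exp(\Gamma_n)\|_{U^{t+1}} \to 0$ gives $|\ip{h_n,\exp(Q_{n,\gamma})}| = |\ip{\exp(\Gamma_n),\exp(Q_{n,\gamma})}| \le \|\exp(\Gamma_n)\|_{U^{t+1}} = o(1)$, and the argument of Claim~\ref{claim:D:h_no_corr_deg_k-1} then delivers $|\widehat{F_n}(\gamma)| \le p^{-C_n}\delta/10$ for all $\gamma \in W_n$ and large $n$.

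Consequently the total contribution of $\{\gamma \in W_n\}$ to $\ip{h_n,\exp(g_n)}$ is at most $\delta/2$, so there must exist $\gamma^* \notin W_n$ with $|\ip{\exp(Q_{n,\gamma^*}),\exp(g_n)}| \ge p^{-C_n}\delta/2$. Setting $P_n := Q_{n,\gamma^*}$ and refining (using Condition {\bf A1}) so that $C_n \equiv C$ and $t' := \deg(P_n)$ is constant, one obtains polynomials satisfying $t+1 \le t' \le s$, $\|\exp(P_n)\|_{u(D_n)} \ge p^{-C}\delta/2$, and $\rank(P_n) \ge \rank(\mathcal{B}_n) \ge r(C)$; since $r(\cdot)$ was arbitrary, this certifies $t' \in \mathcal{S}$ with $t' > t$, contradicting $t = \max(\mathcal{S})$. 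The only place that requires any care beyond transcription is the order of quantifiers in choosing $\delta''$ and $r(\cdot)$ after $\eps'$; I anticipate no genuine obstacle, since the required invariance and rank estimates for distributional functions were already assembled in Claims~\ref{claim:F:averages_cant_distinguish_distributional}--\ref{claim:F:good_poly_factors_exist_for_distributional}.
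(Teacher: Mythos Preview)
Your proposal is correct and follows essentially the same approach as the paper, which simply instructs the reader to repeat the proof of Lemma~\ref{lemma:D:main_rightarrow} with $t^*_{\mathcal{L}_i,\beta_i}(\Gamma_n)$ in place of $t_{\mathcal{L}_i,\alpha_i}(f_n)$ and the distributional analogues (Claims~\ref{claim:F:good_poly_factors_exist_for_distributional} and~\ref{claim:F:averages_cant_distinguish_distributional}) in place of Theorems~\ref{thm:decompose_multiple_func} and~\ref{thm:avg-approx-funcs}. Your identification $h_n=\mathfrak{a}_1\circ\E(\Gamma_n|\mathcal{B}_n)=\E(\exp(\Gamma_n)|\mathcal{B}_n)$ and the use of Observation~\ref{obs:project_factor} to pull the inner product back to $\exp(\Gamma_n)$ are exactly the small adaptations the paper has in mind.
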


\begin{lemma}
\label{lemma:F:main_leftarrow}
If $\lim_{n \rightarrow \infty} \|f_n\|_{u(D_n)}=0$ then $\lim_{n \rightarrow \infty} \|f_n\|_{U^{t+1}}=0$.
\end{lemma}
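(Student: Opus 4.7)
The plan is to adapt the proof of Lemma~\ref{lemma:D:main_leftarrow}, treating the bounded function $\exp(\Gamma_n) = \mathfrak{a}_1 \circ \Gamma_n : \F_p^n \to \D$ as the object of analysis in place of a generic $f_n$. The definitions of $\mathcal{S}$ and $t := \max(\mathcal{S})$ carry over, and Claims~\ref{claim:F:one_in_S} and~\ref{claim:F:S_bound} already give $1 \in \mathcal{S} \subseteq \{1,\ldots,s\}$. Assume for contradiction that $c := \limsup_{n \to \infty} \|\exp(\Gamma_n)\|_{U^{t+1}} > 0$, and use consistency $\bf A1$ to refine to a subsequence with $\|\exp(\Gamma_n)\|_{U^{t+1}} \ge c$ for all $n \ge n_0$. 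Since $t \le s < p$, Theorem~\ref{thm:inverse} yields polynomials $Q_n$ of degree at most $t$ with $|\ip{\exp(\Gamma_n),\exp(Q_n)}| \ge \delta(c)$. The regularization of Claim~\ref{claim:D:regularize_Qn} uses only that $\exp(\Gamma_n)$ is $\D$-valued and applies verbatim, producing a fixed $1 \le t_0 \le t$ and a further subsequence on which $Q_n$ has degree exactly $t_0$, rank at least $r(b)$, and satisfies $|\ip{\exp(\Gamma_n), \exp(Q_n)}| \ge 1/b$.

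The heart of the proof is a distributional analog of Lemma~\ref{lemma:D:t_in_S}: if $t_0 \in \mathcal{S}$, then $\limsup \|\exp(\Gamma_n)\|_{u(D_n)} > 0$. Apply Claim~\ref{claim:F:good_poly_factors_exist_for_distributional} to obtain an $(s,\delta)$-good polynomial factor $\mathcal{B}_n$ for $\Gamma_n$ of arbitrarily high rank, so that $\|\exp(\Gamma_n) - h_n\|_{U^{s+1}} \le \delta$ for $h_n := \exp(\E(\Gamma_n|\mathcal{B}_n)) = \mathfrak{a}_1 \circ \E(\Gamma_n|\mathcal{B}_n)$. This transfers the correlation with $\exp(Q_n)$ to $h_n$, and Fourier-expanding $h_n$ on $\mathcal{B}_n$ isolates a single combination $Q'_{n,1}$ of the defining polynomials of degree $t_0$ and large rank with $|\bias(Q'_{n,1}-Q_n)|$ bounded away from zero. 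The same analysis applied to the witness pair $(P_n, g_n)$ for $t_0 \in \mathcal{S}$ yields an analogous $(s,\lambda)$-good factor $\mathcal{B}'_n$ for $g_n$ and a distinguished polynomial $P'_{n,1}$ correlated with $P_n$. Now form the hybrid factor $\widetilde{\mathcal{B}}_n = \{P_n, R_{n,2},\ldots,R_{n,C_n}\}$ with fresh random polynomials of matching degrees and set $\Gamma'_n := \E(\Gamma_n | \mathcal{B}_n \to \widetilde{\mathcal{B}}_n)$. Lemma~\ref{lemma:F:hybrid_distribution_invariance} shows that $t^*_{\mathcal{L}_i,\beta_i}(\Gamma_n)$ and $t^*_{\mathcal{L}_i,\beta_i}(\Gamma'_n)$ are nearly equal for all $i$, while expanding $\ip{\exp(\Gamma'_n), \E(g_n|\mathcal{B}'_n)}$ via the two Fourier decompositions isolates the matched pair $(P_n, P'_{n,1})$ as the single dominant term and drives all other cross-terms to $0$ by rank-based bias bounds; hence $|\ip{\exp(\Gamma'_n), g_n}| \ge \eps$. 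By the correlation testability of $\mathcal{D}$ one concludes $\|\exp(\Gamma_n)\|_{u(D_n)} \ge \delta$, contradicting the hypothesis.

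It remains to show $\mathcal{S} = \{1,\ldots,t\}$, which requires a distributional analog of Lemma~\ref{lem:D:identicalProfile}: for every $t' < t$ and every finite list of homogeneous systems $(\mathcal{L}_i, \beta_i)_{i=1}^\ell$, construct $\widetilde{\Gamma}_n, \widetilde{\Gamma}'_n : \F_p^n \to \Dp$ such that the averages $t^*_{\mathcal{L}_i,\beta_i}$ agree, $\exp(\widetilde{\Gamma}_n)$ correlates with an exponential of a degree-$t$ high-rank polynomial, and each $\mathfrak{a}_c \circ \widetilde{\Gamma}'_n$ is a bounded $\F_p$-linear combination of exponentials of degree-$t'$ high-rank polynomials. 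The approach mirrors the bounded case: invoke Theorem~\ref{thm:InteriorLinearForms} on a fixed $\F_p^N$ to find a baseline function whose average tuple has a neighborhood in the image, then tensor with Dirac distributional factors encoding $Q_n$ (degree $t'$) and $P_n$ (degree $t$), and use the interior property to match all averages. The main technical obstacle is that the interior theorem is stated for $[0,1]$-valued functions and a single average $t_{\mathcal{L}_i}$, whereas here one must simultaneously match all averages $t^*_{\mathcal{L}_i,\beta_i}$ for every $\beta_i \in \F_p^{m_i}$ (not merely $\pm 1$-valued conjugations) and realize the matching functions as distributional outputs rather than real-valued ones. Once this analog is in place, the three properties combine exactly as in the proof of Theorem~\ref{thm:D:main} to force $t' \in \mathcal{S}$, which completes the proof.
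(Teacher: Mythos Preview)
Your overall approach is correct and essentially identical to the paper's: the paper also proves Lemma~\ref{lemma:F:main_leftarrow} by rerunning the proof of Lemma~\ref{lemma:D:main_leftarrow} with the distributional-function toolkit (Claim~\ref{claim:F:good_poly_factors_exist_for_distributional}, Claim~\ref{claim:F:averages_cant_distinguish_distributional}, Lemma~\ref{lemma:F:hybrid_distribution_invariance}) substituted for the $\D$-valued analogs.

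The one genuine gap in your proposal is precisely the step you flag as ``the main technical obstacle'': the distributional analog of Lemma~\ref{lem:D:identicalProfile}. You correctly identify that Theorem~\ref{thm:InteriorLinearForms} is stated for $[0,1]$-valued functions and the plain averages $t_{\mathcal{L}_i}$, whereas you now need to match all averages $t^*_{\mathcal{L}_i,\beta_i}$ with $\beta_i \in \F_p^{m_i}$ and realize the matching by a $\Dp$-valued function --- but you do not say how to do this, and without a device here the argument does not close. The paper's resolution is a short trick you are missing: for any $F:\F_p^N \to [0,1]$, define $\Gamma_F:\F_p^N \to \Dp$ by
\[
\Pr[\Gamma_F(x)=0]=F(x)+(1-F(x))\tfrac{1}{p},\qquad \Pr[\Gamma_F(x)=z]=(1-F(x))\tfrac{1}{p}\ \ (z\ne 0).
\]
Then $\mathfrak{a}_c\circ\Gamma_F\equiv F$ for every $c\in\F_p\setminus\{0\}$, so $t^*_{\mathcal{L},\beta}(\Gamma_F)=t_{\mathcal{L}}(F)$ whenever all coordinates of $\beta$ are nonzero (and any zero coordinate simply deletes the corresponding linear form from the product). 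This collapses the multi-$\beta$ matching problem to the single real-valued interior statement of Theorem~\ref{thm:InteriorLinearForms}, and the construction of Lemma~\ref{lem:D:identicalProfile} then goes through verbatim upon replacing $F,G_m,h_n$ by $\Gamma_F,\Gamma_{G_m},\Gamma_{h_n}$ and taking $g_n$ to be the deterministic distributional function $Q_n$. With this device in place, the remainder of your plan is exactly the paper's proof.
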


The proof of both lemmas is identical to the proof of Lemmas~\ref{lemma:D:main_rightarrow} and~\ref{lemma:D:main_leftarrow}, where the only difference is that one considers averages $t^*_{\mathcal{L}_i,\beta_i}(\Gamma_n)$ instead of $t_{\mathcal{L}_i,\alpha_i}(f_n)$ and apply the claims
proved in the previous subsection for function $\Gamma:\F_p^n \to \Dp$ instead of their analogs for functions $f:\F_p^n \to \D$. The only lemma whose proof needs to be slightly changed is Lemma~\ref{lem:D:identicalProfile}. We sketch below an analogous version for distributional functions. First, note that for every function $F:\F_p^N \to [0,1]$ there exists a distributional function $\Gamma_F:\F_p^N \to \Dp$ such that $\mathfrak{a}_c \circ \Gamma_F \equiv F$ for all $c \in \F_p \setminus \{0\}$: simply set $\Pr[\Gamma_F(x)=0]=F(x)+(1-F(x))\tfrac{1}{p}$ and $\Pr[\Gamma_F(x)=z]=(1-F(x))\tfrac{1}{p}$ for $z \in \F_p \setminus \{0\}$. Moreover, this implies that for every system of $m$ linear forms $\mathcal{L}$ and coefficients $\beta \in (\F_p \setminus \{0\})^m$,
$$
t^*_{\mathcal{L},\beta}(\Gamma_F) = t_{\mathcal{L}}(F).
$$
The lemma now follows, when in the proof of Lemma~\ref{lem:D:identicalProfile} one replaces $F$ with $\Gamma_F$,  $g_n$ with $Q_n$ (a deterministic distributional function) and $h_n$ with $\Gamma_{h_n}$.

\section{Non-isomorphic connected systems have nonempty interior}
\label{sec:interior}

We prove in this section the following theorem.

\begin{theorem}
\label{thm:InteriorLinearForms}
Let $\mathcal{L}_1, \ldots, \mathcal{L}_k$ be non-isomorphic connected systems of linear forms. For sufficiently large $n \in \mathbb{N}$, the set of points
\begin{equation}
\label{eq:ELS}
\left\{(t_{\mathcal{L}_1}(f),\ldots,t_{\mathcal{L}_k}(f))\ |\  f:\F_p^n \rightarrow [0,1]\right\} \subseteq \mathbb{R}^k
\end{equation}
has a non-empty interior.
\end{theorem}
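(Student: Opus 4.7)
The approach is a standard perturbation/open-mapping argument. View a function $f:\F_p^n\to\mathbb{R}$ as a vector in $\mathbb{R}^{p^n}$; then
\[
F : f\mapsto \bigl(t_{\mathcal{L}_1}(f),\ldots,t_{\mathcal{L}_k}(f)\bigr)
\]
is a polynomial map $\mathbb{R}^{p^n}\to\mathbb{R}^k$. It suffices to exhibit an interior point $f_0\in(0,1)^{p^n}$ at which the Jacobian $DF(f_0)$ has rank $k$: by the submersion theorem the image $F([0,1]^{p^n})$ then contains an open neighbourhood of $F(f_0)$, which is the desired conclusion. Direct differentiation gives
\[
Dt_{\mathcal{L}}\big|_f(h) \;=\; \sum_{j=1}^{m}\mathbb{E}_{\mathbf{X}}\!\Bigl[h(L_j(\mathbf{X}))\prod_{i\neq j}f(L_i(\mathbf{X}))\Bigr] \;=\; \langle h,\Psi_{\mathcal{L},f}\rangle,
\]
with $\Psi_{\mathcal{L},f}:\F_p^n\to\mathbb{R}$ a specific polynomial of degree $m-1$ in the values of $f$ that is determined by $\mathcal{L}$. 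The rank condition is therefore equivalent to linear independence of the functions $\Psi_{\mathcal{L}_1,f_0},\ldots,\Psi_{\mathcal{L}_k,f_0}$ in $L^2(\F_p^n)$.

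To produce such an $f_0$ I would work with $f_0 = c+\eta$ for a constant $c\in(0,1)$ and a small mean-zero perturbation $\eta$, and pass to the Fourier side. Using the identity
\[
\mathbb{E}_{\mathbf{X}}\Bigl[\prod_{j=1}^{m}\chi_{\alpha_j}(L_j(\mathbf{X}))\Bigr] \;=\; \mathbf{1}\bigl[(\alpha_1,\ldots,\alpha_m)\in V_{\mathcal{L}}\bigr],
\]
where $V_{\mathcal{L}}\subseteq(\F_p^n)^m$ is the subspace cut out by the linear relations among the $L_j$'s, each Fourier coefficient
\[
\widehat{\Psi_{\mathcal{L},f_0}}(\alpha) \;=\; \sum_{j=1}^{m}\sum_{\substack{(\beta_1,\ldots,\beta_m)\in V_{\mathcal{L}}\\ \beta_j=-\alpha}}\prod_{i\neq j}\widehat{f_0}(\beta_i)
\]
becomes an explicit polynomial in the Fourier coefficients of $f_0$ whose combinatorial skeleton is the subspace $V_{\mathcal{L}}$. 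Pairwise non-isomorphism of the $\mathcal{L}_i$ translates to inequivalence of the $V_{\mathcal{L}_i}$ under permutation of coordinates, and connectedness of each $\mathcal{L}_i$ prevents $V_{\mathcal{L}_i}$ from splitting as a direct sum of relation spaces attached to two disjoint sub-systems.

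The main obstacle, and the heart of the proof, is converting these structural properties into genuine linear independence of the $\Psi_{\mathcal{L}_i,f_0}$. I would adapt the original Erd\"os-Lov\'asz-Spencer strategy, proceeding by induction on $\max_i|\mathcal{L}_i|$. For the inductive step one takes $\eta$ to be a small generic linear combination of characters $\chi_{\gamma_1},\ldots,\chi_{\gamma_N}$ whose supporting modes $\gamma_s$ are chosen in general position with respect to all the subspaces $V_{\mathcal{L}_i}$ simultaneously; allowing $n$ to be sufficiently large is exactly what makes enough such modes available. A hypothetical relation $\sum_i c_i\Psi_{\mathcal{L}_i,f_0}\equiv 0$ then yields, after matching Fourier coefficients in $\alpha$, a system of polynomial identities in the $\widehat\eta(\gamma_s)$. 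Connectedness guarantees that the ``top-order'' monomial coefficient attached to $\mathcal{L}_i$ cannot factor through densities of smaller sub-systems, and non-isomorphism guarantees that two distinct $\mathcal{L}_i$'s give distinct such top-order coefficients; matching these coefficients forces every $c_i$ to vanish. This yields the required rank of $DF(f_0)$ and hence the non-empty interior.
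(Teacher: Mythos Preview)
Your reduction is the same as the paper's: both identify the Jacobian of $F$ at $f$ with the vector of functions $\Psi_{\mathcal{L}_i,f}=f^{\partial\mathcal{L}_i}$ and invoke the inverse/open-mapping theorem, so the whole problem is to produce an $f_0\in(0,1)^{p^n}$ for which $f_0^{\partial\mathcal{L}_1},\ldots,f_0^{\partial\mathcal{L}_k}$ are linearly independent.

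Where your sketch diverges, and where there is a genuine gap, is in the ``top-order monomial'' step. The function $\Psi_{\mathcal{L}_i,f}$ is governed not by $\mathcal{L}_i$ but by the collection of \emph{$1$-flagged} sub-systems $(\mathcal{L}_i\setminus\{L\})^{L}$ for $L\in\mathcal{L}_i$. These need not be connected even when $\mathcal{L}_i$ is, and while pieces coming from distinct $i\neq j$ are non-isomorphic as flagged objects, their underlying unflagged systems can coincide. On the Fourier side the flag appears only through the conditioning $L(\mathbf{X})=x$, and it is not at all clear that a generic choice of modes $\gamma_1,\ldots,\gamma_N$ separates two flagged pieces with the same underlying relation space $V$; your statement that ``non-isomorphism guarantees distinct top-order coefficients'' is asserting exactly the thing that needs proof. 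The proposed induction on $\max_i|\mathcal{L}_i|$ has no stated inductive hypothesis and does not obviously help with this flag issue.

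The paper handles this difficulty by a different, algebraic route. It first shows that if the $f^{\partial\mathcal{L}_i}$ were dependent for \emph{every} $f$, one would obtain a polynomial identity $\sum_i P_i(f)\,f^{\partial\mathcal{L}_i}\equiv 0$ with $\deg P_i$ bounded independently of $n$. It then introduces a product on $1$-flagged systems and constructs a single high-rank flagged system $\mathcal{L}^M$ such that the products $\mathcal{N}_j:=(\mathcal{L}_i\setminus\{L\})^L\cdot\mathcal{L}^M$ are honest \emph{connected, pairwise non-isomorphic, unflagged} systems (this is Lemma~\ref{lem:1flaggHighrank}, and it is where most of the work is). Multiplying the identity by $f^{\mathcal{L}^M}$ and averaging turns it into $\sum_j Q_j(f)\,t_{\mathcal{N}_j}(f)\equiv 0$, which a direct monomial-matching lemma (Lemma~\ref{lem:indpendenceOverLowDeg}) shows is impossible. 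In short, the missing idea in your sketch is precisely this ``unflagging'' step that converts the flagged derivative pieces into ordinary connected non-isomorphic systems before trying to separate them.
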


We divided this section into two parts. In Section~\ref{sec:flaggs} we introduce new notations and develop some preliminary tools. The proof of Theorem~\ref{thm:InteriorLinearForms} is given in Section~\ref{sec:TheProof}.

\subsection{Flagged systems of linear forms\label{sec:flaggs}}

Consider a system of linear forms $\mathcal{L}$.
Recall that $\mathcal{L}$ is connected if there does not exists nonempty $S \subsetneq \mathcal{L}$ such that $\span(S) \cap \span(\mathcal{L} \setminus S) = \{\vec{0}\}$. Suppose that there are subsets $S_1, S_2 \subsetneq \mathcal{L}$ such that
$$\span(S_i) \cap \span(\mathcal{L} \setminus S_i) = \{\vec{0}\},$$
for $i=1,2$. Then for $T=S_1 \cap S_2$, we have
$$\span(T) \cap \span(\mathcal{L} \setminus T) =  \{\vec{0}\}.$$
This in particular shows that (up to the isomorphisms) there is a unique way to partition a system of linear forms $\mathcal{L}$ into disjoint \emph{connected} systems of linear forms $\mathcal{L}_1, \ldots, \mathcal{L}_k$.  We call each one of $\mathcal{L}_1,\ldots,\mathcal{L}_k$ a \emph{connected component} of $\mathcal{L}$. Since connectivity is invariant under isomorphisms we have the following trivial observation.

\begin{observation}
Two systems of linear forms $\mathcal{L}_1$ and $\mathcal{L}_2$ are isomorphic if and only if there is a one to one isomorphic correspondence between their connected components.
\end{observation}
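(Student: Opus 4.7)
The plan is to exploit the canonical direct-sum decomposition $\span(\mathcal{L}) = \span(\mathcal{L}_1) \oplus \cdots \oplus \span(\mathcal{L}_k)$ coming from the uniqueness of the decomposition into connected components (established in the preceding paragraph of the excerpt). The backward direction is the easy one. Suppose $\mathcal{L} = \mathcal{L}_1 \dotcup \cdots \dotcup \mathcal{L}_k$ and $\mathcal{L}' = \mathcal{L}'_1 \dotcup \cdots \dotcup \mathcal{L}'_k$ are the decompositions into connected components, and suppose there is a bijection $\pi$ of indices together with, for each $i$, an invertible linear $T_i : \span(\mathcal{L}_i) \to \span(\mathcal{L}'_{\pi(i)})$ whose restriction to $\mathcal{L}_i$ is a bijection onto $\mathcal{L}'_{\pi(i)}$. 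Because the span of a system equals the direct sum of the spans of its connected components, I can define $T := \bigoplus_i T_i$ on $\span(\mathcal{L}) = \bigoplus_i \span(\mathcal{L}_i)$; this $T$ is an invertible linear map to $\span(\mathcal{L}') = \bigoplus_i \span(\mathcal{L}'_{\pi(i)})$, and on $\mathcal{L} = \dotcup_i \mathcal{L}_i$ it restricts to a bijection onto $\mathcal{L}'$. Hence $\mathcal{L}$ and $\mathcal{L}'$ are isomorphic.

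For the forward direction, suppose $T : \span(\mathcal{L}) \to \span(\mathcal{L}')$ is an invertible linear map restricting to a bijection $\mathcal{L} \to \mathcal{L}'$. The key point is that $T$, being a linear isomorphism, commutes with spans and intersections: for any subsets $A, B \subseteq \mathcal{L}$ one has $\span(T(A)) = T(\span(A))$ and $\span(T(A)) \cap \span(T(B)) = T(\span(A) \cap \span(B))$. I will apply this to show that $T$ maps each connected component of $\mathcal{L}$ onto a connected component of $\mathcal{L}'$. Fix a connected component $\mathcal{C} \subseteq \mathcal{L}$. Because $T$ is a bijection onto $\mathcal{L}'$, the complement of $T(\mathcal{C})$ in $\mathcal{L}'$ equals $T(\mathcal{L} \setminus \mathcal{C})$, so
\[
\span(T(\mathcal{C})) \cap \span(\mathcal{L}' \setminus T(\mathcal{C})) \;=\; T\bigl(\span(\mathcal{C}) \cap \span(\mathcal{L} \setminus \mathcal{C})\bigr) \;=\; \{\vec{0}\},
\]
using that $\mathcal{C}$ is a connected component of $\mathcal{L}$. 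The same linearity argument, applied inside $T(\mathcal{C})$, shows $T(\mathcal{C})$ is itself connected: any nontrivial partition $T(\mathcal{C}) = S \dotcup (T(\mathcal{C}) \setminus S)$ pulls back through $T^{-1}$ to a nontrivial partition of $\mathcal{C}$, and trivial span intersection upstairs would force trivial span intersection downstairs, contradicting connectedness of $\mathcal{C}$.

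By the uniqueness of the decomposition into connected components (the observation quoted just before the statement), $T(\mathcal{C})$ must coincide with one of the connected components of $\mathcal{L}'$, say $\mathcal{C}'_{\pi(\mathcal{C})}$. Since $T$ is a bijection of $\mathcal{L}$ onto $\mathcal{L}'$ and both sides partition into connected components, the induced map $\pi$ is a bijection between the components of $\mathcal{L}$ and those of $\mathcal{L}'$. For each component $\mathcal{C}$, the restriction $T|_{\span(\mathcal{C})}$ is an invertible linear map onto $\span(\mathcal{C}'_{\pi(\mathcal{C})})$ whose restriction to $\mathcal{C}$ is a bijection with $\mathcal{C}'_{\pi(\mathcal{C})}$; this is precisely an isomorphism of the connected components in the sense of Definition~\ref{def:linearIsomorphic}. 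The main (very mild) obstacle is simply confirming that $T$ respects the direct-sum structure induced by the connected-component decomposition on both sides; once this is formalized via the preservation of span intersections under $T$, the statement follows. \hfill\rule{2mm}{2mm}
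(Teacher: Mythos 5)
Your proof is correct and fills in the details behind the paper's one-line justification (``Since connectivity is invariant under isomorphisms we have the following trivial observation''), using essentially the same idea: a linear isomorphism carries the connected-component partition of $\mathcal{L}$ to a partition of $\mathcal{L}'$ into connected, span-separated parts, which by the uniqueness established just before the Observation must be the component decomposition of $\mathcal{L}'$. The backward direction via the direct-sum decomposition $\span(\mathcal{L}) = \bigoplus_i \span(\mathcal{L}_i)$ is likewise exactly the intended argument.
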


A \emph{$1$-flagged system of linear forms} is a  system of linear forms  $\mathcal{L}$ and a \emph{non-zero} linear form $M \in \span(\mathcal{L})$. We use the notation $\mathcal{L}^M$ to denote such a $1$-flagged system of linear forms. Here  $\mathcal{L}$ is called the \emph{underlying system of linear forms} of $\mathcal{L}^M$. We call $\mathcal{L}_0^{M_0}$ and $\mathcal{L}_1^{M_1}$  \emph{isomorphic}, if there is an invertible linear transformation $T:\span(\mathcal{L}_0) \rightarrow \span(\mathcal{L}_1)$ that maps $M_0$ to $M_1$ and its restriction to $\mathcal{L}_0$ induces an isomorphism between  $\mathcal{L}_0$ and $\mathcal{L}_1$.

Let $\mathcal{L}$ be a system of linear forms in $k$ variables. For a $1$-flagged system of linear forms $\mathcal{L}^M$, and a function $f:\mathbb{F}_p^n \rightarrow \mathbb{C}$ define the function $f^{\mathcal{L}^M}:\mathbb{F}_p^n \rightarrow \mathbb{C}$ by
\begin{equation}
\label{eq:flaggedAvg}
f^{\mathcal{L}^M} : x \mapsto \Ex_{\X \in (\F_p^n)^k} \left[ \prod_{L \in \mathcal{L}} f(L(\X)) \bigg| M(\X) = x\right].
\end{equation}
Note that we have
\begin{equation}
\label{eq:AverageFlag}
t_\mathcal{L}(f) = \Ex_{X \in \F_p^n} \left[ f^{\mathcal{L}^M}(X) \right].
\end{equation}

Let $\mathcal{L}_0^{M_0}$ and $\mathcal{L}_1^{M_1}$ be $1$-flagged systems of linear forms in $\mathbb{F}_p^{k_0}$ and $\mathbb{F}_p^{k_1}$, respectively. We want to define an operation that ``glues'' these two systems to each other by identifying $M_0$ and $M_1$. To this end, first we consider the system of linear forms $\mathcal{L}'$ defined as
$$\mathcal{L}' =\{L \oplus \vec{0} \in \mathbb{F}_p^{k_0+k_1}: L \in \mathcal{L}_0 \} \cup
\{\vec{0} \oplus L \in \mathbb{F}_p^{k_0+k_1}: L \in \mathcal{L}_1 \}.$$
Take any element $M \neq \vec{0}$ in $\mathbb{F}_p^{k_0+k_1-1}$, and any \emph{surjective} linear transformation $T:\mathbb{F}_p^{k_0+k_1} \rightarrow \mathbb{F}_p^{k_0+k_1-1}$ that maps both $\vec{0} \oplus M_0$  and $M_1 \oplus \vec{0}$ to $M$. Then
the \emph{product} of $\mathcal{L}_0^{M_0}$ and $\mathcal{L}_1^{M_1}$ which is denoted by $\mathcal{L}_0^{M_0} \cdot \mathcal{L}_1^{M_1}$ is defined as the $1$-flagged system of linear forms
$$\mathcal{L}_0^{M_0} \cdot \mathcal{L}_1^{M_1} := \left(T(\mathcal{L}')\right)^M.$$
Note that since $T$ is surjective, this definition does not depend (up to isomorphism) on the particular choices of the linear form $M$ and the map $T$.
The restrictions of $T$ to each one of the sets
$$\span\{L \oplus \vec{0} \in \mathbb{F}_p^{k_0+k_1}: L \in \mathcal{L}_0 \},$$
and
$$\span\{\vec{0} \oplus L \in \mathbb{F}_p^{k_0+k_1}: L \in \mathcal{L}_1 \}$$ is invertible, and thus $T$ induces isomorphisms between these sets and their corresponding $\mathcal{L}_i$ ($i=0$ or $1$). Therefore  we shall refer to $\{T(L \oplus \vec{0}) \in \mathbb{F}_p^{k_0+k_1-1}: L \in \mathcal{L}_0 \}^{M}$ and $\{T(\vec{0} \oplus L) \in \mathbb{F}_p^{k_0+k_1-1}: L \in \mathcal{L}_1 \}^{M}$ respectively as copies of $\mathcal{L}_0^{M_0}$ and $\mathcal{L}_1^{M_1}$ in $\mathcal{L}_0^{M_0} \cdot \mathcal{L}_1^{M_1}$. In the sequel, we will frequently identify $1$-flagged system of linear forms with their copies in their product.

%

The definition of the product of $1$-flagged systems of linear forms is motivated by the following fact: Let $\mathcal{L}_0^{M_0}$, $\mathcal{L}_1^{M_1}$, $T$, and $M$ be as above. Let  $f:\mathbb{F}_p^n \rightarrow \mathbb{C}$, $x\in \F_p^n$, and $\X$ be a random variable taking values uniformly in $(\F_p^n)^{k_0+k_1-1}$. Since $T$ is surjective, the two random variables $\prod_{L \in \mathcal{L}_0} f(T(L)(\X))$ and $\prod_{L \in \mathcal{L}_1} f(T(L)(\X))$ are conditionally independent given $M(\X) = x$. Thus it follows from (\ref{eq:flaggedAvg}) that
\begin{equation}
\label{eq:prod_L_prod_flagged}
f^{\mathcal{L}_0^{M_0} \cdot \mathcal{L}_1^{M_1}} = f^{\mathcal{L}_0^{M_0}}  f^{\mathcal{L}_1^{M_1}}.
\end{equation}

\begin{lemma}
\label{lem:connectivity}
Let $\mathcal{L}^M:=\mathcal{L}_0^{M_0} \cdot \mathcal{L}_1^{M_1}$ where $\mathcal{L}_0^{M_0}$ and $\mathcal{L}_1^{M_1}$ are $1$-flagged systems of linear forms such that both $\mathcal{L}_0 \cup \{M_0\}$ and $\mathcal{L}_1 \cup \{M_1\}$ are connected.  Then $\mathcal{L} \cup \{M\}$ is also connected.
\end{lemma}
\begin{proof}
Suppose that $\mathcal{L}_0^{M_0}$ and $\mathcal{L}_1^{M_1}$ are respectively in $k_0$ and $k_1$ variables.
Let $T:\mathbb{F}_p^{k_0+k_1} \rightarrow \mathbb{F}_p^{k_0+k_1-1}$ be as in the definition of the product of two $1$-flagged systems of linear forms given above. Consider a nonempty set $S \subsetneq \mathcal{L} \cup \{M\}$. Suppose to the contrary of the assertion that
\begin{equation}
\label{eq:nonconnected}
\span(S) \cap \span((\mathcal{L} \cup\{M\}) \setminus S) = \{\vec{0}\}.
\end{equation}
We identify $\mathcal{L}_0^{M_0}$ and $\mathcal{L}_1^{M_1}$ with their copies in $\mathcal{L}^M$. In particular, both $M_0$ and $M_1$ are identified with $M$. Since $\mathcal{L}_0 \cup \{M\}$ and $\mathcal{L}_1 \cup \{M\}$ are both connected we have $M \in \span(\mathcal{L}_0 \setminus \{M\})$ and $M \in \span(\mathcal{L}_1 \setminus \{M\})$. Then it follows from (\ref{eq:nonconnected}) that we have $S \neq
\mathcal{L}_i \cup \{M\}$, for $i=1,2$. Also by (\ref{eq:nonconnected}) we have
$$\span(S \cap (\mathcal{L}_0 \cup \{M\})) \cap \span((\mathcal{L}_0 \cup \{M\}) \setminus S) = \{\vec{0}\},$$
and
$$\span(S \cap (\mathcal{L}_1 \cup \{M\})) \cap \span((\mathcal{L}_1 \cup \{M\}) \setminus S) = \{\vec{0}\}.$$
Thus at least one of $\mathcal{L}_0 \cup \{M\}$  or $\mathcal{L}_1 \cup \{M\}$ is not connected which contradicts our assumption.
\end{proof}

For a system of linear forms $\mathcal{L}$ and an $L \in \span(\mathcal{L})$ define $\deg_\mathcal{L}(L)$  to be the number of pairs $(x,y) \in \mathcal{L} \times \mathcal{L}$ satisfying $x+y=L$.

Let $\mathcal{L}^M:=\mathcal{L}_0^{M} \cdot \mathcal{L}_1^{M}$ where $\mathcal{L}_0^{M}$ and $\mathcal{L}_1^{M}$ are $1$-flagged systems of linear forms. It follows from the definition of the product that if $x+y \in \span(\{M\})$ with $x \in \mathcal{L}_0$ and $y \in \mathcal{L}_1$, then both $x,y$ belong to $\span(\{M\})$.
Hence for every $L \in \span(\{M\})$, we have
\begin{equation}
\label{eq:deg1}
\deg_{\mathcal{L}_0}(L) + \deg_{\mathcal{L}_1}(L) \le  \deg_\mathcal{L}(L) \le \deg_{\mathcal{L}_0}(L) + \deg_{\mathcal{L}_1}(L) + |\span(\{M\}) \cap \mathcal{L}_0| + |\span(\{M\}) \cap \mathcal{L}_1|,
\end{equation}
and similarly for $L \in \mathcal{L}_i \setminus \span(\{M\})$ where $i=0,1$, we have
\begin{equation}
\label{eq:deg2}
 \deg_{\mathcal{L}_i}(L) \le \deg_\mathcal{L}(L) \le  \deg_{\mathcal{L}_i}(L)+ 2 |\span(\{M\}) \cap \mathcal{L}_{1-i}|.
\end{equation}

\begin{lemma}
\label{lem:1flaggHighrank}
Let $\mathcal{L}_1^{M_1},\ldots,\mathcal{L}_k^{M_k}$ be non-isomorphic $1$-flagged systems of linear forms such that $\mathcal{L}_i \cup \{M_i\}$ are connected for all $i \in [k]$. For every $N>0$, there exist a $1$-flagged system of linear forms $\mathcal{L}^M$ such that $\rank(\span(\mathcal{L}))>N$, and the underlying systems of linear forms of $\mathcal{L}^M \cdot \mathcal{L}_i^{M_i}$ for $i \in [k]$ are connected and non-isomorphic.
\end{lemma}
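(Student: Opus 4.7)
The plan is to construct a ``dominating'' 1-flagged system $\mathcal{L}^M$ whose structural footprint at $M$ swamps anything coming from the $\mathcal{L}_i$'s, and to let this dominance rule out non-trivial isomorphisms among the products. Connectedness of each $\mathcal{L}\cdot\mathcal{L}_i^{M_i}$ will follow for free from Lemma~\ref{lem:connectivity} as soon as I arrange $\mathcal{L}\cup\{M\}$ to be connected, and the rank condition is trivial to satisfy in isolation, so the central task is forcing pairwise non-isomorphism. Setting $B:=\max_i|\mathcal{L}_i|$, I would choose $r\gg B^2+N$, pick generic linearly independent forms $u_1,\ldots,u_{r/2},M$, and put
$$
\mathcal{L}:=\{u_j,\ M-u_j:1\le j\le r/2\}.
$$
Then $\rank(\span(\mathcal{L}))=r/2+1>N$; the set $\mathcal{L}\cup\{M\}$ is connected (any non-trivial partition either separates a complete pair $(u_j,M-u_j)$, forcing $u_j$ into both spans, or isolates $\{M\}$, which is then forced into both spans by a complete pair on the other side); and by construction $\deg_\mathcal{L}(M)=r$ while $\deg_\mathcal{L}(L)=0$ for every other $L\in\span(\mathcal{L})$ under generic choice of the $u_j$.

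With this choice, in each product $\mathcal{L}\cdot\mathcal{L}_i^{M_i}$ the element $M$ has degree at least $r$ by~(\ref{eq:deg1}), while for any other element $L$, an explicit counting combining $\deg_\mathcal{L}(L)=0$ with $\deg_{\mathcal{L}_i}(L)\le B^2$ and at most $O(B)$ cross-pairs $(x,y)\in\mathcal{L}\times\mathcal{L}_i$ with $x+y=L$ yields $\deg(L)=O(B^2)$. Since $r\gg B^2$, the element $M$ is the unique maximum-degree element of the product, so any isomorphism of underlying systems $\phi\colon\mathcal{L}\cdot\mathcal{L}_i^{M_i}\to\mathcal{L}\cdot\mathcal{L}_j^{M_j}$ satisfies $\phi(M)=M$. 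Analysing the ``pairing graph'' whose edges are pairs summing to $M$ --- a perfect matching of size $r/2$ on the $\mathcal{L}$-copy, plus at most $O(B^2)$ extra vertices and edges contributed by $\mathcal{L}_i$ and cross-pairs --- the dominant matching structure on each side forces $\phi$ to map the $\mathcal{L}$-copy onto the $\mathcal{L}$-copy setwise. Hence $\phi$ restricts to a linear bijection $\span(\mathcal{L}_i)\to\span(\mathcal{L}_j)$ carrying $\mathcal{L}_i$ to $\mathcal{L}_j$ and $M_i=M\mapsto M=M_j$, i.e.\ a 1-flagged isomorphism $\mathcal{L}_i^{M_i}\cong\mathcal{L}_j^{M_j}$, which by hypothesis forces $i=j$.

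The main obstacle is the refinement step: recovering the $\mathcal{L}$-copy inside each product purely from combinatorial invariants in the presence of the cross-interactions allowed by~(\ref{eq:deg1}) and~(\ref{eq:deg2}). To make the bookkeeping clean, I would position the $u_j$'s generically in the ambient space so that $\span(\mathcal{L})\cap\span(\mathcal{L}_i)=\span(M)$ for every $i$ (possible since we have full freedom over the ambient dimension); this confines every cross-pair sum to $\span(M)$ and pins down the $\mathcal{L}$-copy as the unique subset of $r$ forms whose pairwise sums carry a matching-like structure of size $r/2$ at $M$. The remaining details are a careful but routine counting exercise using the parameter $r\gg B^2$.
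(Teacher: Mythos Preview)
Your construction differs from the paper's in a way that creates a real gap. The paper builds $\mathcal{L}$ as a large iterated product of a dense system $\mathcal{M}$, arranged so that in each product $\mathcal{N}_i$ the degree function $\deg_{\mathcal{N}_i}$ separates the elements into \emph{three} well-spaced tiers: $W_i$ has degree $\ge 10p^d$, every element of the $\mathcal{L}$-copy has degree in the range $[2^{d-1},5p^{d-1}]$, and every element of $\mathcal{L}_i$ has degree strictly less than $2^{d-1}$. This three-tier separation immediately forces any isomorphism to send $W_i\mapsto W_j$ and $\mathcal{L}_i\mapsto\mathcal{L}_j$ setwise, with no further combinatorics.

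Your $\mathcal{L}=\{u_j,M-u_j\}$ gives only a \emph{two}-tier picture: $M$ has degree $r$, but every element of the $\mathcal{L}$-copy has degree $0$ (and your claim ``$\deg_{\mathcal{L}}(L)=0$ for every other $L$'' is not quite right for $L\in\span(\mathcal{L})\setminus\mathcal{L}$, though this is minor). The problem is that elements of $\mathcal{L}_i$ can also have degree $0$ and can also come in pairs summing to $M$; in the product they are then indistinguishable from $\mathcal{L}$-elements by either the degree function or the pairing graph at $M$. Your assertion that the matching structure ``pins down the $\mathcal{L}$-copy as the unique subset of $r$ forms whose pairwise sums carry a matching of size $r/2$ at $M$'' is false as stated: if $\{z,M-z\}\subseteq\mathcal{L}_i$, then $(\mathcal{L}\setminus\{u_1,M-u_1\})\cup\{z,M-z\}$ is another such subset with the same cardinality and the same span dimension. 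So after fixing $\phi(M)=M$ you have no invariant that forces $\phi(\mathcal{L})=\mathcal{L}$, and the reduction to a flagged isomorphism $\mathcal{L}_i^{M_i}\cong\mathcal{L}_j^{M_j}$ does not go through. This is not a ``routine counting exercise''; it is exactly the point where your construction lacks the structure the paper's construction supplies. The fix is to give the $\mathcal{L}$-elements a degree strictly larger than anything $\mathcal{L}_i$ can produce, which is what the paper does.
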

\begin{proof}
Let $d>p+N$ be larger than the size of $\mathcal{L}_i$, for every $i \in [k]$. Denote $\e_1=(1,0,\ldots,0) \in \mathbb{F}_p^{d}$, and consider the system of linear forms
$$\mathcal{M}:=\left(\{0\} \times \mathbb{F}_p^{d-1}\right) \cup \left(\{1\} \times \{0,1\}^{d-1}\right)\setminus \{\vec{0},\e_1\} \subseteq \mathbb{F}_p^{d}.$$
We claim that  $\mathcal{M}$ is connected. Indeed $\mathbb{F}_p^{d-1} \setminus \{\vec{0}\} \equiv
\left(\{0\} \times \mathbb{F}_p^{d-1} \right) \setminus \{\vec{0}\} \subsetneq \mathcal{M}$
is trivially connected, and hence if $S \subseteq \mathcal{M}$ is such that $\span(S) \cap \span(\mathcal{M} \setminus S) = \{\vec{0}\},$
then without loss of generality we can assume that  $$\left(\{0\} \times \mathbb{F}_p^{d-1} \right) \setminus \{\vec{0}\} \subseteq S.$$
One can easily  verify that $S$ cannot be equal to $\left(\{0\} \times \mathbb{F}_p^{d-1} \right) \setminus \{\vec{0}\}$. Hence there exists at least one element $L \in S \cap \left(\{1\} \times \{0,1\}^{d-1} \right)$. Then $\mathcal{M} \subseteq \span(\{L\} \cup (\{0\} \times \mathbb{F}_p^{d-1}) ) \subseteq \span(S)$ which together with the assumption $\span(S) \cap \span(\mathcal{M} \setminus S) = \{\vec{0}\}$ shows that $S=\mathcal{M}$. Hence $\mathcal{M}$ is connected.

Note that $2^{d-1} \le \deg_\mathcal{M}(L) \le 4 p^{d-1}$ for every $L \in \mathcal{M}$, and $\deg_{\mathcal{M}}(\e_1) = 2(2^{d-1}-1)$. Furthermore for every $\lambda \in \mathbb{F}_p \setminus \{0,1\}$, we have $\deg_\mathcal{M}(\lambda \e_1) =0$. Also we have $\span(\{e_1\}) \cap \mathcal{M}=\emptyset$.
Set
$$\widetilde{\mathcal{L}}^M := \underbrace{\mathcal{M}^{\e_1} \cdot \ldots \cdot \mathcal{M}^{\e_1}}_{\mbox{$10 p^d$ times}}.$$
and
$$\mathcal{L}^M := (\widetilde{\mathcal{L}} \cup \{M\})^M.$$
By (\ref{eq:deg1}) and (\ref{eq:deg2}), and the above properties of $\mathcal{M}$, we have
$2^{d-1} \le \deg_\mathcal{L}(L) \le 4p^{d-1}$, for every $L \in \mathcal{L} \setminus \{M\}$. Moreover $\deg_\mathcal{L}(M) \ge 10p^{d}$, and
$\deg_\mathcal{L}(L)=0$ for every $L \in \span(\{M\}) \setminus \{\vec{0},M\}$. It also follows from $\span(\{\e_1\}) \cap \mathcal{M}=\emptyset$ that
$\span(\{M\}) \cap \mathcal{L}=\{M\}$.

For every $i \le [k]$, set  $\mathcal{N}_i^{W_i} := \mathcal{L}^M \cdot \mathcal{L}_i^{M_i}$. Then by (\ref{eq:deg1}) and (\ref{eq:deg2}), we have
\begin{itemize}
\item[(i)] $\deg_{\mathcal{N}_i}(W_i) \ge 10 p^{d}$;
\item[(ii)] $2^{d-1} \le \deg_{\mathcal{N}_i}(L) \le 4p^{d-1}+ 2p \le 5p^{d-1}$ for every $L \in \mathcal{N}_i \setminus (\mathcal{L}_i \cup \{W_i\})$;
\item[(iii)] $\deg_{\mathcal{N}_i}(L) \le |\mathcal{L}_i| <2^{d-1}$ for every $L \in \mathcal{L}_i \setminus \{W_i\}$.
\end{itemize}

Since $\mathcal{M}$ is connected and $\e_1 \in \span(\mathcal{M})$, we have that $\mathcal{M} \cup \{\e_1\}$ is also connected. Then Lemma~\ref{lem:connectivity} shows that $\mathcal{L}$ is connected. Now since $\mathcal{L}_i$ are connected, Lemma~\ref{lem:connectivity} implies that $\mathcal{N}_i=\mathcal{N}_i \cup \{W_i\}$ are connected. It remains to show that they are non-isomorphic. But if $\mathcal{N}_i$ is isomorphic to $\mathcal{N}_j$ for some $i,j\in [k]$, then there is a bijection between $\mathcal{N}_i$ and  $\mathcal{N}_j$ that can be extended to an invertible $T:\span(\mathcal{N}_i) \rightarrow \span(\mathcal{N}_j)$. Since such a function, maps $\vec{0}$ to $\vec{0}$, and preserves the degrees, by (i), (ii), and (iii) above, we have $T(W_i)=W_j$ and $\{T(L) : L \in \mathcal{L}_i\} = \mathcal{L}_j$.
Thus the restriction of $T$ to  $\mathcal{L}_i^{M_i}$ is an isomorphism between $\mathcal{L}_i^{M_i}$ and $\mathcal{L}_j^{M_j}$ contradicting our assumption that $\mathcal{L}_i^{M_i}$ and $\mathcal{L}_j^{M_j}$ are non-isomorphic.
\end{proof}

\subsection{Finishing the proof\label{sec:TheProof}.}

We view averages $t_{\mathcal{L}}(f)$ as polynomials in the variables $\{f(x):x \in \F_p^n\}$,
$$
t_{\mathcal{L}}(f) = \frac{1}{p^{nk}} \sum_{\x \in (\F_p^n)^k} \prod_{i=1}^m f(L_i(x)).
$$
We start by proving a technical lemma.

\begin{lemma}
\label{lem:indpendenceOverLowDeg}
Let $\mathcal{L}_1,\ldots,\mathcal{L}_k$ be non-isomorphic connected systems of linear forms. Let $P_1,\ldots,P_k$ be functions mapping every $f:\mathbb{F}_p^n \rightarrow \mathbb{C}$ to $\mathbb{C}$ in the following way. Every $P_i$ is a polynomial of degree at most $d$ in variables $\{f(x): x \in \mathbb{F}_p^n\}$. If $n>d+\max_{i \in [k]} \rank (\span(\mathcal{L}_i))$, and for every $i \in [k]$, $\rank (\span(\mathcal{L}_i)) >d$,  then
$$P_1(f) t_{\mathcal{L}_1}(f)+\ldots+ P_k(f) t_{\mathcal{L}_k}(f) \not \equiv 0,$$
unless $P_i \equiv 0$ for all $i \in [k]$.
\end{lemma}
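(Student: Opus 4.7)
The plan is to prove the contrapositive via a monomial-comparison argument: I will exhibit a specific monomial in the variables $\{f(x) : x \in \F_p^n\}$ whose coefficient in $F(f) := \sum_{i=1}^k P_i(f) t_{\mathcal{L}_i}(f)$ is forced to be nonzero whenever some $P_{i^*} \not\equiv 0$. Expand $P_i(f) = \sum_T c_{i,T}\, f^T$ and $t_{\mathcal{L}_i}(f) = \sum_S \gamma_i(S)\, f^S$, where $T$ ranges over multisets of $\F_p^n$ of size at most $d$, $S$ over multisets of size $m_i := |\mathcal{L}_i|$, and $f^T := \prod_{x} f(x)^{T(x)}$. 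Here $\gamma_i(S) \neq 0$ exactly when $S$ can be realized as the multiset image $\{\psi'(L)\}_{L\in\mathcal{L}_i}$ for some linear map $\psi' : \span(\mathcal{L}_i) \to \F_p^n$. Assume for contradiction some $P_i$ is nonzero and pick $(i^*, T^*)$ with $c_{i^*,T^*} \neq 0$ and $|T^*|$ \emph{maximal} over all pairs $(i,T)$ with $c_{i,T} \neq 0$.

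Let $r := \rank(\span(\mathcal{L}_{i^*}))$. Since $n > d + r \geq |T^*| + r$, one can choose an injective linear $\psi : \span(\mathcal{L}_{i^*}) \to \F_p^n$ whose image $V^* := \psi(\span(\mathcal{L}_{i^*}))$ satisfies $V^* \cap \span(\supp(T^*) \cup \{0\}) = \{0\}$, and such that $S^* := \psi(\mathcal{L}_{i^*})$ is a set of $m_{i^*}$ distinct nonzero points disjoint from $\supp(T^*)$. Set $R := T^* \cup S^*$ (multiset union, disjoint supports). The coefficient of $f^R$ in $F(f)$ equals
$$
\sum_{j=1}^k\ \sum_{\substack{T \cup S = R \\ |T| \le d,\ |S| = m_j}} c_{j,T}\, \gamma_j(S),
$$
and the claim is that only $(j,T,S) = (i^*, T^*, S^*)$ contributes.

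Consider any triple with $c_{j,T}\gamma_j(S) \neq 0$, and write $S = \psi'(\mathcal{L}_j)$ for a linear $\psi' : \span(\mathcal{L}_j) \to \F_p^n$. Since $S^*$-entries of $R$ have multiplicity $1$, any two distinct forms of $\mathcal{L}_j$ with $\psi'$-images in $V^*$ must map to distinct points; and since $V^* \cap \span(\supp(T^*)) = \{0\}$, each form $L \in \mathcal{L}_j$ has $\psi'(L)$ either in $V^*$ or in $\span(\supp(T^*)) \setminus V^*$. Partition $\mathcal{L}_j = \mathcal{L}_j^V \sqcup \mathcal{L}_j^T$ accordingly. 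The $\mathcal{L}_j^T$-forms map into $\supp(T^*)$ with total multiplicity at most $|T^*| \le d$, so $|\mathcal{L}_j^T| \le d$; the alternative $\mathcal{L}_j^V = \emptyset$ would give $m_j = |S| \le |T^*| \le d$, contradicting $m_j \ge \rank(\span(\mathcal{L}_j)) > d$. After verifying (with a slightly more careful genericity choice for $\psi$, using the rank hypotheses) that $\psi'$ is injective on $\span(\mathcal{L}_j)$, one obtains $\span(\mathcal{L}_j^V) \cap \span(\mathcal{L}_j^T) = \{0\}$; connectedness of $\mathcal{L}_j$ then forces $\mathcal{L}_j^T = \emptyset$, so $S \subseteq S^*$ (as a set).

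If $S \subsetneq S^*$, then $T = T^* \cup (S^* \setminus S)$ has $|T| > |T^*|$, which by the maximality of $|T^*|$ across the whole family $\{P_i\}$ forces $c_{j,T} = 0$, a contradiction. Hence $S = S^*$ and $T = T^*$; the composition $\psi^{-1} \circ \psi'$ is then an invertible linear isomorphism $\span(\mathcal{L}_j) \to \span(\mathcal{L}_{i^*})$ carrying $\mathcal{L}_j$ bijectively onto $\mathcal{L}_{i^*}$, so $\mathcal{L}_j \cong \mathcal{L}_{i^*}$, and the non-isomorphism hypothesis forces $j = i^*$. Thus the coefficient of $f^R$ reduces to $c_{i^*,T^*}\,\gamma_{i^*}(S^*) \neq 0$, contradicting $F \equiv 0$. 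The most delicate point is the elimination of the ``sub-affine-copy'' case $S \subsetneq S^*$, resolved precisely by the global maximality of $|T^*|$ across the entire family $\{P_i\}$ rather than merely within $P_{i^*}$.
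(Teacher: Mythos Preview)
Your overall strategy---exhibit a monomial $f^R$ whose coefficient in $\sum_i P_i(f)\,t_{\mathcal{L}_i}(f)$ is nonzero---is exactly the paper's approach, and most of your argument (the partition $\mathcal{L}_j = \mathcal{L}_j^V \sqcup \mathcal{L}_j^T$, the multiplicity bound $|\mathcal{L}_j^T|\le |T^*|$, the use of connectedness, and the elimination of $S\subsetneq S^*$ via maximality) mirrors the paper's reasoning closely.

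There is, however, a genuine gap at the step you flag yourself: the injectivity of $\psi'$ on $\span(\mathcal{L}_j)$. You write ``after verifying (with a slightly more careful genericity choice for $\psi$, using the rank hypotheses)'', but no amount of genericity in $\psi$ forces this. The map $\psi$ only determines $V^*$ and $S^*$; once these are fixed, the existence of a \emph{non-injective} linear $\psi':\span(\mathcal{L}_j)\to\span(\supp(R))$ realizing some admissible $S$ depends only on the linear-algebraic structure of $\mathcal{L}_j$, $\mathcal{L}_{i^*}$, and $\supp(T^*)$, not on the embedding $\psi$. Concretely, if $\mathcal{L}_j^V,\mathcal{L}_j^T$ are both nonempty then connectedness gives $0\ne w\in\span(\mathcal{L}_j^V)\cap\span(\mathcal{L}_j^T)$, and one computes $\psi'(w)\in V^*\cap\span(\supp(T^*))=\{0\}$, so $\psi'$ is \emph{necessarily} non-injective in that scenario; you cannot rule the scenario out without an independent argument, and your degree-based extremality $|T^*|=\max$ gives you no handle on it (for instance when $T^*=\{a,a,\ldots,a\}$ has one-dimensional support).

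The paper resolves this by a different choice of extremal pair: it selects $i_0$ and a monomial of $P_{i_0}$ maximizing not the degree $|T|$ but the \emph{rank} quantity $\rank(\span(\supp T))+\rank(\span(\mathcal{L}_{i_0}))$. With this choice, for any contributing $(j,T,S)$ the chain
\[
\dim\span(\supp R)\ \le\ \dim\span(\supp T)+\dim\span(\supp S)\ \le\ r_j+\rank(\span(\mathcal{L}_j))\ \le\ r_{i_0}+\rank(\span(\mathcal{L}_{i_0}))\ =\ \dim\span(\supp R)
\]
is forced to be a chain of equalities, which simultaneously gives $\dim\span(\supp S)=\rank(\span(\mathcal{L}_j))$ (hence $\psi'$ injective) and $\span(\supp T)\cap\span(\supp S)=\{0\}$. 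After that, connectedness and $\deg P_j<\rank(\span(\mathcal{L}_j))$ finish the argument exactly as you intended. The paper also first reduces to the case where no $P_i$ is divisible by $f(\vec 0)$ (by restricting to $f(\vec 0)=0$), which cleanly prevents $0$ from appearing in $\supp(R)$.

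In short: your proof is on the right track and shares the paper's skeleton, but the extremality principle you chose (maximal $|T^*|$) is too weak to deliver the injectivity of $\psi'$; switching to the rank-based extremality $r_i+\rank(\span(\mathcal{L}_i))$ closes the gap.
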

\begin{proof}
We claim a stronger statement that if at least one of $P_i$ is not divisible by the variable $f(\vec{0})$, then
$$\left. P_1(f) t_{\mathcal{L}_1}(f)+\ldots+ P_k(f) t_{\mathcal{L}_k}(f) \right|_{f(\vec{0})=0}\not \equiv 0.$$
Trivially it suffices to prove this statement for the case where for every $i \in [k]$, no monomial of $P_i$ is divisible by $f(\vec{0})$.
Assume to the contrary that there exist polynomials $P_i$ of degree at most $d$ in variables $\{f(x): x \in \mathbb{F}_p^n\}$ with monomials which are not divisible by $f(\vec{0})$ such that
$$\left. P_1(f) t_{\mathcal{L}_1}(f)+\ldots+ P_k(f) t_{\mathcal{L}_k}(f) \right|_{f(\vec{0})=0} \equiv 0.$$

Without loss of generality assume that for some positive integer $l$, every $\mathcal{L}_i$ is a system of linear forms in $\mathbb{F}_p^l$.
Define the rank of a monomial $\prod_{x \in \mathbb{F}_p^n}f(x)^{\alpha_x}$ to be the rank of $\span(\{x : \alpha_x \neq 0\})$. Let $r_i$ denote the largest rank of a monomial with a non-zero coefficient in $P_i$.

Set $i_0 := \mathrm{argmax}_{i \in [k]} \left(r_i+\rank (\span(\mathcal{L}_i))\right)$. Let non-zero $x_1,\ldots,x_a \in \mathbb{F}_p^n$ be so that $$\rank(\span(\{x_1,\ldots,x_a\}))=r_{i_0},$$ and $\prod_{i=1}^a f(x_i)^{\alpha_i}$ where $\alpha_i>0$ appears with a non-zero coefficient in $P_{i_0}$.
Since  $n \ge d+\rank(\span(\mathcal{L}_{i_0}))$, there exists $\x \in (\F_p^n)^k$ such that $\{L(\x): L \in \mathcal{L}_{i_0}\}$ are all distinct and $\span(\{L(\x): L \in \mathcal{L}_{i_0}\}) \cap \span(\{x_1,\ldots,x_a\}) = \{\vec{0}\}$.
Note that the connectivity of $\mathcal{L}_{i_0}$, and the assumption that $\deg(P_{i_0})<\rank (\span(\mathcal{L}_{i_0}))$ implies that the monomial
%
$$\left(\prod_{i=1}^a x_i^{\alpha_i}\right) \prod_{L \in \mathcal{L}_{i_0}} f(L(\x)),$$
appears with a non-zero coefficient in $P_{i_0}(f) t_{\mathcal{L}_{i_0}}(f)$ (i.e. there is no cancelation).

Suppose that this monomial appears with a non-zero coefficient also for some other $1\le j \le k$ in $P_j(f)\prod_{L \in \mathcal{L}_{j}} f(L(\x'))$, where $\x' \in (\F_p^n)^k$. Then the maximality of $r_{i_0}+\rank (\span(\mathcal{L}_{i_0}))$, connectivity of $\mathcal{L}_j$, and the assumption that $\deg(P_j)<\rank(\span(\mathcal{L}_j))$ shows that $\{L(\x'): L \in \mathcal{L}_j\}=\{L(\x): L \in \mathcal{L}_{i_0}\}$ as multisets. By the assumption that $\{L(\x): L \in \mathcal{L}_{i_0}\}$ are all distinct we get that $\{L(\x'): L \in \mathcal{L}_{j}\}$ are also all distinct. It follows that $\mathcal{L}_j$ is isomorphic to $\mathcal{L}_{i_0}$, which is a contradiction.
\end{proof}

Consider a system of linear forms $\mathcal{L}$ and a function $f:\mathbb{F}_p^n \rightarrow \mathbb{C}$. Define the function $f^{\partial \mathcal{L}}: \mathbb{F}_p^n \rightarrow \mathbb{C}$, as
$$f^{\partial \mathcal{L}}(x) := \sum_{L \in \mathcal{L}} f^{(\mathcal{L}\setminus \{L\})^L}(x).$$
The following easy lemma which follows from linearity of expectation explains the motivation for this notation.
\begin{lemma}
\label{lem:derivative}
For $f,g:\mathbb{F}_p^n \rightarrow \mathbb{C}$, and every system of linear forms $\mathcal{L}$, we have
$$\frac{d}{dt} t_\mathcal{L}(f+tg)|_{t=0} = \Ex \left[ g(X) f^{\partial \mathcal{L}}(X) \right],$$
where $X$ is a random variable taking values in $\mathbb{F}_p^n$ uniformly at random.
\end{lemma}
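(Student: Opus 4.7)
The plan is to proceed by direct computation, essentially applying the Leibniz product rule to the definition of $t_{\mathcal{L}}$ and then rewriting each resulting term in terms of a flagged average. First I would write out
$$
t_{\mathcal{L}}(f+tg) = \Ex_{\X \in (\F_p^n)^k}\left[\prod_{L \in \mathcal{L}} \bigl(f(L(\X)) + t\,g(L(\X))\bigr)\right],
$$
and differentiate under the expectation at $t=0$. Since the expectation is a finite sum, differentiation and expectation commute without issue, and the product rule yields
$$
\frac{d}{dt} t_{\mathcal{L}}(f+tg)\bigg|_{t=0} = \sum_{L \in \mathcal{L}} \Ex_{\X}\!\left[g(L(\X)) \prod_{L' \in \mathcal{L}\setminus\{L\}} f(L'(\X))\right].
$$

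Next, for each fixed $L \in \mathcal{L}$ I would condition on the value of $L(\X)$. Because $L \ne \vec{0}$ and $\X$ is uniform on $(\F_p^n)^k$, the random variable $L(\X)$ is uniform on $\F_p^n$. Thus by the tower property,
$$
\Ex_{\X}\!\left[g(L(\X)) \prod_{L' \ne L} f(L'(\X))\right] = \Ex_{X \in \F_p^n}\!\left[g(X)\,\Ex_{\X}\!\left[\prod_{L' \ne L} f(L'(\X)) \;\Big|\; L(\X)=X\right]\right].
$$
Recognizing the inner conditional expectation as precisely the quantity $f^{(\mathcal{L}\setminus\{L\})^L}(X)$ defined in~\eqref{eq:flaggedAvg}, this becomes $\Ex_X[g(X) f^{(\mathcal{L}\setminus\{L\})^L}(X)]$.

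Finally I would swap the outer sum over $L$ with the expectation over $X$, obtaining
$$
\frac{d}{dt} t_{\mathcal{L}}(f+tg)\bigg|_{t=0} = \Ex_X\!\left[g(X) \sum_{L \in \mathcal{L}} f^{(\mathcal{L}\setminus\{L\})^L}(X)\right] = \Ex_X[g(X)\, f^{\partial\mathcal{L}}(X)],
$$
where the last equality is just the definition of $f^{\partial\mathcal{L}}$. There is no substantive obstacle here: the argument is a one-line product rule computation, and the only thing to check carefully is that conditioning on $L(\X)=x$ reproduces the flagged average, which is immediate from the definitions once one notes that $L$ being a nonzero linear form makes $L(\X)$ uniform over $\F_p^n$.
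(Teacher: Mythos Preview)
Your proof is correct and is exactly the computation the paper has in mind; the paper itself does not spell out a proof at all, merely remarking that the lemma ``follows from linearity of expectation.'' Your only added care---noting that $L \ne \vec{0}$ so that $L(\X)$ is uniform---is already implicit in the paper's definition of a $1$-flagged system, which requires the distinguished form $M$ to be nonzero.
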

\begin{proof}
We have 
\begin{eqnarray*}
\frac{d}{dt} t_\mathcal{L}(f+tg)|_{t=0} &=& \frac{d}{dt} \left. \Ex\left[ \prod_{L \in \mathcal{L}} (f+tg)(L(\X)) \right] \right|_{t=0} = \Ex\left[ \sum_{L \in \mathcal{L}} g(L(\X)) \prod_{L \in \mathcal{L} \setminus \{L\}} f(L(\X))\right] \\
&=& \Ex \left[ g(X) f^{\partial \mathcal{L}}(X) \right].
\end{eqnarray*}
\end{proof}

Consider connected non-isomorphic systems of linear forms $\mathcal{L}_1,\ldots,\mathcal{L}_k$. We claim that in order to prove Theorem~\ref{thm:InteriorLinearForms} it suffices to shows that there exists $f:\mathbb{F}_p^n \rightarrow (0,1)$ such that $f^{\partial\mathcal{L}_1},\ldots, f^{\partial\mathcal{L}_k}$ are linearly independent over $\mathbb{R}$.

\begin{claim}
Let $f:\F_p^n \to (0,1)$ be such that $f^{\partial\mathcal{L}_1},\ldots, f^{\partial\mathcal{L}_k}$ are linearly independent over $\mathbb{R}$. Then there exists $\eps>0$ such that
$$
\left\{\left(t_{\mathcal{L}_1}(f),\ldots,t_{\mathcal{L}_k}(f)\right)+z: z \in \mathbb{R}^k, \|z\|_{\infty} \le \eps\right\}
\subseteq
\left\{\left(t_{\mathcal{L}_1}(g),\ldots,t_{\mathcal{L}_k}(g)\right): g:\F_p^n \to (0,1)\right\}.
$$
\end{claim}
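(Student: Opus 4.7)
The plan is to treat the map
\[
\Phi: \mathbb{R}^{p^n} \longrightarrow \mathbb{R}^k, \qquad \Phi(g) = \bigl(t_{\mathcal{L}_1}(g),\ldots,t_{\mathcal{L}_k}(g)\bigr),
\]
as a smooth (in fact polynomial) map of finite-dimensional real vector spaces, and to deduce the claim from the submersion / open mapping theorem. Since $f$ takes values in the open interval $(0,1)$, it lies in the interior of the box $[0,1]^{p^n}$, so any sufficiently small perturbation of $f$ in $\mathbb{R}^{p^n}$ still corresponds to a function $g:\F_p^n\to (0,1)$. Thus it suffices to show that $\Phi$ is open at $f$, and for this it is enough to verify that the differential $D\Phi(f):\mathbb{R}^{p^n}\to \mathbb{R}^k$ is surjective.

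First, I would compute the differential. For any $h:\F_p^n \to \mathbb{R}$, Lemma~\ref{lem:derivative} applied coordinatewise gives
\[
D\Phi(f)(h) = \Bigl(\Ex[h(X)\, f^{\partial \mathcal{L}_1}(X)],\,\ldots,\,\Ex[h(X)\, f^{\partial \mathcal{L}_k}(X)]\Bigr).
\]
Equivalently, if we equip $\mathbb{R}^{p^n}$ with the normalized inner product $\langle u,v\rangle = \Ex_{X}[u(X)v(X)]$, then $D\Phi(f)$ is the linear map whose transpose sends $(c_1,\ldots,c_k)\in\mathbb{R}^k$ to $\sum_{i=1}^k c_i\, f^{\partial\mathcal{L}_i} \in \mathbb{R}^{p^n}$.

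By hypothesis the functions $f^{\partial\mathcal{L}_1},\ldots,f^{\partial\mathcal{L}_k}$ are linearly independent over $\mathbb{R}$, which is exactly the statement that this transpose is injective. Hence $D\Phi(f)$ itself is surjective. By the standard submersion theorem (or the implicit function theorem applied after picking $k$ coordinates on which $D\Phi(f)$ restricts to an isomorphism), the image $\Phi\bigl(U\bigr)$ of any neighborhood $U$ of $f$ in $\mathbb{R}^{p^n}$ contains a neighborhood of $\Phi(f)$ in $\mathbb{R}^k$. Choosing $U$ small enough to lie inside $(0,1)^{p^n}$, we obtain the desired $\eps>0$ with
\[
\bigl\{\Phi(f)+z:\ z\in\mathbb{R}^k,\ \|z\|_\infty\le\eps\bigr\}\;\subseteq\;\bigl\{\Phi(g):\ g:\F_p^n\to(0,1)\bigr\}.
\]

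There is essentially no serious obstacle here; the only point requiring care is the identification of the differential with pairing against the functions $f^{\partial\mathcal{L}_i}$, which is exactly the content of Lemma~\ref{lem:derivative}. The real work is done earlier, in producing a function $f:\F_p^n\to(0,1)$ whose ``derivatives'' $f^{\partial\mathcal{L}_i}$ are linearly independent (this is where Lemma~\ref{lem:indpendenceOverLowDeg} and Lemma~\ref{lem:1flaggHighrank} will be used); the present claim is merely the translation of that algebraic non-degeneracy into a topological statement about the image.
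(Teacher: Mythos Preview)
Your proposal is correct and takes essentially the same approach as the paper: both compute the differential of the map $g\mapsto (t_{\mathcal{L}_i}(g))_i$ via Lemma~\ref{lem:derivative}, use linear independence of the $f^{\partial\mathcal{L}_i}$ to get surjectivity (the paper does this concretely by choosing a dual basis $g_1,\ldots,g_k$ and reducing to a $k$-dimensional inverse function theorem, you phrase it as the transpose being injective and invoke the submersion theorem), and then shrink the neighborhood to stay inside $(0,1)^{p^n}$.
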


\begin{proof}
Let $\e_1,\ldots,\e_k \in \mathbb{R}^k$ denote the unit vectors. Since $f^{\partial\mathcal{L}_1},\ldots, f^{\partial\mathcal{L}_k}$ are linearly independent over $\mathbb{R}$, for every $\e_i$ there exists $g_i:\F_p^n \to \mathbb{R}$ such that
$$
\Ex_{X \in \F_p^n} \left[ g_i(X) f^{\partial \mathcal{L}_j}(X) \right]=\delta_{i,j},
$$
where $\delta_{i,j}=1_{[i=j]}$ is the Kronecker delta function. For $z \in \mathbb{R}^k$ define $g_z(x) = \sum_{i=1}^k z_i g_i(x)$, and consider the map $T:\mathbb{R}^k \to \mathbb{R}^k$ defined as
$$
T(z_1,\ldots,z_k) = (t_{\mathcal{L}_1}(f+g_z),\ldots,t_{\mathcal{L}_k}(f+g_z)).
$$
The Jacobian of $T$ is the identity matrix, and hence invertible. By the inverse function theorem, for every $\eta>0$, $\{T(z): \|z\|_{\infty}<\eta\}$ contains a neighborhood of $T(0)=(t_{\mathcal{L}_1}(f),\ldots,t_{\mathcal{L}_k}(f))$. We will choose $\eta>0$ small enough such that $\|f + g_z\|_{\infty} < 1$ for all $\|z\|_{\infty}<\eta$.
\end{proof}


Suppose to the contrary that for every  $f:\mathbb{F}_p^n \rightarrow (0,1)$, $f^{\partial\mathcal{L}_1},\ldots, f^{\partial\mathcal{L}_k}$ are linearly \emph{dependent} over $\mathbb{R}$. Note that for every $1 \le i \le k$, and for every $x_0 \in \mathbb{F}_p^n$, $f^{\partial\mathcal{L}_i}(x_0)$ is a polynomial of degree $|\mathcal{L}_i|-1$ in the variables $\{f(x) : x \in \mathbb{F}_p^n\}$. The linear dependency of $f^{\partial\mathcal{L}_1},\ldots, f^{\partial\mathcal{L}_k}$ shows that for every $f:\mathbb{F}_p^n \rightarrow (0,1)$, the $k \times k$ matrix whose $ij$-th entry is $\Ex\left[ f^{\partial\mathcal{L}_i}(X) f^{\partial\mathcal{L}_j}(X) \right]$ is singular which in turn implies that the determinant of this matrix as a polynomial in $\{f(x) : x \in \mathbb{F}_p^n\}$ is the zero polynomial. So the functions $f^{\partial\mathcal{L}_i}$ considered as vectors with polynomial entries are dependent over the field of fractions of polynomials in variables $\{f(x) : x \in \mathbb{F}_p^n\}$. Furthermore since the degree of the $ij$-th entry of this matrix is at most $|\mathcal{L}_i|+|\mathcal{L}_j|-2$ which does not depend on $n$, we conclude that there exist polynomials $P_1,\ldots,P_k$ in the variables $\{f(x) : x \in \mathbb{F}_p^n\}$, and of degree at most some integer $C:=C(|\mathcal{L}_1|,\ldots,|\mathcal{L}_k|)$ (which does not depend on $n$) such that
\begin{equation}
\label{eq:dependPols}
P_1(f) f^{\partial\mathcal{L}_1}+\ldots +P_k(f) f^{\partial\mathcal{L}_k} \equiv 0.
\end{equation}
Let $\mathcal{M}_1^{L_{1}},\ldots,\mathcal{M}_l^{L_{l}}$ be some representatives for all the isomorphism classes of $1$-flagged systems of linear forms $\{(\mathcal{L}_i \setminus \{L\})^L: i \in [k], L \in \mathcal{L}_i\}$. Let $\{\alpha_{i,j} \in \mathbb{Z}_+ : i \in [k], j \in [l]\}$ be such that for every $i\in [k]$, we have
$$f^{\partial\mathcal{L}_i} = \sum_{j=1}^l \alpha_{i,j}  f^{\mathcal{M}_j^{L_{j}}}.$$
By Lemma~\ref{lem:1flaggHighrank} it is possible to find a $1$-flagged system of linear form $\mathcal{L}^M$ of arbitrarily large rank such that for $\mathcal{N}_j^{W_j} := \mathcal{M}_{j}^{L_j} \cdot \mathcal{L}^M$ (where $j \in [l]$), the systems of linear forms $\mathcal{N}_j$ are non-isomorphic and connected. Note that by~\eqref{eq:prod_L_prod_flagged} we have $f^{\mathcal{N}_j^{W_{j}}} = f^{\mathcal{M}_j^{L_{j}}} f^{\mathcal{L}^{M}}$, and so we have
$$
\sum_{i=1}^k \sum_{j \in [l]} \alpha_{i,j} P_i(f) f^{\mathcal{N}_j^{W_{j}}} \equiv 0,
$$
which by~\eqref{eq:AverageFlag} implies that
\begin{equation}
\sum_{j \in [l]} \left(\sum_{i=1}^k \alpha_{i,j} P_i(f)\right) t_{\mathcal{N}_j}(f) \equiv 0.
\end{equation}
By Lemma~\ref{lem:indpendenceOverLowDeg} this shows that $\sum_{i=1}^k \alpha_{i,j} P_i(f) \equiv 0$ for every $j \in [l]$.
On the other hand since the $\mathcal{L}_i$ are non-isomorphic, $(\mathcal{L}_i \setminus \{L\})^L \not \equiv (\mathcal{L}_j \setminus \{M\})^M $  for every two distinct $i,j \in [l]$ and every $L \in \mathcal{L}_i$ and $M \in \mathcal{L}_j$. Hence for every $j \in [l]$, there is exactly one $i \in [k]$ such that $\alpha_{i,j} \neq 0$. Also trivially for every $i \in [k]$, there exists at least one $j \in [l]$ with $\alpha_{i,j} \neq 0$. It follows that $P_i \equiv 0$ for every $i \in [k]$ which is a contradiction.

\section{Concluding remarks}
\label{sec:conclusion}

In this paper we study  affine invariant properties which are testable. We show that essentially every such property can be tested by an appropriate Gowers uniformity norm. One technical limitation of our techniques is that they hold only if the field size is not too small (i.e. if the Cauchy-Schwarz complexity is smaller than the field size). The main reason for this obstacle is that the main technical tools developed by the authors in~\cite{ComplexityPaper} are limited to large fields. A recent result of Tao and Ziegler~\cite{TaoZiegler2010} extends the inverse Gowers theorem to small characteristics, and it is plausible that combining the techniques would allow to extend our results to all fields. We leave this for future work.

Even if all the required technical tools were established, it would still not answer the following problem: is it possible to test, using a constant number of queries, whether a function $f:\F_p^n \to \F_p$ is correlated to a polynomial of degree $d$, where $d>p$? We know that the Gowers norm test fails, as it actually tests distance to a larger set of functions (non-classical polynomials). The simplest case which is unknown is the following:

\begin{prob}
Let $f:\F_2^n \to \F_2$. Does there exist a test which queries $f$ on a constant number of positions, and which can distinguish whether $f$ has noticeable or negligible correlation with cubic polynomials?
\end{prob}

\ignore{In fact, even the following simpler problem is unknown:} 
However, recently in~\cite{hatami-lovett-distance} the authors proved a very general theorem that in particular gives a positive answer to the following simpler problem that was stated as an open problem in the previous version of this paper.  

\begin{prob}
Let $\eps>0$ and $f:\F_2^n \to \F_2$. Does there exist a test which queries $f$ on $q(\eps)$ positions, and which can distinguish whether $f$ has correlation at least $\eps$, or at most $\delta(\eps)$, with cubic polynomials?
\end{prob}

\bibliographystyle{plain}

\end{document}